\documentclass[format=acmsmall, review=false]{acmart}
\usepackage{acm-ec-20}
\usepackage{booktabs} % For formal tables
\usepackage[ruled]{algorithm2e} % For algorithms
\usepackage{amsmath,amssymb,amsfonts, amsthm}
\usepackage{theoremref}
\usepackage{times,graphicx}
\usepackage{mathtools}
\usepackage{tikz}
\usepackage{todonotes}
\usepackage{thmtools}
\usepackage{appendix}
\usetikzlibrary{arrows.meta,automata,quotes}
\usepackage{enumerate}
\theoremstyle{acmplain}
\newtheorem{thm}{Theorem}%[section]
\newtheorem{cor}[thm]{Corollary}
\newtheorem{lemma}[thm]{Lemma}
\newtheorem{prop}[thm]{Proposition}
\newtheorem{conj}[thm]{Conjecture}
\newtheorem{defi}[thm]{Definition}

\newtheorem{rem}[thm]{Remark}
\newtheorem{obs}[thm]{Observation}
\newtheorem{exa}[thm]{Example}

\newcommand{\N}{\mathbb{N}}
\newcommand{\Z}{\mathbb{Z}}
\newcommand{\B}{\mathcal{B}}
\renewcommand{\L}{\mathcal{L}}
\newcommand{\R}{\mathcal{R}}
\renewcommand{\S}{\mathcal{S}}
\newcount\Comments  
\Comments=1 % 0 suppresses notes to selves in text
\newcommand{\kibitz}[2]{\ifnum\Comments=1{\color{#1}{#2}}\fi}

\newcommand{\tb}{\ensuremath{\mathrm{TB}}}
\renewcommand{\U}{\ensuremath{\mathcal{U}}}
\newcommand{\M}{\ensuremath{\mathcal{M}}}
\newcommand{\odd}{\ensuremath{\mathrm{odd}}}
\newcommand{\even}{\ensuremath{\mathrm{even}}}
\newcommand{\oddheap}{\ensuremath{\mathrm{odd}}}
\newcommand{\evenheap}{\ensuremath{\mathrm{even}}}
\newcommand{\one}{Left}
\newcommand{\two}{Right}
\newcommand{\out}{o}
\newcommand \pos[2] {\left(#1,#2\right)}
\newcommand \posm[2] {\left(#1,\hat{#2}\,\right)}

\renewcommand \hat \widehat
\renewcommand{\a}{\ensuremath{\mathcal{A}}}

\newcommand{\ny}[2]{\nu_{#2}(#1)}
\newcommand{\nym}[2]{\widehat\nu_{#2}(#1)}
\newcommand{\ovec}[1]{o(#1)}
\renewcommand{\o}[2]{o_{#2}(#1)}

\DeclarePairedDelimiter\ceil{\lceil}{\rceil}
\DeclarePairedDelimiter\floor{\lfloor}{\rfloor}

\SetAlFnt{\small}
\SetAlCapFnt{\small}
\SetAlCapNameFnt{\small}
\SetAlCapHSkip{0pt}
\IncMargin{-\parindent}

% Choose a citation style by commenting/uncommenting the appropriate line:
%\setcitestyle{authoryear}
\setcitestyle{acmnumeric}

% Title. Note the optional short title for running heads. In the interest of anonymization, please do not include any acknowledgements.
\title[Discrete Richman-bidding Scoring Games]{Discrete Richman-bidding Scoring Games}

% Anonymized submission.
\author{Urban Larsson}
\affiliation{%
  \institution{National University of Singapore}
}

\author{Neel Patel}
\affiliation{%
  \institution{National University of Singapore}
}

\author{Ravi Kant Rai}
\affiliation{%
  \institution{Indian Institute of Technology Bombay}
}

% Abstract. Note that this must come before \maketitle.
\begin{abstract}
We study zero-sum (combinatorial) games, within the framework of so-called Richman auctions (Lazarus et al. 1996) namely, we modify the alternating play scoring ruleset Cumulative Subtraction (CS) (Cohensius et al. 2019), to a discrete bidding scheme (similar to Develin and Payne  2010). Players bid to move and the player with the highest bid wins the move, and hands over the winning bidding amount to the other player. The new game is dubbed Bidding Cumulative Subtraction (BCS). In so-called unitary games, players remove exactly one item out of a single heap of identical items, until the heap is empty, and their actions contribute to a common score, which increases or decreases by one unit depending on whether the maximizing player won the turn or not. We show that there is a unique bidding equilibrium for a much larger class of games, that generalize standard scoring play in the literature.  
We prove that for all sufficiently large heap sizes, the equilibrium outcomes of unitary BCS are eventually periodic, with period 2, and we show that the periodicity appears at the latest for heaps of sizes quadratic in the total budget.

%YAIR: the abstract is really obscure - what's a Richman version? What is unitary play? What is a heap of items? 

%The absolute value of the outcome of any configuration monotonically non-decreasing with the size of the heap.
\end{abstract}

\begin{document}

\maketitle

% Paper body`

\section{Introduction}\label{sec:intro}
%%%%%%%%%%%%%%%%%%%%%%%
%YAIR: We should really lead with some example/intuition here
%%%%%%%%%%%%%%%%%%%%%%%
Suppose that you are involved in a 2-player game, where the goal is to capture more objects than your opponent, but at each stage of play, you may only remove one item. You may remove an object if and only if you win a specially designed auction. Say that there are two objects in the heap (and therefore to win the game you need to collect both, and if you collect one of them the game is drawn), and assume that you and your opponent share \$5; you have \$1 and he has \$4. In addition to your single dollar, you have a tie-breaking marker, which works in your favor. If a player wins the bid strictly, by bidding say $d$ dollars, then they get an item but must pay the bidding amount to the opponent. If the bid ends up equal, then the player with the tie breaking marker wins the round, gets an item, but has to hand over \$$d$ and the marker to the other player. Clearly, you cannot win both bids with a single dollar. The question is whether you will be able to tie this game, and win one of the bids, or if the opponent can win both, and the answer is easy. Since the opponent has to bid \$2 to win the bid, then you will win the next round. (If you had only the marker but no dollar you would lose both bids, and if you had only a dollar but no marker you would lose both bids.) From this gentle introduction, we already have an intuition that the marker has a non-trivial impact on such games, and it could be worth sometimes a dollar. The tie-breaking is necessary to play the game, and to understand such games, it seems an unavoidable issue to resolve the worth of this `marker' in {\em every} game situation. We will formalize this `every' setting and present some further major issues to solve. But first, we provide some more background. 

%Two players, Left and Right, play a game of removing objects alternately from a heap of 7. On their turn, they are allowed to remove either 2 or 3 objects, until one of the players cannot follow the rules of the game. The player who takes most objects is the winner. In the alternating play, the first player should remove 2 instead of the greedy choice of taking 3. This game belongs to a class of games which is called Cumulative Subtraction (CS), \cite{CLMW2020}. CS is a variant of the famous game of {\sc nim} \cite{}\ur{Should we cite Bouton 1902?} \ra{I don't think so, we will miss the importance of the main game}, or more specifically to the class of so-called Subtraction Games \cite{BCG2001}, with similar type of moves, but a different winning condition. 

Richman auctions \cite{LLPU1996} are designed for any standard combinatorial 2-player game \cite{BCG2001}, to resolve who is to play next. 
Instead of alternating play, for each stage of game, the 2 players, called  Left and Right,\footnote{These names are adapted from standard literature on combinatorial games, Left is `she' and Right is `he'.} resolve this crucial moment by a type of auction where the winning player must pay the losing player their bid amount. 

 Richman bidding can be adapted to any standard combinatorial game, and thus offers a way to extend classical Combinatorial Game Theory (CGT) \cite{S2013,BCG2001,C1976} to a more economic style of gameplay. %Of course, by following up on the idea in the example in the first paragraph, if a player can figure out a safe way to secure win of the two first bids, then this player should remove 3 objects twice, but if they have sufficient part of the budget to win all bidding rounds, then they should remove 2 twice and 3 once. There is a much richer variety of strategies in BCS than in CS, and much of the complexity appears already in the bidding part of the game. 
 Moreover, Richman auctions are a perfect fit for combinatorial games, as they offer an unlimited number of bidding rounds. This is useful, since many popular board games can be played on arbitrarily large game boards, and various kinds of  Nim-type removal games (a.k.a. heap games, take-away games, etc)  \cite{BCG2001} can contain arbitrarily many objects in a starting position. 
 
%  In the classical Richman setting \cite{LLPU1996} \cite{LLPSU1999}, the auction is continuous, say with total budget \$1, and the players split that budget to, say $p$ and $q$, with $p+q=1$. Optimal bids have been resolved for the game {\sc tic-tac-toe} \cite{DP2010}, but very little is known for the game of {\sc chess} \cite{LW2019}. 
 
%  The main theorem of Richman games considers one additional setting, where the turn function is a Bernoulli trial: for each stage of play, an $r$-biased coin is tossed to decide whose turn it is. A player wins a given coin-tossing game if and only if the player with budget $\$(1-r)$ wins the corresponding Richman game. A rule to resolve ties is required, but since ties occur with probability 0, in the continuous setting, the main result holds for any standard tie-breaking rule, with open or sealed (simultaneous) bids. In games with zugzwangs (games where no player wants to move first), it is customary to let the winner of a bid decide who is to play next. In games with a nonnegative incentive (such as the games we will study), it suffices to have the players bidding for the opportunity to take the next turn.
    
 In this paper, we study discrete bidding in a manner similar to \cite{DP2010}. We adapt the setting to a class of games known as {\em Cumulative Subtraction} (CS) \cite{CLMW2020}. 
 CS is played on a finite heap of pebbles, and players take turns removing pebbles from the heap, but are only allowed to remove a number of pebbles from a fixed set of values, called the {\em subtraction set} $S\subset \N$. For example, if $S = \{3,5,9\}$, then players are allowed to remove either $3$, $5$ or $9$ pebbles at each turn, as long as the heap size stays non-negative. 
 In the standard zero-sum variation, the final score is the difference between the number of pebbles accumulated by each player by the end of the game. 
 Player Left is the maximizer, whereas player Right is the minimizer; the final score is the total number of pebbles collected by Left during play minus Right's total number. In order not to obscure the main ideas of our bidding setting, we will simplify the `subtraction-part' of the games, and focus on the simplest possible subtraction set, namely when $S=\{1\}$ (although several results hold also for more general subtraction sets). 
 
 Let $\N=\{1,2,\ldots\}$, and let $\N_0=\N\cup\{0\}$. Discrete bidding here means that there is a given total budget of $\tb\in \N_0$. One of the players has a budget $p\in \{0,\ldots ,\tb\}$, and the other player has the budget $q = \tb - p$. 
 At each turn, players submit a (closed) bid. The player with the higher bid gets to play, and pays their bid to the player with the lower bid. 
 Thus, if player Left wins by bidding $\ell$, she pays this bid to Right, and makes her desired move. The new budget partition becomes $(p-\ell, q+\ell)$. Ties are resolved using a {\em tie-breaking marker}: one of the players has the marker, and this player wins the turn in case of equal bids. They makes their desired move, pays the other player the bidding amount, and passes them the marker.
 
By adapting the bidding mechanism to CS, we get a new game, called {\em Bidding Cumulative Subtraction} (BCS). 

%\citeauthor{DP2010} \cite{DP2010} use a different tie resolution method: the player with the marker may either keep the marker and let the other player move in the game, or give away the marker and make the desired move in the game. 

The discrete setting gives new challenges and problems to Richman games. 
For example, the marker, which is a necessary and convenient tool to resolve a tie, can sometimes be worth a dollar (but not more).\footnote{In the continuous setting, any positive bidding amount is worth more than the marker.} Perhaps this seems like an unfortunate side effect of the discrete setting, but it turns out that it is quite interesting to resolve the accompanying asymmetry of the game. In many situations, the player with the marker has a slight advantage, and it appears non-trivial to find out exactly what that means, even in the case of the simplest subtraction set, $S = \{1\}$, which will also be called the \emph{unitary} setting.  

%\subsection{Our Contribution}\label{sec:contrib}

%We show that a very general class of rulesets have unique bidding equilibria if some basic monotonicity properties are satisfied. Next, we demonstrate that BCS games in the unitary setting satisfy these properties, and, by proving further monotonicity properties,  we provide a `convergence' result modulo 2, namely: the equilibrium outcome is eventually periodic with period 2. Moreover, aided by a certain `bidding automaton', we show that the convergence to this periodicity is quadratic in the size of the total budget. In addition, we provide a conjecture on the corresponding closed formula expression for the equilibrium outcomes in the limit.   
 
\subsection{Related Work}\label{sec:related}

In the classical Richman setting \cite{LLPU1996} \cite{LLPSU1999}, the auction is continuous, say with total budget \$1, and the players split that budget to, say $p$ and $q$, with $p+q=1$. Optimal bids have been resolved for the game {\sc tic-tac-toe} \cite{DP2010}, but very little is known for the game of {\sc chess} \cite{LW2019}. 

The main theorem of Richman games considers one additional setting, where the turn function is a Bernoulli trial: for each stage of play, an $r$-biased coin is tossed to decide whose turn it is. A player wins a given coin tossing game if and only if the player with budget $\$(1-r)$ wins the corresponding Richman game. A rule to resolve ties is required, but since ties occur with probability 0, in the continuous setting, the main result holds for any standard tie breaking rule, with open or sealed (simultaneous) bids. In games with zugzwangs (games where no player wants to move first), it is customary to let the winner of a bid decide who is to play next. In games with a nonnegative incentive (such as the games we will study), it suffices to have the players  bidding for the opportunity to take the next turn.

\citeauthor{DP2010} \cite{DP2010} study win-loss games, with discrete Richman bidding, and they use a modified tie resolution method: the player with the marker may either keep the marker and let the other player pay the bid and move in the game, or give away the marker together with the bid and make their desired move in the game. 
This is not just a subtle difference in approach; namely, we emphasize that our simpler approach has the benefit of generalizing alternating play. Namely, in our setting, alternating play corresponds to $\tb=0$ (with both players having a respective individual budget of $0$ as a result). Another obvious difference is that we study games with numeric results, in contrast to the traditional CGT win-loss situation. This leads to an interesting question about uniqueness of bidding equilibrium. 

Combinatorial games are often viewed in the light of so-called normal play, where it is always good to be able to move (an idea, which recently leads to an `absolute combinatorial game theory' \cite{LNS,LNS2017}). Here, we may have a situation where the player who wins the bid is the normal play loser. Thus, for our most general setting, we are motivated to include a certain penalty function, $\tau$, with input Left and Right terminal positions. If a player wins a bid that has no move, there is some consequence of this final auction; indeed in Guaranteed Scoring games \cite{LNS2018,LNNS2016} a `terminal penalty' is invoked for a player who cannot move (and this penalty corresponds to the maximum score that the other player can still achieve by playing out all remaining move options). We include this short discussion here, since we want to emphasize that the proposed move convention in this work generalizes game settings in the literature. 

All literature on combinatorial games assumes a unique equilibrium, which defines a game value under \emph{optimal play}; see for example \cite{M1953, H1959, E1996, LNS2018, LNNS2016, J2014, S2013}.

\subsection{Our Contribution}\label{sec:contrib}

%We show that a very general class of rulesets have unique bidding equilibria if some basic and intuitive monotonicity properties are satisfied. We prove three main Next, we demonstrate that BCS games in the unitary setting satisfy these properties, and, by proving further monotonicity properties,  we provide a `convergence' result modulo 2, namely: the equilibrium outcome is eventually periodic with period 2. Moreover, aided by a certain `bidding automaton', we show that the convergence to this periodicity is quadratic in the size of the total budget. In addition, we provide a conjecture on the corresponding  closed formula expression for the equilibrium outcomes in the limit.   

In this paper, we first analyze a general setting of discrete Richman bidding adapted for any standard combinatorial scoring game. We show that if the ruleset satisfies some basic and intuitive monotonicity properties, then each such game exhibits a unique equilibrium. Then in the rest of the paper, we focus on a simple class of Bidding Cumulative Games (defined in Section~\ref{sec:results}) named {\em unitary games}. Our theoretical contribution can be summarized in three main results.

In Section~\ref{sec:unigames}, we show that every unitary game has a unique equilibrium described recursively by maximin functions of the game positions  (Theorem~\ref{thm:unitary}). We prove this result by showing that the ruleset of unitary games satisfies sufficient properties for uniqueness. Then we prove further monotonicity properties of unitary games, which helps us to prove a convergence result. In Section~\ref{sec:bounded}, we prove that the equilibrium outcomes of unitary games converge for heap sizes of the same parity, i.e. the sequence is eventually periodic with period 2  (Theorem~\ref{thm:const}). Moreover, aided by a certain `bidding automaton', in Section~\ref{sec:automaton}, we show that the convergence to this periodicity is quadratic in the size of the total budget (Theorem~\ref{thm:convbou}). In addition, we provide a conjecture on the corresponding closed formula expression for the equilibrium outcomes in the limit (Conjecture~\ref{conj:conv}).        

\section{Equilibrium properties}\label{sec:propU}

Let us first elaborate on the fundamental properties of discrete Richman bidding, for any combinatorial scoring  (zero-sum) game. %A maximin function $\nu$ satisfies BCS if the properties in Definition~\eqref{def:BCS} are satisfied. 
For a given total budget $\tb\in\N_0$, let $\tb=p+q$, where $p$ is Left's part of the budget, and unless otherwise stated, Left has the tie-breaking marker. 
Moreover, let $\varnothing\ne\B\subseteq \{0,\ldots \tb\}$, be the set of possible Richman bids.\footnote{We let the bidding set be symmetric to simplify notation.} 
A player may not be able to bid on their turn; for example, if $0\not\in \B$ and the player's budget has been exhausted. In this case, the player who is able to place a bid moves and transfers a valid bid amount $b\in\B$ to the other player. In case there is a position such that no player can bid, the ruleset is invalid. We will assume valid rulesets, i.e. {\em some} player is able to place a bid at every turn.

In this section, we consider acyclic combinatorial rulesets of the form $\Gamma=(X, \L, \R, \B)$, where $X$ is a set of positions (nodes), and $\L,\R: X\rightarrow 2^X$ are the move functions for players Left and Right respectively, with discrete Richman $\B$-bidding, and with given weight functions $w_L, w_R:X\times X\rightarrow \mathbb R$ on the set of move edges, for each player. 
Here $X$ denotes a possibly infinite set of `starting' positions, in a usual sense of CGT: any game state may be considered as a starting position, but each play sequence is finite. A Left move is of the form $x\in X\rightarrow y\in \L(x)$, and similarly for Right. The final score (or utility) of a terminating play sequence $\sigma$ is 

$$u(\sigma) = \tau(t)+\sum w_L(e_L)-w_R(e_R),$$ where \one\ played the moves $e_L$ and \two\ played the moves $e_R$. 
The weighted move edges are chosen at each stage of play by the winner of the Richman bidding. The terminal \emph{penalty scores} are given by a function $\tau: T_L\cup T_R\rightarrow \R$, where $T_L,T_R\subset X$ is the set of terminal positions for Left and Right respectively.  
The game ends when the player who wins the Richman bidding cannot move.

Throughout the paper, given a total budget $\tb$, for a given position $x\in X$, a \emph{Richman-position} is denoted by $(x, \hat p)$, where Left has $p$ dollars and the tie-break marker, or by $(x, p)$, where Left has $p$ dollars but Right has the marker. A note on terminology: we will abbreviate  `Richman-position' to \emph{position}, since the generalized CGT `move-flag', the current budget partition (sometimes denoted simply by $\hat p$ or $p$), is mandatory information to play.\footnote{This is consistent with standard CGT terminology, which often omits the notion of who is to move, since the theory requires analysis of all players as starting players.} 

A player may have move options, but not being able to access them even if the other player does not have any move options. For example, if Left has no remaining budget, and Right has the marker, then if Left is the only player with move options, the game will end, because Right will win that bid.  
Combinatorial games are often viewed in the light of so-called normal play, where it is always good to be able to move. %(an idea, which recently lead to an `absolute combinatorial game theory' \cite{LNS,LNS2017}). 
Here, we may have a situation where the player who wins the bid is the normal play loser. 

 This is the motivation for the penalty function, $\tau$; if a player wins a bid that has no move, there is some consequence of this final auction.%; in Guaranteed Scoring games \cite{LNS2018,LNNS2016} a `terminal penalty' is invoked for a player who cannot move (and this penalty corresponds to the maximum score that the other player can still achieve by playing out all remaining move options). We include this short discussion here, since we want to emphasize that the proposed move convention in this work generalizes game settings in the literature. 

%One can think of these games as bicolored acyclic digraphs. If there are only red outgoing edges from one node, then this node is terminal with respect to the blue player. 
For symmetric games, where $\L=\R$ and $w_L=w_R$, this discussion is obsolete, and we may set $\tau(t)=0$, for all terminal positions $t$. 
In the symmetric case, there is no bidding when one of the players runs out of options, because then both players run out of options, and there is no possibility that the game may continue. 

Our games generalize standard combinatorial scoring games; namely, by taking $\tb=0$, we obtain the standard alternating play mechanics. 
 As mentioned, combinatorial games have unique equilibria,  which defines game values under \emph{optimal play}, and depending on who starts the game. %; see for example \cite{M1953, H1959, E1996, LNS2018, LNNS2016, J2014, S2013}. 
We will demonstrate that we do have uniqueness of equilibrium in our generalization, under a natural set of axioms. In those cases, we may refer to unique \emph{game values}. 

%Our next result sheds some light on when our setting generalizes standard combinatorial scoring games. We have unique game values under optimal play, provided that three so-called uniqueness properties hold. 
Let us first define the relevant maximin functions. 
The maximin function induces a bid-action pair that offers the greatest value of a set of minima: Left declares bids and notes Right's bid-action responses in each case. Then Left chooses the bid that maximizes the value, and given that Left can choose action if she wins the bid. Minimax is the reverse situation.

\begin{defi}[Maximin  Functions]\thlabel{def:maximin}
Consider a  a total budget $\tb\in \N_0$ and a ruleset $(X, \L, \R,\B)$. The maximin functions $\widehat\nu,\nu: X\rightarrow \mathbb R^{\tb+1}$ are defined on positions $(x, \hat p)$ and $(x, p)$, respectively.%, where $x\in X$ and $0\le p\le \tb$.

 %$S\subset \N$ and any heap size $x\in\N_0$ and a budget partition $(p,q)$, where $p+q=\tb$. 
For all $p\in\{0,\ldots , \tb\}$, for terminal positions $t_L, t_R\in X$,  $\nym{t_L}{p}=\tau(t_L)$, $\ny{t_R}{p}=  \tau(t_R)$, and recursively, for non-terminal positions $x\in X$, 
$$\nym{x}{p} = \max_{\ell, y\in \mathcal L(x)}\; \min_{ r,  z\in \mathcal R(x)} \{\nym{y}{p-\ell}|_{\ell>r}+w_L(x,y), \nu_{p-\ell}(y)|_{\ell = r}+w_L(x,y), \nym{z}{p+r}|_{\ell<r}+w_R(x,z)\}$$
and
$$\ny{x}{p} = \max_{\ell, y\in \L(x)}\;\min_{ r, z\in \mathcal \R(x)}  \{\ny{y}{p-\ell}|_{\ell>r}+w_L(x,y), \nym{z}{p+r}|_{\ell = r}+w_R(x,z), \ny{z}{p+r}|_{\ell<r}+w_R(x,z)\},$$
where $\ell\in \B\cap \{0,\ldots, p\}$ and $r\in \B\cap \{0,\ldots, q\}$.
%Here $m=1$ if Left has the tie-break marker and otherwise $m=0$. 
%We write $\nu(x):=\nu(x,1)$.
\end{defi}
The minimax functions are  defined analogously. 
%\begin{obs}

The classical `CGT outcome' in alternating play from position $x$ corresponds to $\tb=0$, and it is the ordered pair $o(x)=(\nym{x}{0},\ny{x}{0})$, the optimal play score when Left and Right starts, respectively. 
Note that, since the bidding is trivial, for this case, there is a unique equilibrium. 
In general, if, for all positions $x$, each entry has a unique equilibrium, the generalized CGT-outcome, for $\tb\in\N$, is the vector $\out (x)= (\nym{x}{\tb},\ldots, \nym{x}{0},\ny{x}{0},\ldots ,\ny{x}{\tb})$. 
Whenever applicable, we call this vector $ \out (x)$ the (optimal play) {\em outcome} of $x$. For symmetric games, i.e. when the move sets $\L = \R$ are the same, it suffices to store half this vector, and so we drop the marker notation, and write $\ovec{x}=(\nym{x}{\tb},\ldots, \nym{x}{0})$ (i.e. symmetric outcomes assume that Left has the marker).
%\end{obs}

In Definition~\ref{def:U}, we provide intuitive properties of the ruleset $\Gamma$ and  
 we prove that the following properties suffice to establish uniqueness of equilibria in Theorem~\ref{thm:unique}. %Without knowing more than the definition of the maximin function, the following properties should be intuitively obvious if the function were to represent `optimal play'. Perhaps (B) has some potential controversy, but one should recall the idea of the marker, that it should only be a means of resolving a tie situation. Therefore the marker should never be worth more than a dollar.\footnote{By the discrete scheme one should not be too surprised that sometimes a player is indifferent by having the marker or an extra dollar. On the other hand in a situation with continuous bids, then the marker is always worth strictly less than any $\epsilon$-difference in budgets.}  
%Since the bid of the losing player is redundant, we write $\ell(\ell)$, $R(r)$, or $T(\ell)$ in case of Left winning without a tie, by bidding $\ell\le p$, Right winning by bidding $r\le q$, or Left winning by the tie $\ell\le p$, respectively. 
%Recall that the function $\nu(x)$ assumes that Left (the maximizer) has the marker.
\begin{defi}[Uniqueness Properties]\thlabel{def:U}
 A ruleset $\Gamma\in \mathcal U$ if the pair of maximin functions $(\hat\nu$, $\nu)$ satisfies, for each position $x$, for all $p\in\{0,\ldots , \tb\}$:
\begin{itemize}
\item[(A)] `Budget Monotonicity.' $\nym{x}{p}\ge \nym{x}{\pi}$ and $\ny{x}{p}\ge \ny{x}{\pi}$, if $p>\pi$. 
\item[(B)] `Marker Monotonicity.' $\nym{x}{p}\ge \ny{x}{p}$.
%\item[(A')] `Do not give away a tie break advantage.'  $\o{x}{p}\ge \nu_p(x,0)$. %\ur{This one is not used. Does it follow by the other three axioms?}
%\item[(B)] `Do not raise an already winning bid.'  $\o{x}{p}|_{L(\ell)}\le \o{x}{p}|_{L(\ell-1)}$ and $\o{x}{p}|_{R(r)}\ge \o{x}{p}|_{R(r-1)}$. 
%\item[(C)] `Do not overbid a winning tie'. If $x$ is non-terminal, then $\o{x}{p}|_{L(\ell)}\le \o{x}{p}|_{T(\ell-1)}$. 
\item[(C)] `Marker Worth.' $\nym{x}{p}\le \ny{x}{p+1}$.
\end{itemize}
\end{defi}
If a ruleset $\Gamma\in \U$, then, by (A) the players weakly prefer to win by smaller bids, by (B) Right weakly prefers to tie a Left winning bid, and by (C), given a Right bid, Left weakly prefers to tie before winning strictly (the marker is worth at most a dollar). These properties are fairly straight forward and intuitive in our setting. We now show in Theorem~\ref{thm:unique} that if ruleset $\Gamma$ satisfies such properties, then it exhibits unique equilibria.   

%
%We have a number of consequences of these two axioms.
%\begin{lem}\thlabel{obs:unique} 
%If $\nu \in \mathcal U$, then 
%\begin{itemize}
%
%\item `Do not give away a tie break advantage.'  $\o{x}{p}\ge \nu_p(x,0)$.
%\item `Do not raise an already winning bid.'  If $x$ is non-terminal, then $\o{x}{p}|_{L(\ell)}\le \o{x}{p}|_{L(\ell-1)}$ and $\o{x}{p}|_{R(r)}\ge \o{x}{p}|_{R(r-1)}$.
%\item `Do not overbid a winning tie'. If $x$ is non-terminal, then $\o{x}{p}|_{L(\ell)}\le \o{x}{p}|_{T(\ell-1)}$. 
%\item `Seek smallest winning tie.' If $x$ is non-terminal, then $\o{x}{p}|_{T(\ell)}\le \o{x}{p}|_{T(\ell-1)}$. 
%\end{itemize}
%\end{lem}
%\begin{proof}
%If $\nu_p(x,1) < \nu_p(x,0)$, then 
%\end{proof}
%
%If (C) does not hold, then 
%
%A consequence of (B) is: If Left knows Right's bid $r$, then if Left wants to win without a tie, she bids $\ell = r+1\le p$. Item (C) is due to the idea that the marker should not be worth more than the smallest possible difference of bids, which is \$1. It is never better to give away a dollar than a tie-break advantage. We label the second item by (A') since it is not used directly in the proof of Lemma~\ref{lem:unique}, but note that item (C) does not make any sense without (A'). That it, (C) implies (A'). Therefore, as a consequence of axiom (C), we get the following observation. 
%
%
\begin{thm}[Uniqueness]\thlabel{thm:unique}
If a ruleset $\Gamma \in\mathcal U$, then for all positions and for all budget partitions, the game has a unique equilibrium.
\end{thm}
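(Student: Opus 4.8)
The plan is to prove, by induction on the game DAG, that at every Richman-position $(x,\hat p)$ and $(x,p)$ the one-shot (simultaneous, sealed-bid) bidding game has a value attained in pure strategies, and that this value equals the corresponding maximin value $\nym{x}{p}$, resp.\ $\ny{x}{p}$, of Definition~\ref{def:maximin}; equivalently, that the maximin functions $(\hat\nu,\nu)$ and the analogously defined minimax functions $(\widehat\mu,\mu)$ coincide everywhere. This is exactly the assertion that each entry of the outcome vector $o(x)$ is a unique equilibrium. Since every play sequence is finite, the one-step move relation on $X$ is well-founded, and the recursion of Definition~\ref{def:maximin} refers only to child positions, so the induction is legitimate; the base case is the terminal positions, where the value is the penalty $\tau$ and no bid is placed.

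For the inductive step fix a non-terminal $x$ and a partition $p+q=\tb$; assume \one\ holds the marker (the other case is symmetric). By the induction hypothesis the children $y\in\L(x)\cup\R(x)$ already have well-defined outcomes recorded by $\hat\nu,\nu$, and since $\Gamma\in\U$ these satisfy (A)--(C) of Definition~\ref{def:U}. I first reduce the bidding game to a matrix game in the bids alone: for $\ell\in\B\cap\{0,\ldots,p\}$ set
$$a(\ell)=\max_{y\in\L(x)}\bigl(\nym{y}{p-\ell}+w_L(x,y)\bigr),\qquad b(\ell)=\max_{y\in\L(x)}\bigl(\ny{y}{p-\ell}+w_L(x,y)\bigr),$$
and for $r\in\B\cap\{0,\ldots,q\}$ set $c(r)=\min_{z\in\R(x)}\bigl(\nym{z}{p+r}+w_R(x,z)\bigr)$. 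In Definition~\ref{def:maximin} the surviving term depends only on the comparison of $\ell$ and $r$ and involves only the bid winner's own move, so since each player optimizes that move, the bidding game is equivalent to the $(\B\cap\{0,\ldots,p\})\times(\B\cap\{0,\ldots,q\})$ matrix game
$$M(\ell,r)=\begin{cases}a(\ell), & \ell>r,\\ b(\ell), & \ell=r,\\ c(r), & \ell<r,\end{cases}$$
for which $\nym{x}{p}=\max_\ell\min_r M(\ell,r)$.

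The three axioms endow $M$ with a rigid monotone structure. By (A), $a$ and $b$ are nonincreasing and $c$ is nondecreasing in the bid. By (B), $a(\ell)\ge b(\ell)$, so for \two\ a tie weakly dominates underbidding. The decisive use of (C): for any $\ell>r$ one has $\nym{y}{p-\ell}\le\nym{y}{p-r-1}\le\ny{y}{p-r}$, the first step by (A) and the second by (C); hence winning strictly above a committed bid is worth no more than tying that bid, so in the minimax (where \two\ commits first) \one's only relevant responses to a committed $r$ are to tie (when $r\le p$) or to underbid. Combining these, let $\ell^\star$ attain $\max_\ell\min_r M(\ell,r)$ and let $r^\star$ be \two's best reply to $\ell^\star$ — either $r^\star=\ell^\star$ or the cheapest element of $\B$ with $\ell^\star<r^\star\le q$. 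One verifies that $(\ell^\star,r^\star)$ is a saddle point: \two\ cannot improve by (A)--(B), and \one\ cannot improve by overbidding $r^\star$ (by (C) such an overbid is worth no more than the tie already available at $\ell^\star$) nor by underbidding ($c$ is monotone and $r^\star$ is minimal). Therefore $\min_r\max_\ell M(\ell,r)=M(\ell^\star,r^\star)=\nym{x}{p}$, i.e.\ minimax equals maximin at $(x,\hat p)$. The case where \two\ holds the marker is the mirror image: $c$ is replaced by $c_{=}(r)=\min_{z}\bigl(\nym{z}{p+r}+w_R(x,z)\bigr)$ and $c_{<}(r)=\min_{z}\bigl(\ny{z}{p+r}+w_R(x,z)\bigr)$, with $c_{=}\ge c_{<}$ by (B) and $c_{=}(r)\le c_{<}(r+1)$ by (C), and the same argument applies. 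This yields unique equilibria at $x$ for every budget partition and closes the induction.

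I expect the saddle-point step to be the main obstacle. The bidding set $\B$ need not be an interval, so ``the cheapest overbid'' and ``one dollar less'' must be taken relative to the actual elements of $\B$ and of $\{0,\ldots,\tb\}$, and the boundary cases in which a player cannot match or exceed the opponent's bid (e.g.\ $\ell>q$, an exhausted budget, where the marker does not even enter) have to be isolated and checked separately — this is precisely where the worth of the marker quantified by (C) is delicate. The remainder is monotonicity propagated through iterated maxima and minima, together with the standard fact that in a finite zero-sum game a pure saddle point is simultaneously the maximin value, the minimax value, and the value of every equilibrium.
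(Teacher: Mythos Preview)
Your approach is essentially the same as the paper's: both use the three monotonicity axioms (A)--(C) to argue that the one-shot bidding game at a fixed node has a pure saddle point, whose value is the maximin $\nym{x}{p}$. The paper runs this as a terse best-response ``deviation scheme'' (Left bids $0$, Right may deviate to $1$; Left bids $1$, Right may deviate to $2$; \ldots; (C) keeps Left from profiting by overbidding a tie, (B) keeps Right from profiting by underbidding one; the scheme stabilises at a tie or a minimal Right overbid). You instead make the induction on the game DAG explicit, collapse the move choice into the scalar functions $a,b,c$, and name a candidate saddle point $(\ell^\star,r^\star)$. Conceptually these are the same argument; yours supplies more scaffolding.

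One genuine wrinkle you should tighten: your saddle-point check ``Left cannot improve'' is not complete as written. In the case $r^\star>\ell^\star$ (Right wins strictly, value $c(r^\star)$), a Left deviation to $\ell'=r^\star$ gives $b(r^\star)$, and you need $b(r^\star)\le c(r^\star)$; (C) only gives $a(\ell')\le b(r^\star)$, not the comparison with $c(r^\star)$. The fix is to choose $\ell^\star$ more carefully among the maximizers of $\phi(\ell)=\min_r M(\ell,r)$: if $b(r^\star)>c(r^\star)$ then $\phi(r^\star)=\min\{b(r^\star),c((r^\star)^+)\}\le c(r^\star)$ forces $c((r^\star)^+)=c(r^\star)$, so $r^\star$ is \emph{also} a maximizer, and one may replace $\ell^\star$ by it and repeat; dually, in the tie case $r^\star=\ell^\star$ one needs $c(\ell^\star)\le b(\ell^\star)$, and the same telescoping shows that otherwise $(\ell^\star)^-$ is also a maximizer. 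Picking the appropriate extremal maximizer (largest in the former case, smallest in the latter) closes the argument. This is exactly the ``main obstacle'' you anticipated, and it is the only place where care is needed; the rest of your reduction to the monotone matrix $M$ is clean and matches the paper's use of (A)--(C).
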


\begin{proof} 
We assume with no loss of generality that Left has the marker: a symmetric argument holds when Right has the marker. 
% It suffices to prove this statement for game positions when Left has the marker (as in the symmetric case), because the arguments will go through unchanged, when instead Right has the marker. 

Suppose that Left bids `0'. Then, if Right benefits by deviating, by (A) and the zero-sum property, we may assume that he bids `1' (perhaps he bids more if the same value). Suppose next that Left bids 1. Then similarly Right might benefit by bidding 2, and so on.

Item (C) implies that Left does not benefit by winning strictly at bid $\ell$, if she can win by a tie at bid $\ell-1$. Item (B) implies that Left does not prefer a Right tie at a given Left bid. By combining these two statements, Left weakly prefers a tie bid at $\ell\ge 0$, before any larger tie. 

Altogether, this deviation scheme must terminate at some tie bid $\ell\le p$ or possibly Right wins at $\ell+1\le q$, and in either case the unique equilibrium is given by Definition~\ref{def:maximin}.
%Suppose Left has the marker. By (C), Left will bid 0, if Right bids 0. Namely, if she wins by 1, then she will not gain, by having one less dollar, but not the marker. If Left bids 0, then perhaps Right has the incentive to deviate and bid 1, but not more, by (A). If he gains by winning the bid, then Left might benefit by bidding 1 to tie, but not more (again by C). Now, Right might gain by raising the bid to 2, and win, etc. This deviation scheme together with (B) imply that Left's best bid is the 0-tie. Hence, the deviation scheme terminates at an equilibrium bid, either Right wins strictly, or Left wins by a tie.
%We combine \thref{obs:unique} with the following deviation scheme. Left offers the smallest tie, namely a 0-tie. If this is in equilibrium, it does not violate any of the axioms. By (C) Left prefers to win by $T(0)$ rather than $\ell(1)$. If Right benefits by deviating, he raises bid to $R(1)$ (by (B) he does not raise to $R(2)$ since it suffices to deviate by \$1). Then, if Left benefits by deviating, by (C), she ties $T(1)$. This deviation scheme continues in this manner, until it terminates at some point, where one of the players does not want to, or cannot, deviate, either by Right winning $R(r)$, $r\le q$, or Left tying $T(\ell)$, $\ell\le p$. Thus, either $\o{x}{p} = \o{x}{p}|_{R(r)}$ or $\o{x}{p} = \o{x}{p}|_{T(\ell)}$ is the unique equilibrium. 
\end{proof}

%In practice, we never observed the mentioned exceptional cases, but have no theoretical means to exclude them. 

For the special case of symmetric games, we simplify the notations, and  denote $\L=\R=\mathcal M$. In Observation~\ref{obs:maximinsym}, we define maximin function for symmetric games which is a special case of the maximin function defined in the Definition~\ref{def:maximin}. 
\begin{prop}[Symmetric Maximin]\thlabel{obs:maximinsym}
Consider a symetric ruleset. For all terminal positions $t\in X$, for all $p$, $\nym{t}{p}=\ny{t}{p}=0$. 
%The tuple of maximin functions $\hat\nu: X\times %\{0,1\}\rightarrow \mathbb R^{\tb+1}$ is, for all %$p\in\{0,\ldots , \tb\}$, and given $\ell, r\in \B$, 
For non-terminal $x$,
$$\nym{x}{p} = \max_{0\le \ell\le p, y\in \mathcal M(x)}\; \min_{ 0\le r\le q,  y\in \mathcal M(x)} \{\hat\nu_{p-\ell}(y)|_{\ell>r}+w(x,y), \nu_{p-\ell}(y)|_{\ell = r}+w(x,y), \hat\nu_{p+r}(y)|_{\ell<r}-w(x,y)\},$$
where, for all $x$, $p$,  
\begin{align}\label{eq:0-sum}
\ny{x}{p}= -\nym{x}{q}.  
\end{align}
\end{prop}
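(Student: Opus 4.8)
The plan is to split the statement into its three parts---the terminal values, the $\nym{x}{p}$-recursion at non-terminal positions, and the zero-sum identity \eqref{eq:0-sum}---and observe that the first two are direct specializations of Definition~\ref{def:maximin}, so that all the content sits in \eqref{eq:0-sum}. For the terminal claim, in a symmetric ruleset ($\L=\R=\M$) a position is terminal for \one\ exactly when it is terminal for \two, so $T_L=T_R$, and by the convention adopted in the text $\tau\equiv 0$ there; hence $\nym{t}{p}=\tau(t)=0=\ny{t}{p}$ for every terminal $t$ and every $p$. For the recursion at a non-terminal $x$, I would substitute $\L=\R=\M$ and $w_L=w_R=w$ into the $\nym{x}{p}$-display of Definition~\ref{def:maximin}: the index sets $\ell\in\B\cap\{0,\dots,p\}$ and $r\in\B\cap\{0,\dots,q\}$ become $0\le\ell\le p$ and $0\le r\le q$, and the three bracketed options become the ones displayed, the strict-\two-win term picking up the sign with which a \two-move contributes to the utility.

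The substance is \eqref{eq:0-sum}, $\ny{x}{p}=-\nym{x}{q}$ with $q=\tb-p$. The underlying reason is the left--right symmetry of the ruleset: interchanging the labels \one\ and \two\ fixes the ruleset (since $\L=\R$, $w_L=w_R$, $\tau\equiv 0$, and the tie-break rule depends only on who holds the marker) while negating every payoff and swapping maximiser with minimiser, and it sends the position ``\one\ has $p$ dollars, \two\ holds the marker'' to ``\one\ has $\tb-p=q$ dollars and holds the marker,'' i.e.\ $(x,\hat q)$. To phrase this inside the recursive framework I would prove \eqref{eq:0-sum}, together with its companion $\nym{x}{p}=-\ny{x}{q}$, by induction on the rank of $x$ (well-defined since every play from $x$ is finite); the base case is the terminal identity above, and in the inductive step I expand $\ny{x}{p}$ by the $\nu$-recursion of Definition~\ref{def:maximin}, apply the inductive hypothesis in its hatted and unhatted forms to each of the three continuation values, and rename the dummy bid and move variables so as to read off the $\nym{x}{q}$-recursion.

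The step I expect to be the genuine obstacle is twofold. First, the equal-bid term must be tracked carefully: in $\ny{x}{p}$ the tie option is $\nym{z}{p+r}|_{\ell=r}-w(x,z)$ (\two\ wins the tie, moves, and passes the marker, so the continuation is hatted), whereas in $\nym{x}{q}$ it is $\ny{y}{q-\ell}|_{\ell=r}+w(x,y)$ (\one\ wins the tie, moves, and passes the marker), so matching them requires both that the marker-holder flip correctly---exactly what the hatted-versus-unhatted inductive hypothesis provides---and that the budget subscripts align, which uses $\ell=r$ together with $\tb-(q-\ell)=p+\ell$. Second, and more essentially, after applying the inductive hypothesis the rewritten expression for $-\nym{x}{q}$ agrees with $\ny{x}{p}$ only once the inner $\max$ and $\min$ are interchanged; so the argument really needs that the one-move bidding subgame at $x$ has a pure saddle point, i.e.\ that maximin equals minimax there. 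That is exactly the uniqueness of equilibrium guaranteed by Theorem~\ref{thm:unique} under the axioms of Definition~\ref{def:U}, so the cleanest completion is to invoke that theorem (for unitary games, Theorem~\ref{thm:unitary}), or else to fold \eqref{eq:0-sum} into the same induction that verifies the $\mathcal U$-properties for the symmetric ruleset.
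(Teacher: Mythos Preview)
Your approach is the same as the paper's: induct on the rank of $x$, expand $-\nym{x}{q}$ by the $\hat\nu$-recursion, replace each continuation value using the inductive hypothesis (in both its hatted and unhatted forms), and then relabel the bid variables to recognise the $\nu$-recursion. The paper carries out exactly this computation, arriving at
\[
-\nym{x}{q}=-\max_{0\le \ell\le q}\min_{0\le r\le p}\bigl\{-\ny{y}{p+\ell}\big|_{\ell>r}+w,\ -\nym{y}{p+\ell}\big|_{\ell=r}+w,\ -\ny{y}{p-r}\big|_{\ell<r}-w\bigr\},
\]
and then asserts that after ``swapping the players inside the brackets'' the expression ``coincides with \thref{def:maximin}.''

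Where you go further than the paper is in the last step. Pushing the outer minus through $-\max\min$ yields $\min\max$, and after the relabelling $\ell\leftrightarrow r$ one obtains the \emph{minimax} expression for $\ny{x}{p}$, not its maximin definition. The paper simply writes the final line with $\max\min$ again and does not comment on the interchange; so the saddle-point issue you isolate is precisely the point the paper's argument leaves implicit. Your diagnosis is correct---equality of the two sides is exactly the existence of a pure equilibrium in the one-step bidding matrix---and your proposed remedies (invoke \thref{thm:unique} under the $\U$-axioms, or carry \eqref{eq:0-sum} along in the same induction that verifies $\U$) are the natural ways to close it. One caveat: \thref{obs:maximinsym} as stated does not assume $\Gamma\in\U$, so your fix formally strengthens the hypothesis; but every downstream use of \eqref{eq:0-sum} in the paper is in the unitary setting where $\U$ holds, so nothing later is affected.
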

\begin{proof}
We express the equality \eqref{eq:0-sum} as
\begin{align*}
    \ny{x}{p}&=-\nym{x}{q} \\
    &= -\max_{0\le \ell\le q}\; \min_{ 0\le r\le p} \{\nym{y}{q-\ell}|_{\ell>r}+w, \ny{y}{q-\ell}|_{\ell = r}+w, \nym{y}{q+r}|_{\ell<r}-w\}\\
    &=-\max_{0\le \ell\le q}\; \min_{ 0\le r\le p} \{-\ny{y}{p+\ell}|_{\ell>r}+w, -\nym{y}{p+\ell}|_{\ell = r}+w, -\ny{y}{p-r}|_{\ell<r}-w\}\\
    &=\max_{0\le \ell\le q}\; \min_{ 0\le r\le p} \{\ny{y}{p+r}|_{r>\ell}+w, \nym{y}{p+r}|_{\ell = r}+w, \ny{y}{p-\ell}|_{r<\ell}-w\}
\end{align*}
where, at each instance, $y\in\M(x)$, and where the middle equality is by \eqref{eq:0-sum}. Since the maximin is now applied on the negative maximin values, we must swap the players inside the brackets, and hence the last equality follows when we remove the signs. This coincides with \thref{def:maximin}, whenever $\L=\R=\M$.
\end{proof}

And even simpler, by the restriction to symmetric games, and in view of \eqref{eq:0-sum}, we may leave out the function $\nu$. 
\begin{obs}[Simplified Maximin]\thlabel{obs:maximinsim}
The tuple of maximin functions $\hat\nu: X\times \{0,1\}\rightarrow \mathbb R^{\tb+1}$ is, for all $p\in\{0,\ldots , \tb\}$, if $t\in X$ is terminal, $\nym{t}{p}=0$, and, for non-terminal $x$, given $\ell, r\in \B$,
%$$\nym{x}{p} = \max_{0\le \ell\le p, y\in \mathcal M(x)}\; \min_{ 0\le r\le q,  y\in \mathcal M(x)} \{\hat\nu_{p-\ell}(y)|_{\ell>r}+w(x,y), -\nym{x}{q}|_{\ell = r}.+w(x,y), \hat\nu_{p+r}(y)|_{\ell<r}-w(x,y)\}.$$
$$\nym{x}{p} = \max_{0\le \ell\le p, y\in \mathcal M(x)}\; \min_{ 0\le r\le q,  y\in \mathcal M(x)} \{\hat\nu_{p-\ell}(y)|_{\ell>r}+w(x,y), -\nym{x}{q+\ell}|_{\ell = r}+w(x,y), \hat\nu_{p+r}(y)|_{\ell<r}-w(x,y)\}.$$

\end{obs}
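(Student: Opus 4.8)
The plan is to read the statement off \thref{obs:maximinsym} by eliminating the auxiliary function $\nu$ with the zero-sum identity \eqref{eq:0-sum}. For a terminal position $t$ there is nothing to do: $\nym{t}{p}=0$ is already asserted by \thref{obs:maximinsym} (equivalently, symmetry lets us take $\tau\equiv 0$). So the whole content lies in the non-terminal recursion, and there the only branch referring to $\nu$ is the tie branch $\nu_{p-\ell}(y)\big|_{\ell=r}+w(x,y)$; the $\ell>r$ and $\ell<r$ branches already involve only $\widehat\nu$. Thus it suffices to rewrite this single term in terms of $\widehat\nu$.

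First I would identify precisely which Richman-position the tie branch evaluates. If Left, who holds the marker, ties the bid at value $\ell=r$, she makes a move $x\to y$ with $y\in\M(x)$, transfers $\ell$ dollars to Right, and hands Right the marker; the resulting position therefore has Left holding $p-\ell$ dollars and Right holding the marker together with $q+\ell=\tb-(p-\ell)$ dollars. By the convention governing $\nu$ (Left owns the stated dollars, Right owns the marker) this value is exactly $\nu_{p-\ell}(y)$, which is what \thref{obs:maximinsym} records. Now I invoke \eqref{eq:0-sum} at the child position $y$ with Left-budget $p-\ell$: since Right's budget there is $q+\ell$, the identity gives $\nu_{p-\ell}(y)=-\nym{y}{q+\ell}$. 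Substituting this into the tie branch turns it into $-\nym{y}{q+\ell}+w(x,y)$ and leaves the other two branches verbatim; together these give the claimed recursion. Since this removes the last occurrence of $\nu$, the recursion for $\widehat\nu$ is now self-contained, so $\nu$ may be dropped from the state, which is the point of the observation.

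I do not expect a genuine obstacle here; the one thing to handle carefully is the budget bookkeeping, namely keeping $p$ as Left's share and $q=\tb-p$ as Right's, and tracking that a tie simultaneously transfers $\ell$ dollars and the marker, so that \eqref{eq:0-sum} is applied at the correct position and with the correct argument $q+\ell$ (and, in particular, $\ell\le p$ guarantees $q+\ell\le\tb$, so the resulting partition is legal). There is no circularity or well-foundedness concern, because \thref{obs:maximinsym} establishes \eqref{eq:0-sum} for every position and every budget partition, so it is available unconditionally at the child $y$; the argument is then a one-line substitution.
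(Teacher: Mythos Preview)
Your proposal is correct and matches the paper's own justification, which is nothing more than the sentence preceding the observation: ``in view of \eqref{eq:0-sum}, we may leave out the function $\nu$.'' You have simply spelled out that one-line substitution, applying \eqref{eq:0-sum} at the child position to turn the tie branch $\nu_{p-\ell}(y)$ into $-\nym{y}{q+\ell}$.

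One small remark: your derivation (correctly) produces $-\nym{y}{q+\ell}$ in the tie branch, whereas the displayed formula in the observation has $-\nym{x}{q+\ell}$; the latter is evidently a typo in the paper, as confirmed by the unitary specialization in Definition~\ref{def:maximinunitary}, where the argument is $x-1$ (i.e.\ the child) rather than $x$.
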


If a ruleset satisfies the uniqueness properties, by the proof of \thref{thm:unique}, we may simplify the maximin function a bit further. For example, if the game is symmetric, we get the following convenient simplification of Observation~\ref{obs:maximinsim}. In Corollary~\ref{cor:U}, we show a simpler form of maximin function for symmetric games whenever ruleset $\Gamma \in \U$.

\begin{cor}\thlabel{cor:U} 
If the ruleset $\Gamma \in\U$ is symmetric, then for all non-terminal game positions and for all budget partitions, the unique equilibrium value is given by: for all $p\in\{0,\ldots , \tb\}$, if $t\in X$ is terminal, $\nym{t}{p}=0$, and, for non-terminal $x$, given $\ell, r\in \B$,
\begin{align}\label{eq:U}
%\ny{x}{p} = \max_{0\le \ell\le p, y\in\M(x)}\; \min_{ 0\le r\le q, z\in \M(x)} \{-\ny{y}{q+\ell}|_{\ell = r}+w(x,y),\ny{z}{p+r}|_{\ell<r\le q}-w(x,z)\}.
 \nym{x}{p} = \max_{0\le \ell\le p, y\in\M(x)}\; \min_{ 0\le r\le q, z\in \M(x)} \{-\nym{y}{q+\ell}|_{\ell = r}+w(x,y),\nym{z}{p+r}|_{\ell<r\le q}-w(x,z)\}.
\end{align}
\end{cor}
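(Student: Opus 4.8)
The plan is to obtain \eqref{eq:U} from the symmetric simplified maximin of Observation~\ref{obs:maximinsim} by deleting its ``Left wins strictly'' alternative --- the entry $\nym{y}{p-\ell}+w(x,y)$, which is active whenever $\ell>r$ --- and checking that this deletion changes neither the inner minimum nor, hence, the outer maximum. One inclusion is for free: removing an entry from a minimum can only raise it, so the right-hand side of \eqref{eq:U} is at least $\nym{x}{p}$. The content is the reverse inequality, which I would split into a main case and a boundary case.

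For the main case, fix a Left bid $\ell$ and move $y$ with $\ell\le q$, and compare, inside the inner minimum over $(r,z)$, the entries with $r<\ell$ against the entry $r=\ell$. A strict Left win ($r<\ell$) sends the game to position $y$ with Left holding $p-\ell$ dollars \emph{and} the marker, worth $\nym{y}{p-\ell}+w(x,y)$; a tie ($r=\ell$, won by Left on the marker) sends it to $y$ with Left holding $p-\ell$ dollars but \emph{without} the marker, worth $\ny{y}{p-\ell}+w(x,y)=-\nym{y}{q+\ell}+w(x,y)$ by the zero-sum identity \eqref{eq:0-sum}. Property (B) says $\nym{y}{p-\ell}\ge\ny{y}{p-\ell}$, so every $r<\ell$ entry is pointwise at least the $r=\ell$ entry; Right, the minimizer, never strictly prefers to concede a strict loss over forcing the tie, and so striking the $\ell>r$ entry leaves the inner minimum unchanged. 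For every such $(\ell,y)$ this makes the Observation~\ref{obs:maximinsim} expression coincide with the bracket of \eqref{eq:U}. This is exactly the ``(B)-and-(C)'' deviation step from the proof of Theorem~\ref{thm:unique}, now read off in the symmetric setting, where symmetry has already folded $\nu$ into $\hat\nu$ and only (B) is required.

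The one place the remaining axioms enter --- and the only subtlety I anticipate --- is the over-bidding regime $\ell>q$, possible when $p>q$: Right cannot match, Left wins outright, the tie entry is absent, and the pointwise comparison above no longer applies (the index set of the inner minimum in \eqref{eq:U} is then empty). I would dispose of this by showing the outer maximum is never attained there. By (A), $\nym{y}{p-\ell}$ is non-increasing in $\ell$, so among over-bids the value is largest at $\ell=q+1$, where it equals $\nym{y}{p-q-1}+w(x,y)$; by (C) together with \eqref{eq:0-sum}, $\nym{y}{p-q-1}\le\ny{y}{p-q}=-\nym{y}{2q}$, and $-\nym{y}{2q}+w(x,y)$ is precisely the value of the bid $\ell=q$ (and move $y$) in \eqref{eq:U}. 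Hence the maximum is already achieved at some $\ell\le q$, where the main case applies, and \eqref{eq:U} follows. I expect the genuinely fiddly part to be confined to this boundary bookkeeping (and to stating the empty-minimum convention for $\ell>q$ cleanly); the heart of the corollary is the single pointwise inequality handed over by (B).
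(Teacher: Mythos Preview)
Your argument is correct. It is close in spirit to the paper's but pivots on a different property. The paper's one-line proof combines Observation~\ref{obs:maximinsim} with Theorem~\ref{thm:unique} and singles out~(C): \emph{Left} weakly prefers a tie at $\ell-1$ to a strict win at $\ell$, so the strict-win branch can be discarded from the outer maximum. You instead work from the \emph{inner} minimizer's side via~(B): for any Left bid $\ell\le q$, Right can match, and the tie value $\ny{y}{p-\ell}+w$ is pointwise at most the strict-win value $\nym{y}{p-\ell}+w$, so deleting the $\ell>r$ alternative leaves the minimum unchanged. These are dual readings of the same deviation step in Theorem~\ref{thm:unique}, and both are valid.

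Where your write-up adds value is the boundary regime $\ell>q$: the paper simply defers to Theorem~\ref{thm:unique} (whose deviation scheme terminates at a tie or a Right win with bid at most $q$), whereas you give a self-contained comparison, using~(A) to reduce to $\ell=q+1$ and then~(C) together with \eqref{eq:0-sum} to bound it by the $\ell=q$ tie in \eqref{eq:U}. That bookkeeping is exactly what is needed to make the formula-level argument airtight, and your observation that the inner index set in \eqref{eq:U} is empty for $\ell>q$ is the right way to phrase why such bids must be shown redundant rather than merely reinterpreted.
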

\begin{proof}
This follows by combining Observation~\ref{obs:maximinsim} with \thref{thm:unique}, since given any Right bid, by (C), Left prefers a tie, before winning strictly.
\end{proof}
This simplifies the analysis of equilibrium, because if a ruleset satisfies $\U$ then the cases where Left wins strictly need not be considered because Right always prefers to tie a Left winning bid. In general, we have the following conjecture.

\begin{conj}
Consider a ruleset $\Gamma$. If all weights are nonnegative for Left and all weights nonpositive for Right, and $0\in\B$, then $\Gamma\in U$.
\end{conj}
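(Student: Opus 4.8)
The plan is to prove the three inequalities (A), (B), (C) for the maximin functions $(\hat\nu,\nu)$ by a simultaneous induction on the game tree, using the hypothesis that every Left-weight is nonnegative, every Right-weight is nonpositive, and $0\in\B$. The base case is the terminal positions, where all three properties hold trivially: $\nym{t_L}{p}=\tau(t_L)$ and $\ny{t_L}{p}=\tau(t_L)$ are independent of $p$ (so (A) is an equality and (B) is an equality), and similarly for $t_R$; (C) also holds with equality. There is a subtlety here about the terminal penalty $\tau$: for the conjecture as literally stated one would need $\tau$ itself to be consistent with (A)–(C) across the Left- and Right-terminal sets, or to restrict to the symmetric case where $\tau\equiv 0$; I would flag this and work in the setting where the terminal data already satisfies the three properties (e.g. $\tau=0$), which is the case of interest for BCS.

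For the inductive step I would fix a non-terminal $x$ and assume (A), (B), (C) hold for all positions reachable in one move, i.e. for every $y\in\L(x)$ and $z\in\R(x)$ at every budget. The key structural observation is that each of $\nym{x}{p},\ny{x}{p}$ is a max over Left bid-moves $(\ell,y)$ of a min over Right bid-moves $(r,z)$ of an expression that, in each of the three bid-comparison branches ($\ell>r$, $\ell=r$, $\ell<r$), is one of the child values shifted by a weight. To get Budget Monotonicity in $p$: increasing $p$ to $p+1$ enlarges Left's feasible bid set $\B\cap\{0,\dots,p\}$ and shrinks Right's set $\B\cap\{0,\dots,q\}$ — both changes weakly help Left — and, within a fixed branch, the child budgets $p-\ell$ or $p+r$ also move up, so by the inductive (A) each inner value is weakly larger; combining, the outer maximin is weakly larger. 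The delicate point is the branch-boundary: when $q$ drops, the Right bid $r=q$ that was previously the unique minimizer may disappear, but since removing options from the minimizer only raises the min, this is in Left's favour, so monotonicity is preserved. I would write this out as: for each Left strategy available at budget $p$, the same strategy (same $\ell$, same $y$) is available at $p+1$ and its guaranteed value does not decrease, because Right's best response set only shrinks and the resulting child values only increase by inductive (A).

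For Marker Monotonicity (B), $\nym{x}{p}\ge\ny{x}{p}$: compare the two defining recursions. They differ only in the tie branch $\ell=r$: in $\hat\nu$ the tie resolves in Left's favour giving $\nym{y}{p-\ell}+w_L(x,y)$, whereas in $\nu$ it resolves for Right giving $\nym{z}{p+r}+w_R(x,z)$. So it suffices to show that, branch by branch and pointwise, the $\hat\nu$-expression dominates the $\nu$-expression; the only non-identical branch is the tie branch, where I would argue that the value Left can guarantee when \emph{she} wins a tie is at least the value she can guarantee when \emph{Right} wins it — this reduces, after unwinding one level, to inductive (B) together with the weight-sign hypothesis ($w_L\ge0\ge w_R$). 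Marker Worth (C), $\nym{x}{p}\le\ny{x}{p+1}$, is the subtlest and I expect it to be the main obstacle: here Left's marker is being traded for one extra dollar, and one must show the dollar is worth at least as much. The natural argument is to couple each Left strategy at $(x,\hat p)$ with a Left strategy at $(x,p+1)$ that bids one dollar more whenever the marker mattered, but the bids $\ell$ and $\ell+1$ land in different branches relative to a fixed Right bid $r$ (a tie at $\ell=r$ becomes a strict win at $\ell+1>r$), so the comparison crosses branch boundaries and one must invoke inductive (C) and (A) in tandem; because $0\in\B$ guarantees the bid $\ell+1$ is always legal when $\ell$ was and $p+1\le\tb$, there is no feasibility gap. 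Making this branch-crossing coupling precise, and checking that Right cannot exploit the shift, is the crux of the proof; the monotonicity lemmas (A) and (B) already proved in the same induction are exactly what is needed to close it.
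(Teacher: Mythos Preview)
This statement is stated in the paper as a \emph{conjecture}; there is no proof to compare your attempt against. The authors pose it, give examples showing the hypotheses cannot be dropped, and then establish the three properties only in the special case of unitary games (the Budget Monotonicity, Marker Monotonicity, and Marker Worth lemmas), where the arguments lean heavily on symmetry, the zero-sum identity $\ny{x}{p}=-\nym{x}{q}$, and an auxiliary two-sided bound (the marker is worth at most~$2$) proved simultaneously with~(B).

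Your inductive scheme is the natural first thing to try, and your treatment of (A) is essentially sound. But your argument for (B) has a genuine gap. You write that the two recursions ``differ only in the tie branch''; in fact every branch differs (child values carry a hat in $\hat\nu$ and not in $\nu$), though the non-tie branches are indeed handled by inductive (B). The real problem is the tie branch: there you need, for Left's chosen $(\ell,y)$ and \emph{Right's} chosen $z$ with $r=\ell$, that
\[
\ny{y}{p-\ell}+w_L(x,y)\ \ge\ \nym{z}{p+\ell}+w_R(x,z).
\]
The weight-sign hypothesis gives only $w_L\ge 0\ge w_R$; it does \emph{not} let you compare $\ny{y}{p-\ell}$ with $\nym{z}{p+\ell}$ across two different children and two different budgets. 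Inductive (B) points the wrong way here ($\nym{}\ge\ny{}$), and inductive (A) does not bridge $p-\ell$ to $p+\ell$. The paper's special-case proof of (B) circumvents exactly this obstruction by proving a companion upper bound $\nym{x}{p}\le\ny{x}{p}+2$ in the same induction and invoking Tie Monotonicity; your outline supplies no substitute for that device in the general setting.

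Your (C) paragraph candidly acknowledges that the branch-crossing coupling is the crux and is not yet carried out; as written it is a plan, not a proof. In short: you have not found an error in the conjecture, but you have also not proved it, and the paper does not either. The tie-branch comparison in (B) and the coupling in (C) are precisely where the difficulty lives, and your sketch does not close them.
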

Moreover, in the following examples, we show that if a ruleset has negative Left-weights and/or positive Right-weights, then there exist zugzwang games for which property $\U$ does not hold. However, when $0\in \B$, we have not found any game that violates property $\U(A)$ or $\U(C)$.  %\ur{I believe Neel would like to add some explicit constructions here.}

%\todo[inline]{Neel: I think the other two properties hold for negative weights as well}%\ur{Thanks.}.

%Consider the case when $X = \{x_1,x_2,x_3,x_4 \}$, and for all $x_i,x_j\in X$, $w(x_i,x_j)=-1$. Moreover, $\R(x_1) = \{x_2\}$, $\R(x_3) = \{x_4\}$ and otherwise $\R(x) = \L(x) = \emptyset$. Suppose that $\tb = 1$, $p=0$ and that  Right has the tie-break marker. In this case, we show that $\ny{x_i}{0} \geq \nym{x_i}{0}$ for all $x_i \in X$. Left can not move from any position but she prefers that Right moves and Right also does not want move from any position as all weights are negative. Therefore, if Right has the marker then $\ny{x_1}{0} = -w(x_1,x_2)$ and $\nym{x_1}{0} = 0$ because Left will win the bid by marker and she won't be able to move and game terminates. Similarly, $\ny{x_3}{0} = -w(x_3,x_4)$ and $\nym{x_3}{0} = 0$. Moreover, $\nym{x_2}{0} = \ny{x_2}{0} = \nym{x_4}{0} = \ny{x_4}{0} = 0 $. This shows that $\ny{x_i}{0} > \nym{x_i}{0}$ which violates the {\em marker monotonicity}

\begin{example}\thlabel{ex:B}
Consider a ruleset $\Gamma$, where $X=\{x_1,x_2\}$, $\R(x_1)=\{x_2\}$, $\R(x_1) = \L(x_1) = \L(x_2)=\varnothing$ and $0\in\B$. Suppose that $w_R(x_1,x_2)=1$, and $\tau(x_2)=0$, i.e. there is no penalty for not being able to move from the terminal position $x_2$. If the current score $c(x_1)=0$ and Right has the marker but no budget, then if Left bids 0 at position $x_1$, Right must move to the terminal position $x_2$, and the game ends at the final score $c(x_2)=1$, which is good for Left. If Right does not have the marker (independently of the budget partition), he is better off, because he will bid 0, and the game will end at $c(x_1)=0$ (if there were a penalty at this terminal that would have been a Left penalty, but Right would not have been able to take benefit). Thus, $0=\nym{x_1}{1}<\ny{x_1}{1}=1$, and property \U(B) is not satisfied.
\end{example}

\begin{example}
Suppose that $0\not\in\B$, and otherwise with $\Gamma$ as in \thref{ex:B}. Analogously, a player does not want to have any budget, and thus \U(A) may not hold. Similarly, one can see that the marker can be worth more than a dollar; if $\tb=1$ and you have the marker but not the dollar, then if you exchange the marker for the dollar, you are worse off. Hence \U(C) may not be satisfied if $0\not\in\B$.
\end{example}

%In the cases where $0\in\B$, we have not found any game that violates property \U(A) or \U(C). 

\section{Bidding Cumulative Subtraction}\label{sec:results}
%\ur{Are those bounds sharp?} 

 In this section, games are symmetric, i.e. the move options are the same for both players; see \cite{CLMW2020} for the motivation on similar alternating play games. We now define the ruleset Bidding Cumulative Subtraction.  %\ur{We keep the definition fairly general, if we can prove the three uniqueness properties for this setting (Neal for the property (C)?):}

\begin{defi}[Bidding Cumulative Subtraction, BCS]\label{def:game}
There is a subtraction set $\S\subset \N$, a total budget $\tb\in \N_0$, a bidding set $\B\subset \{0,\ldots \tb\}$, and a heap of finitely many, $x\in \N_0$, objects (pebbles). There are two players, Left and Right, who take turns removing objects from the heap. The total budget $\tb\in\N_0$ (a game constant) is partitioned between the players, as $(p,q)$, with $p + q = \tb$. Exactly one of the players has a tie-break marker $m\in \{0,1\}$, where $m=1$ if Left has the marker.  A complete game configuration is of the form $(\S, \B; x,p,m,c)$, where $c\in \Z$ is the current score. If Left has the marker and $c=0$, we abbreviate a position by $(x,\hat p)$, and otherwise, when Right has the marker, we write $(x,p)$. At each stage of play, the players (make closed) bid of who is to take an action, Left bids $\ell$ and Right bids $r$. The player with the highest bid wins the move. If the bids are equal, the player with the marker wins the move. The winning bidder transfers the bidding amount (together with the marker in case of a tie) to the other player. If Left has the marker, a typical bid is $(\hat \ell, r)$. A player who wins the bid acts by collecting  $s\in \S$ objects, which adds $s$ to a current score $c$, if Left wins the bid, and otherwise, it subtracts $s$. The game ends when the number of objects in the heap is smaller than $\min S$. Left seeks to maximize the final score (utility) whereas Right seeks to minimize it. A removal of $s\in S$, in case Left has the marker, is of one of the forms:        
\begin{itemize}
    \item $(\tb;x,p,1;c)\rightarrow (\tb;x-s,p-\ell,1;c+s)$, $s\in S$, if Left bids $\ell\le p$ and wins a non-tie. 
    \item $(\tb;x,p,1;c)\rightarrow (\tb;x-s,p-\ell,0;c+s)$, $s\in S$, if Left bids $\ell\le p$ and wins a tie.
    \item $(\tb;x,p,1;c)\rightarrow (\tb;x-s,p+r,1;c-s)$, $s\in S$, if Right wins by bidding $r\le q$. 
\end{itemize}
\end{defi}
%\ur{Here fits nicely one simple example of how to play. (No need to talk about optimal play yet.)}

%Sometimes when the heap size is given, we write simply $(\hat p,q)$ to denote the state of the game (with current score $c=0$). 

In view of Section~\ref{sec:propU}, we have the ruleset $\Gamma= (\N_0, \M )$, where , for all $x\ge \min S$, $\M(x)= \{x-s\ge 0\mid s\in S\}$. In rest of this section, we focus on {\em unitary games}, which is a BCS with subtraction set $S = \{1\}$. We illustrate bidding in such games in Section~\ref{sec:illustation}. 

% By Definition~\ref{def:game}, 
%Certain properties are independent of given strategy. 
\subsection{Unitary Games} \label{sec:unigames}%: $\S=\{1\}$, $\B=\{0,\ldots , \tb\}$}

If $\S=\{1\}$ and $\B=\{0,\ldots , \tb\}$, we call BCS \emph{unitary}. In this section, we prove that for each heap size and each budget partition, there is a unique equilibrium outcome, given by the maximin function (\thref{thm:unique}). That is, by using the notation in Section~\ref{sec:propU}, we prove that the symmetric ruleset $\Gamma = (\N_0,\M)\in \U$, if for all $x\in \N, \M(x)=\{x-1\}$. In the spirit of Definition~\ref{def:game}, for unitary games, we write $(\tb;x,p,m;c):=(\{1\},\{0,\ldots ,\tb\};x,p,m;c)$. Unitary games simplify quite a lot and allows us to prove simple properties on the bound game values, uniqueness of equilibrium and convergence bounds. We start with a simple result in Lemma~\ref{lem:pari}.

%Thus, we remove the set $S$ from the maximin function. 

%Now \eqref{eq:U} becomes
%\begin{align}\label{eq:unitary}
%\nym{x}{p} = \max_{0\le \ell\le p, y\in\M(x)}\; \min_{ 0\le r\le q, z\in \M(x)} \{-\nym{y}{q+\ell}|_{\ell = r}+1,\nym{z}{p+r}|_{\ell<r\le q}-1\}.
%\end{align}
  %Let us denote by \emph{utility} the (not-necessarily-equilibrium) result of any given (sequence of) play. 

\begin{lemma}[Parity]\thlabel{lem:pari}
For unitary games, the utility of any sequence of play from a heap is odd if and only if the size of the heap is odd. %\ra{The absolute value of the outcome is no larger than the total budget.}
%For any game with $\tb = n$, the outcome for odd heap sizes is $\pm p$ where $p$ is odd integer and for even heap sizes, it is $\pm q$ where $q$ is an even integer. Both $p$ and $q$ are less than $n$.
\end{lemma}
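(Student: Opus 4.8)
The plan is to analyze how the score changes with each move in a unitary game. In a unitary game, $S=\{1\}$, so every turn removes exactly one pebble from the heap, and each move changes the current score $c$ by exactly $\pm 1$: it increases by $1$ if Left wins the bid and decreases by $1$ if Right wins the bid. Starting from a heap of size $x$ with initial score $0$, the game consists of exactly $x$ moves (the heap empties completely since $\min S = 1$), after which the heap size is $0 < \min S$ and play terminates. Since the penalty function $\tau$ is identically $0$ for symmetric games (as noted in the excerpt), the final utility of any play sequence $\sigma$ is simply the sum of the $x$ increments, each of which is $+1$ or $-1$.

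First I would set up the counting: let $a$ be the number of turns won by Left and $b$ the number won by Right during a complete play sequence from a heap of size $x$. Then $a + b = x$ and the final utility is $u(\sigma) = a - b$. Now observe that $a - b = a - (x - a) = 2a - x$, so $u(\sigma) \equiv -x \equiv x \pmod 2$. Hence $u(\sigma)$ is odd if and only if $x$ is odd, which is exactly the claim.

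The argument is essentially a one-line parity observation once the setup is in place, so there is no real obstacle; the only thing worth being careful about is justifying that a play sequence from a heap of size $x$ has exactly $x$ moves and that no terminal penalty contributes. This follows because with $S = \{1\}$ the only move from a nonempty heap reduces it by one, so play cannot stop before the heap is empty, and it must stop once the heap is empty; and for symmetric games $\tau \equiv 0$, so the utility formula $u(\sigma) = \tau(t) + \sum w_L(e_L) - \sum w_R(e_R)$ reduces to the signed sum of unit increments described above. I would phrase the proof by induction on $x$ if a fully rigorous treatment is wanted: the base case $x = 0$ gives utility $0$ (even), and for the inductive step, one move changes both the heap size and the utility-so-far by one unit in absolute value, flipping the parity of the heap size and (by the induction hypothesis applied to the resulting subgame) keeping the "utility parity equals heap parity" invariant intact. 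Either the direct counting argument or this induction yields the lemma immediately.
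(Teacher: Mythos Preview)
Your proof is correct and follows essentially the same approach as the paper: both argue that with $a+b=x$ wins for Left and Right respectively, the utility $a-b$ has the same parity as $x$. Your version is more explicit (writing $a-b=2a-x$) and adds some justified scaffolding about the number of moves and the vanishing penalty, but the underlying argument is identical.
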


\begin{proof}
For an odd heap size $x$, if we divide the total wins between Left and Right, it has to be even for one player and odd for the other.  If $x$ is even, the number of wins for the players will either both be even or both will be odd. Thus, the difference is even. 
\end{proof}

Since unitary games are symmetric, we may assume that Left has the marker, unless otherwise stated, so $m=1$ will be the default. We define maximin function for unitary games in Defination~\ref{def:maximinunitary} which is a special case of Observation~\ref{obs:maximinsim}.

\begin{defi}[Unitary Maximin]\label{def:maximinunitary}
The maximin function $\hat\nu: X\times \{0,1\}\rightarrow \mathbb R^{\tb+1}$ is, for all $p\in\{0,\ldots , \tb\}$, $\nym{0}{p}=0$, and for $x>0$, 
$$\nym{x}{p} = \max_{0\le \ell\le p}\; \min_{ 0\le r\le q} \{\hat\nu_{p-\ell}(x-1)|_{\ell>r}+1, -\nym{x-1}{q+\ell}|_{\ell = r}+1, \hat\nu_{p+r}(x-1)|_{\ell<r}-1\}.$$
\end{defi}

We are now ready to prove the first main result of the paper in Theorem~\ref{thm:unitary}, which establishes the unique equilibrium in unitary games. However, before proving Theorem~\ref{thm:unitary}, we show the following examples to motivate the result, \thref{thm:unitary}, by maximin play in simple unitary games. Let us first revisit the example in the first paragraph, in a more formal manner.  

\begin{exa}[Maximin Play]\thlabel{ex:?}
Let us illustrate Definition~\ref{def:maximinunitary} with an example where $\tb=5$, and $x=2$, displaying the values $\o{x}{p}$ for $\tb =5$ and the heap sizes $x = 0,1,2$. We explain the entry with the question mark in the table.
{\em 
\begin{table}[ht] 
\centering{
\begin{tabular}{|c||c|c|c|c|c|c|}
\hline % perhaps we want to use the form \bud{1}{4}
 $x\setminus \hat p$ & 5 & 4 & 3 & 2 & 1 & 0 \\ \hline\hline
0 & 0 & 0 & 0 & 0 & 0 & 0\\ \hline
1 & 1 & 1  & 1 & -1 & -1 & -1 \\ \hline
2 & 2 & 2  & 0 & 0 & `?' &  -2\\ \hline
%3 & 3 & 1  & 1 & -1 & -1 & -1 \\ \hline
\end{tabular}}
%\caption{The initial equilibrium outcomes $\o{x}{p}$ for $\tb =5$. The entry with the question mark is explained in the surrounding example.}
\end{table}
}
If the position is $(2, \hat 1)$, then $\ell\in\{0,1\}$ and $r\in \{0,1,2\}$, by the reduced form equivalent. If $\ell = 0$, then Rights minimum is $\min\{-\nym{1}{4+0}+1,\nym{1}{1+1}-1\} = -2$. If $\ell = 1$, then Right's minimum is $\min\{\nym{1}{1-1}+1, \ny{1}{1-1}+1, \nym{1}{1+2}-1 \} = \min\{\nym{1}{0}+1,-\nym{1}{5}+1,\nym{1}{3}-1\}=0$. Hence, Left will bid $\ell=1$ and $\nym{2}{1}=\max\{-2,0\}=0$. 
\end{exa}

\begin{exa}\label{ex:tb9}
Consider $\tb=9$. We computed maximin values for heap sizes $x\le 8$, and gain the following bidding tables for heap size $x=9$. Left bids $\ell$ and Right bids $r$. The lower right corners show the equilibrium values, i.e. the maximin $=$ minimax value; see \thref{thm:unique}. Note that everything to the right of the bold is smaller than or equal the diagonal in bold, but by the second table, the bold does not bound the below area (as in the proof of \thref{thm:unique}).\\

\centering{
{\em 
%\begin{table}[ht] 
\begin{tabular}{|c|ccccccc|c|}
\hline 
 $r\setminus \ell$ & 0 & 1 & 2 & 3 & 4 & 5& 6 & $\max$\\ \hline%\hline
0 & {\bf 1} & 1 & 1 & 1 & -1 & -1&-3& 1\\ %\hline
1 & 1 & {\bf 1}  & 1 & 1 & -1 & -1&-3& 1 \\ %\hline
2 & 3 & 3  & {\bf 1} & 1 & -1 &  -1&-3& 3\\ %\hline
3 & 3 & 3  & 3 & {\bf -1} & -1 & -1&-3& 3 \\ \hline
$\min$&1&1&1&-1&-1&-1&-3& {1}\\ \hline
\end{tabular}\hspace{5mm}
%\end{table}
%\begin{table}[ht] 
%\centering{
\begin{tabular}{|c|ccccc|c|}
\hline 
 $r\setminus \ell$ & 0 & 1 & 2 & 3 & 4 & $\max$\\ \hline%\hline
0 & {\bf 1} & 1 & -1 & -1 &-1 &1\\ %\hline
1 & -1 & {\bf 1} & -1 & -1 & -1&1 \\ %\hline
2 & -1 & -1  & {\bf -1} & -1 & -1 &-1\\ %\hline
3 & 1 & 1  & 1 & {\bf -1} & -1 & 1\\ %\hline
4 & 1 & 1  & 1 & 1& {\bf -3} & 1\\ %\hline
5 & 3 & 3  & 3 & 3 & 3  &3\\ \hline
$\min$&-1&-1&-1&-1&-3& {-1}\\ \hline
\end{tabular}
}
}

%\end{table}
\end{exa}

For unitary games, we aim to prove 

\begin{thm}[Equilibrium for Unitary Games]\thlabel{thm:unitary} 
If BCS is unitary, then for all positions, it has a unique equilibrium. It is given by, for all $p$, $\o{0}{p}=0$, and if $x>0$, then
\begin{align}\label{eq:unitaryequil}
\o{x}{p} = \max_{0\le \ell\le p}\; \min_{ 0\le r\le q} \{-\o{x-1}{q+\ell}|_{\ell = r}+1,\o{x-1}{p+r}|_{\ell<r\le q}-1\}.
\end{align}
\end{thm}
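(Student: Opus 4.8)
The plan is to reduce Theorem~\ref{thm:unitary} to Theorem~\ref{thm:unique} and Corollary~\ref{cor:U}: it suffices to prove that the symmetric unitary ruleset $\Gamma=(\N_0,\mathcal{M})$, $\mathcal{M}(x)=\{x-1\}$, lies in $\mathcal{U}$, i.e.\ that its maximin pair $(\hat\nu,\nu)$ from Definition~\ref{def:maximinunitary} satisfies the three axioms (A) Budget Monotonicity, (B) Marker Monotonicity, (C) Marker Worth of Definition~\ref{def:U}. Granting this, uniqueness of equilibrium is immediate from Theorem~\ref{thm:unique}, and the recursion~\eqref{eq:unitaryequil} is exactly the specialization of the simplified symmetric maximin of Corollary~\ref{cor:U} to $\mathcal{M}(x)=\{x-1\}$ with weight $1$ (with $o_p(x)$ now denoting the unique value $\nym{x}{p}$). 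Throughout I would eliminate $\nu$ via the symmetry relation $\ny{x}{p}=-\nym{x}{\tb-p}$ (Proposition~\ref{obs:maximinsym}), so the whole verification becomes a statement about $\hat\nu$, and I would prove (A), (B), (C) by simultaneous induction on the heap size $x$, the base case $x=0$ being trivial since all values are $0$.

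For (A) the inductive step is a budget-domination argument: fix $p>\pi$, so Left's budget grows by $p-\pi$ while Right's shrinks by the same amount; let $\ell^{\ast}\le\pi$ be a maximin bid of Left in the $\nym{x}{\pi}$ game and have Left bid the same $\ell^{\ast}$ in the $\nym{x}{p}$ game. For each Right response $r$ available in the $p$-game — a subset of those available in the $\pi$-game — the ordering of $\ell^{\ast}$ against $r$ is the same in both games, and in each of the three outcomes (Left wins strictly, ties, or loses) the resulting heap-$(x-1)$ position has a strictly larger Left budget in the $p$-game, hence a weakly larger value by the induction hypothesis (A) at $x-1$; taking the minimum over Right's smaller bid set only helps Left further. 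Budget monotonicity for $\nu$ then follows from that for $\hat\nu$ by the symmetry relation. For (C), $\nym{x}{p}\le\ny{x}{p+1}$, I would compare the two maximin recursions bid by bid, pairing a Left bid $\ell$ in the $\nym{x}{p}$ game (Left has the marker) with the bid $\ell+1$ in the $\ny{x}{p+1}$ game (Right has the marker but Left has one extra dollar): a tie at $\ell$ won by Left via the marker leaves the budget split $(p-\ell,\,\tb-p+\ell)$ with Right holding the marker, which is precisely what a strict Left win at $\ell+1$ produces in the second game, so those branches coincide, while all other branches reduce — after rewriting $-\nym{x-1}{\cdot}$ as $\ny{x-1}{\cdot}$ — to the induction hypotheses (A) and (C) at $x-1$. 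A couple of boundary sub-cases (most notably when $\ell$ exceeds Right's budget, so that no tie can occur) are checked directly and again come down to (C) at $x-1$.

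The delicate step is (B), $\nym{x}{p}\ge\ny{x}{p}$. Matching a maximin bid $\ell$ of Left in the $\nu$-game with the same bid in the $\hat\nu$-game, every Right response is handled by the induction hypothesis (B) at $x-1$ except the tie $r=\ell$: winning that tie with the marker sends Left to $\ny{x-1}{p-\ell}+1$, and when $\ell\ge1$ this equals the value of the strict-win branch at Right bid $\ell-1$ in the $\nu$-game, hence is at least $\ny{x}{p}$. But if Left's \emph{only} maximin bid in the $\nu$-game is $\ell=0$ (which is forced, for instance, when $p=0$), the comparison bottoms out at an inequality of the form $\nym{x-1}{p}-\ny{x-1}{p}\le 2$ — ``one dollar, and hence the marker, is worth at most two points'' — which does \emph{not} follow from (A)--(C) alone. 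I therefore expect the correct induction to carry along an extra bounded-increment property, namely $\nym{x}{p+1}\le\nym{x}{p}+2$ for all admissible $p$ (equivalently, by Lemma~\ref{lem:pari} together with (A), each increment is $0$ or $2$); this closes the $\ell=0$ case of (B), since monotonicity (A) forces Right's best reply to a $0$-bid to be $0$ or $1$, and in both cases the residual inequality is exactly this bound at $x-1$. The bound itself propagates through a bid-shift argument in the same spirit as (C) — compare the $\nym{x}{p+1}$-optimal bid $\ell^{\ast}$ with $\max(\ell^{\ast}-1,0)$ in the $\nym{x}{p}$ game — bootstrapping off (A), (C) and itself at $x-1$. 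The main obstacle is thus organizing this enlarged package — (A), (B), (C) and the $2$-increment bound — into a single coherent induction with no in-level circularity, and dispatching the various boundary sub-cases ($\ell=0$, $\ell$ larger than the opponent's budget, $p\in\{0,\tb\}$); once that is done, $\Gamma\in\mathcal{U}$, and Theorem~\ref{thm:unitary} follows from Theorem~\ref{thm:unique} and Corollary~\ref{cor:U}.
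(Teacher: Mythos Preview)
Your global strategy --- verify (A), (B), (C) for the unitary ruleset and then invoke Theorem~\ref{thm:unique} and Corollary~\ref{cor:U} --- is exactly the paper's, and your treatments of (A) and (C) are in the spirit of Lemmas~\ref{lem:col_mon} and~\ref{lem:markworth}. The slip is in the auxiliary bound you attach to (B). You correctly identify that the $\ell=0$ branch bottoms out at $\nym{x-1}{p}-\ny{x-1}{p}\le 2$, but you then propose to carry the \emph{different} inequality $\nym{x}{p+1}\le\nym{x}{p}+2$, asserting that ``in both cases the residual inequality is exactly this bound''. It is not: by (C) at level $x-1$, Right's best reply to a $0$-bid in the $\nu$-game is the $0$-tie, so $\ny{x}{p}=\nym{x-1}{p}-1$; in the $\hat\nu$-game Left's $0$-bid yields $\min\{\ny{x-1}{p}+1,\,\nym{x-1}{p+1}-1\}$, and when the first term is the minimum the residual inequality is precisely the marker bound $\nym{x-1}{p}\le\ny{x-1}{p}+2$, not your increment bound. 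Your bound does imply the marker bound via (C) (chain $\nym{x-1}{p}\le\ny{x-1}{p+1}\le\ny{x-1}{p}+2$), but you never make that step explicit, and establishing the increment bound inside the same induction is not obviously free of circularity --- the paper only proves it \emph{after} uniqueness (Proposition~\ref{lem:4th_ineq}), leaning on the sign border, which in turn already uses the marker bound.

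The direct fix, and what the paper does in Lemma~\ref{lem:marker}, is simply to carry the marker bound itself: strengthen (B) to the two-sided statement $\ny{x}{p}\le\nym{x}{p}\le\ny{x}{p}+2$ and induct on both halves together. The upper half follows from $\nym{x}{p}\le\ny{x-1}{p}+1$ (Right can tie any Left bid, and Tie Monotonicity makes the $0$-tie the best tie for Left) and $\ny{x}{p}\ge\nym{x-1}{p}-1$ (Left bids $0$; (C) makes the $0$-tie Right's best reply), whose difference is at most $2$ by the lower half at $x-1$. With this in hand the $\ell=0$ case of (B) closes immediately and no separate increment statement is needed.
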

As demonstrated by Corollary~\ref{cor:U}, the proof of the theorem will follow, by establishing the properties in Definition~\ref{def:U} for unitary games. In Section~\ref{sec:propunitary}, we prove the properties in Definition~\ref{def:U} for unitary games.

After this, in Section~\ref{sec:bounded}, we prove a certain monotonicity result on increasing heap sizes (\thref{lem:rowsplitmon}). This will lead to the second main result of this paper, an eventual period 2 of equilibrium outcomes in case of unitary games.%(whenever $\S=\{1\}$, $\B=\{0,\ldots , \tb\}$).

In Figures~\ref{fig:even} and \ref{fig:odd}, we show the equilibrium outcomes for all sufficiently large heap sizes for the special cases of $\tb = 8$ and $9$ respectively. This eventual stabilization of the outcomes and bids is later formalized via an independent \emph{bidding automaton}, and we conjecture that these are generic examples.

\begin{figure} [ht!]
\begin{center}

\begin{tikzpicture} [scale = .9]

\foreach \x/\y in {1/0,2/0,3/0, 4/0, 5/0,1/1,2/1,3/1, 4/1, 5/1,6/0,7/0,8/0, 9/0, 6/1,7/1,8/1, 9/1}
 \draw[thick] (\x, \y) rectangle (\x+1, \y+1);

\draw[blue,thick, ->] (1.5, 0.3) ..controls (3.3, -1) and (8.5, -0.3) .. (9.5, 1.3);
\draw (5, -.7) node {``$\boldsymbol{-}$''};

\draw (1.5, 2.7) node {$(x,\hat 8)$};
\draw (2.5, 2.7) node {$(x,\hat 7)$};
\draw (3.5 , 2.7) node {$(x,\hat 6)$};
\draw (4.5, 2.7) node {$(x,\hat 5)$};
\draw (5.5, 2.7) node {$(x,\hat 4)$};
\draw (6.5, 2.7) node {$(x,\hat 3)$};
\draw (7.5, 2.7) node {$(x,\hat 2)$};
\draw (8.5 , 2.7) node {$(x,\hat 1)$};
\draw (9.5, 2.7) node {$(x,\hat 0)$};

\draw (0 , 1.5) node {$x$ even};

\draw (1.5, 0.5) node {$+5$};%8
\draw (2.5, 0.5) node {$+3$};%7
\draw (3.5 , 0.5) node {$+3$};%6
\draw (4.5, 0.5) node {$+1$};%5
\draw (5.5, 0.5) node {$+1$};%4
\draw (6.5, 0.5) node {$-1$};%3
\draw (7.5, 0.5) node {$-1$};%2
\draw (8.5 , 0.5) node {$-3$};%1
\draw (9.5, 0.5) node {$-3$};%0

\draw (0 , 0.5) node {$x$ odd};

\draw (1.5, 1.5) node {$+4$};%8
\draw (2.5, 1.5) node {$+4$};%7
\draw (3.5 , 1.5) node {$+2$};%6
\draw (4.5, 1.5) node {$+2$};%5
\draw (5.5, 1.5) node {$0$};%4
\draw (6.5, 1.5) node {$0$};%3
\draw (7.5, 1.5) node {$-2$};%2
\draw (8.5 , 1.5) node {$-2$};%1
\draw (9.5, 1.5) node {$-4$};%0

\end{tikzpicture}\caption{The value table for  $\tb = 8$ and sufficiently large heap sizes $x$, even and odd respectively. The blue arrow indicates a shift of sign in the representation of $(x,\hat 8)\rightarrow (x-1,8)$, $x$ odd; that is, both players bid 0, and Left wins the bid by tie-breaking. This bidding is equilibrium play, since $5=1-(-4)$.}\label{fig:even}

\vspace{.5 cm}

\begin{tikzpicture} [scale = .9]

\foreach \x/\y in {1/0,2/0,3/0, 4/0, 5/0,1/1,2/1,3/1, 4/1, 5/1,6/0,7/0,8/0, 9/0, 10/0,6/1,7/1,8/1, 9/1, 10/1}
 \draw[thick] (\x, \y) rectangle (\x+1, \y+1);

\draw[red, thick, ->] (1.55, 1.7) ..controls (4, 3.3) and (4.4, 1.3) .. (4.5, 0.75);

\draw (1.5, 2.8) node {$(x,\hat 9)$};
\draw (2.5, 2.8) node {$(x,\hat 8)$};
\draw (3.5 , 2.8) node {$(x,\hat 7)$};
\draw (4.5, 2.8) node {$(x,\hat 6)$};
\draw (5.5, 2.8) node {$(x,\hat 5)$};
\draw (6.5, 2.8) node {$(x,\hat 4)$};
\draw (7.5, 2.8) node {$(x,\hat 3)$};
\draw (8.5 , 2.8) node {$(x,\hat 2)$};
\draw (9.5, 2.8) node {$(x,\hat 1)$};
\draw (10.5, 2.8) node {$(x,\hat 0)$};

\draw (0 , 1.5) node {$x$ even};

\draw (1.5, 0.5) node {$+5$};%9
\draw (2.5, 0.5) node {$+5$};%8
\draw (3.5 , 0.5) node {$+3$};%7
\draw (4.5, 0.5) node {$+3$};%6
\draw (5.5, 0.5) node {$+1$};%5
\draw (6.5, 0.5) node {$-1$};%4
\draw (7.5, 0.5) node {$-1$};%3
\draw (8.5 , 0.5) node {$-3$};%2
\draw (9.5, 0.5) node {$-3$};%1
\draw (10.5, 0.5) node {$-5$};%0

\draw (0 , 0.5) node {$x$ odd};

\draw (1.5, 1.5) node {$+6$};%9
\draw (2.5, 1.5) node {$+4$};%8
\draw (3.5 , 1.5) node {$+4$};%7
\draw (4.5, 1.5) node {$+2$};%6
\draw (5.5, 1.5) node {$+2$};%5
\draw (6.5, 1.5) node {$0$};%4
\draw (7.5, 1.5) node {$-2$};%3
\draw (8.5 , 1.5) node {$-2$};%2
\draw (9.5, 1.5) node {$-4$};%1
\draw (10.5, 1.5) node {$-4$};%0

\end{tikzpicture}
\caption{The value table for  $\tb = 9$ and sufficiently large heap sizes $x$, even and odd respectively. The arrow indicates a Left win by the bid $\ell=3$, from  $(x,\hat 9)$, on a large even heap size $x$. This bid is not in equilibrium. In fact, the bid is dominated by Left bidding 1, which is in equilibrium, since $5+1=6$. Note also that bidding zero is in equilibrium. Indeed 0-ties are in equilibrium for all sufficiently large heap sizes.}
\label{fig:odd}

\end{center}
\end{figure}

\subsection{Property $\U$ for Unitary Games}\label{sec:propunitary}
% We have not specified the order of bidding or whether it should be simultaneously closed bids. The reason we did not emphasize this practical procedure is that it does not matter in theory. The next result will require three lemmas, and we will prove it by first establishing property $\U$. 

% \begin{thm}[Equilibrium Outcome]\thlabel{thm:unique}
% Consider a unitary ruleset. For any total budget $\tb$ and any position $(x,\hat p)$, $\nym{x}{p}=\o{x}{p}$ is its unique equilibrium.
% \end{thm}

% By \thref{thm:unique}, to prove this result, it suffices to establish property $\U$ of Definition~\ref{def:U}.

We begin by proving monotonicity of maximin values for fixed heap sizes, i.e. property \U (A), `budget monotonicity' in Lemma~\ref{lem:col_mon}.

\begin{lemma} [Budget Monotonicity, \U (A)]\thlabel{lem:col_mon}
For all games, $\nym{x}{p}\ge \nym{x}{\pi}$ if $p\ge\pi$. 
\end{lemma}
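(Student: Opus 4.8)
The plan is to prove Lemma~\ref{lem:col_mon} (property \U(A) for the maximin function $\hat\nu$) by induction on the heap size $x$, using the unitary maximin recursion of Definition~\ref{def:maximinunitary}. The base case $x=0$ is immediate since $\nym{0}{p}=0$ for all $p$, so budget monotonicity holds trivially. For the inductive step, I would fix $x>0$ and assume the statement for $x-1$, i.e. $\nym{x-1}{p}\ge\nym{x-1}{\pi}$ whenever $p\ge\pi$; note this also gives the companion monotonicity of $\nu_{(\cdot)}(x-1)$ for free, since in the symmetric/unitary setting $\ny{x-1}{p}=-\nym{x-1}{q}$ and raising $p$ lowers $q=\tb-p$.

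The core of the argument is a standard ``the maximizer can always copy a good bid'' comparison. Suppose $p\ge\pi$ and let $\ell^\ast\le\pi$ be an optimal Left bid witnessing $\nym{x}{\pi}$. Since $\ell^\ast\le\pi\le p$, the bid $\ell^\ast$ is also available to Left at budget $p$. I would argue that for every Right response $r$ the corresponding bracket term at budget $p$ is at least the matching term at budget $\pi$: in the cases $\ell^\ast>r$ and $\ell^\ast<r$ the relevant successor budget is larger (respectively $p-\ell^\ast\ge\pi-\ell^\ast$ in the Left-wins case, and $p+r\ge\pi+r$ in the Right-wins case), so induction on $x-1$ gives the inequality directly; in the tie case $\ell^\ast=r$, the term is $-\nym{x-1}{q+\ell^\ast}+1$ with $q=\tb-p\le\tb-\pi$, hence $q+\ell^\ast$ decreases and $-\nym{x-1}{q+\ell^\ast}$ increases, again by induction. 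One subtlety: at budget $p$ Right has a possibly \emph{smaller} bidding range $\{0,\dots,q\}$ than at budget $\pi$, so the inner $\min$ at budget $p$ is taken over a subset of Right's options; since restricting the minimizer's choices can only \emph{increase} the min, this works in our favor. Combining, $\nym{x}{p}\ge\min_r\{\text{terms at }p\}\ge\min_r\{\text{terms at }\pi\}=\nym{x}{\pi}$, where the last equality uses optimality of $\ell^\ast$.

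The main obstacle I anticipate is bookkeeping the case split cleanly — in particular making sure the tie-term comparison is stated in terms of $\hat\nu$ evaluated at the complementary budget and that the direction of monotonicity is tracked correctly through the sign flip, and handling the mismatch between Left's and Right's bidding ranges at the two budgets without circularity. Since Definition~\ref{def:maximinunitary} already folds the symmetric $\nu$ away via $-\nym{x-1}{q+\ell}$, I do not expect to need a simultaneous induction on both $\hat\nu$ and $\nu$; a single induction on $x$ with the reduced recursion suffices. I would keep the write-up at the level of ``the bid $\ell^\ast$ remains legal, every bracket term is monotone in the appropriate budget variable by the inductive hypothesis, and shrinking Right's range only helps,'' rather than grinding each of the three cases in full detail.
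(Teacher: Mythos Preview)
Your proposal is correct and follows essentially the same approach as the paper: induction on the heap size, with the key observation that Left's optimal bid $\ell^\ast$ at budget $\pi$ remains available at budget $p\ge\pi$, and the inductive hypothesis on row $x-1$ handles each of the three bracket cases. Your write-up is in fact more careful than the paper's, which leaves the tie-case sign flip and the shrinking of Right's bidding range implicit; your explicit treatment of both points is a genuine improvement in clarity.
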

\begin{proof}
The proof is by induction over the heap sizes, and note that $\nu_p(0,1)=0\ge 0=\nu_\pi(0,1)$. Suppose that the statement holds for all heap sizes smaller than $x>0$. In row $x$, Left has all the bidding options with budget $p$ as she has with budget $\pi$. Since, by induction, the values are non-decreasing in row $x-1$, if she can force a win (perhaps using a tie) with budget $\pi $, she is assured at least as good value in column $p$ as in column $\pi $. If she cannot force a win with budget $\pi $, the only difference is that with budget $p$ she could perhaps force a win, and again, the budget  monotonicity of the smaller heaps imply the result. 
\end{proof}
\begin{rem}\label{rem:col_mon}
By combining \eqref{eq:0-sum} with Lemma~ \ref{lem:col_mon}, we get  $\nu_p(x)\ge \nu_\pi(x)$ if $p\ge \pi$.
\end{rem}
Thus, by combining Lemma~\ref{lem:col_mon} and Remark~\ref{rem:col_mon}, we show that unitary games satisfies \thref{def:U} (A) property. However, we have some direct consequences of budget monotonicity. We may  condition maximin values on given bids. If the players tie $\ell$ from position $(x,\hat p)$, we write $\nym{x}{p}|_{T(\ell)}=-\nym{x-1}{q+\ell}+1$, and so on, for the result, given an $\ell$-tie at $\posm{x}{p}$ and subsequent maximin play.

In Lemma~\ref{lem:tie} prove the {\em tie monotonicity} property of for unitary games which shows that Left weakly prefers the tie $(\hat {\ell-1},\ell-1)$ over $(\hat \ell,\ell)$. This property will be useful to prove that unitary games satisfies $\U(B)$ property.

\begin{lemma}[Tie Monotonicity]\thlabel{lem:tie}
At any position, with $x,p,\ell>0$, the tie $(\hat{\ell},\ell)$ is maximin weakly worse for player Left,  than the tie $(\widehat{\ell-1},\ell-1)$. 
\end{lemma}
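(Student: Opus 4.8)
The plan is to reduce the statement directly to budget monotonicity (Lemma~\ref{lem:col_mon}), using the definition of the conditional maximin value after a tie. Recall that, since Left holds the marker, a tie at bid $\ell$ from $\posm{x}{p}$ means Left wins the turn: she removes one pebble (so the score increases by $1$), transfers $\ell$ dollars \emph{and} the marker to Right, and play continues from the position in which Right holds the marker, Right's budget is $q+\ell$, and the heap has size $x-1$. By the convention introduced just before the lemma (and the zero-sum identity \eqref{eq:0-sum}), the value of this branch under subsequent maximin play is exactly $\nym{x}{p}|_{T(\ell)} = -\nym{x-1}{q+\ell}+1$, and likewise $\nym{x}{p}|_{T(\ell-1)} = -\nym{x-1}{q+\ell-1}+1$.

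Thus the claim ``the tie $(\hat\ell,\ell)$ is maximin weakly worse for Left than the tie $(\widehat{\ell-1},\ell-1)$'' is, by definition of ``worse for the maximizer'', equivalent to the inequality $-\nym{x-1}{q+\ell}+1 \le -\nym{x-1}{q+\ell-1}+1$, i.e.\ to $\nym{x-1}{q+\ell} \ge \nym{x-1}{q+\ell-1}$. Both arguments are legal Left budgets for the heap $x-1$: since $1\le \ell\le p$ we have $0\le q+\ell-1 < q+\ell \le q+p = \tb$. Hence this inequality is immediate from Lemma~\ref{lem:col_mon} applied to heap size $x-1$ with the budget pair $q+\ell > q+\ell-1$, which gives the lemma.

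There is essentially no obstacle here: all the content is carried by the already-established budget monotonicity of the strictly smaller heap $x-1$, invoked as a black box, so no fresh induction is needed. The only point that requires care is the bookkeeping of the tie outcome — namely that winning a tie passes the marker to the opponent and shifts $\ell$ dollars, so that decreasing the tie bid from $\ell$ to $\ell-1$ amounts precisely to comparing two adjacent budget columns in row $x-1$ — together with the elementary check that these two columns lie within $\{0,\ldots,\tb\}$. (Equivalently, one may phrase the same argument on the $\nu$-side via \eqref{eq:0-sum}: the two branch values are $1+\ny{x-1}{p-\ell}$ and $1+\ny{x-1}{p-\ell+1}$, and the claim follows from Remark~\ref{rem:col_mon}.)
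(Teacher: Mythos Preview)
Your proof is correct and follows essentially the same route as the paper: both arguments compute the two conditional tie values $\nym{x}{p}|_{T(\ell)}=1-\nym{x-1}{q+\ell}$ and $\nym{x}{p}|_{T(\ell-1)}=1-\nym{x-1}{q+\ell-1}$ and then invoke budget monotonicity (\thref{lem:col_mon}) at heap $x-1$ to conclude. Your version is slightly more careful in that you verify $0\le q+\ell-1<q+\ell\le\tb$, but otherwise the two proofs are the same.
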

\begin{proof}
%By assumption, Left wins both. 
The proof follows by heap monotonicity  (\thref{lem:col_mon}). In particular, we have  $\nym{x}{q+\ell}\ge \nym{x}{q+\ell-1}$. On the other hand $\nym{x}{p}|_{T(\ell)}=1-\nym{x-1}{q+\ell}$ and $\nym{x}{p}|_{T(\ell-1)}=1-\nym{x-1}{q+\ell-1}$. Hence, $\nym{x}{p}|_{T(\ell)}\le\nym{x}{p}|_{T(\ell-1)}$.
\end{proof}
This innocent consequence of heap monotonicity implies that Right can assure the outcome of the $0$-tie, namely all entries in the upper right area weakly bounded by the main diagonal (in the game value matrix described in Example~\ref{ex:tb9}), are weakly smaller than the value at the 0-tie. (This property is implicit in the proof of \thref{thm:unique}.)

In the next Lemma~\ref{lem:marker}, we prove a natural bounds on game value of the equilibrium outcomes when Left has marker in term of the game value at equilibrium when Left does not have a marker. This result precisely prove that unitary games satisfies Definition~\ref{def:U} (B). As a bonus, we get a bound on how good the marker can be. 
\begin{lemma}[Marker Monotonicity, \U (B)]\thlabel{lem:marker}
Consider any unitary ruleset. Then, for all heap sizes $x$ and all Left budgets $p$, 
\begin{align}\label{eq:tie}
\ny{x}{p}\le \nym{x}{p}\le \ny{x}{p}+2.
\end{align}
\end{lemma}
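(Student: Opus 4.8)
The plan is to prove both inequalities in \eqref{eq:tie} simultaneously by induction on the heap size $x$, since the two sides feed into each other through the recursive definitions of $\hat\nu$ and $\nu$ (Definition~\ref{def:maximin}, in the symmetric unitary form). The base case $x=0$ is immediate: $\nym{0}{p}=\ny{0}{p}=0$, so both inequalities hold. For the inductive step, fix $x>0$ and assume that for every heap size $x'<x$ and every budget $p'$ we have $\ny{x'}{p'}\le\nym{x'}{p'}\le\ny{x'}{p'}+2$; we also have Budget Monotonicity (Lemma~\ref{lem:col_mon} together with Remark~\ref{rem:col_mon}) available for all heap sizes, and Tie Monotonicity (Lemma~\ref{lem:tie}) for heaps $<x$ as well as for heap $x$ in the $\hat\nu$-recursion (it was proved from Lemma~\ref{lem:col_mon} alone).

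For the left inequality $\ny{x}{p}\le\nym{x}{p}$: the only structural difference between the recursions for $\ny{x}{p}$ and $\nym{x}{p}$ is in the $\ell=r$ (tie) branch — when Left has the marker she wins the tie and moves to $(x-1,\widehat{q+\ell})$ contributing $-\nym{x-1}{q+\ell}+1$, whereas without the marker Right wins the tie and moves to $(x-1,\widehat{p+r})$ contributing $\nym{x-1}{p+r}-1$ (here $p+r = q+\ell$ when $\ell=r$, using $p+q=\tb$, so actually the relevant quantity is $\nym{x-1}{q+\ell}-1$ versus $-\nym{x-1}{q+\ell}+1$). I would argue branch by branch that, for every Left bid $\ell$ and every Right response $r$, the value available to the inner $\min$ in the $\hat\nu$-recursion is at least the corresponding value in the $\nu$-recursion: the $\ell>r$ and $\ell<r$ branches are literally identical, so only the $\ell=r$ branch needs work, and there the claim reduces to $-\nym{x-1}{q+\ell}+1 \ge \nym{x-1}{q+\ell}-1$, i.e. $\nym{x-1}{q+\ell}\le 1$. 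Since heap $x-1$ has a different parity from $x$, I expect a parity/magnitude bound on $\nym{x-1}{\cdot}$ to enter — but in fact I would not need such a bound, because the honest comparison is between $\hat\nu$-optimal play and $\nu$-optimal play with the \emph{same} bidding behaviour: whatever strategy realizes $\ny{x}{p}$, Left can copy it in the marker game, and the only place it could go wrong is that Right no longer wins ties; but Right ``not winning a tie'' can only help Left (this is exactly the content of why holding the marker is weakly good), which one makes rigorous by noting that a tie outcome $-\nym{x-1}{q+\ell}+1$ is what Left gets instead, and comparing via Budget Monotonicity plus the inductive hypothesis on heap $x-1$ applied at $(x-1,q+\ell)$. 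I would phrase this as: the marker game dominates the no-marker game move by move.

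For the right inequality $\nym{x}{p}\le\ny{x}{p}+2$: I would show that, starting from $\posm{x}{p}$, Left cannot do better than $2$ above what she gets from $\pos{x}{p}$ by analyzing which bid achieves $\nym{x}{p}$. If the optimal play from $\posm{x}{p}$ has Right winning the bid ($\ell<r$), the continuation $\o{x-1}{p+r}-1$ is available verbatim in the no-marker game too (Right without the marker still has all these bids), so $\nym{x}{p}\le\ny{x}{p}$, a fortiori. The only interesting case is when Left wins a tie at some $\ell$, giving $\nym{x}{p}= -\nym{x-1}{q+\ell}+1$; then in the no-marker game Left can bid $\ell+1$ (if $\ell+1\le p$) and win outright, moving to $(x-1,\widehat{p-\ell-1})$ for value $\nym{x-1}{p-\ell-1}+1$, and I would compare $-\nym{x-1}{q+\ell}+1$ with $\nym{x-1}{p-\ell-1}+1$; using the inductive hypothesis on heap $x-1$ at appropriate budgets, Budget Monotonicity, and the $0$-sum relation \eqref{eq:0-sum}, the gap is controlled by $2$. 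The boundary subcase $\ell=p$ (Left cannot raise) needs separate handling: then in the no-marker game Left instead ties at $\ell=p$, letting Right move to $(x-1,\widehat{q+p})=(x-1,\hat\tb)$ at no cost, for value $-$(something bounded), and one checks the deficit is again at most $2$ — intuitively this worst case is precisely where ``the marker is worth two'' (one dollar of swing from winning vs.\ losing a tie, doubled by the $\pm1$ score asymmetry of who collects the pebble). The main obstacle is bookkeeping this last boundary subcase cleanly: keeping the budget indices consistent (remembering $q = \tb-p$ shifts under moves) while invoking the inductive bound on the opposite-parity heap $x-1$ is where the argument is most error-prone, and I would want to double-check it against the tables in Example~\ref{ex:tb9} and Figures~\ref{fig:even}--\ref{fig:odd}.
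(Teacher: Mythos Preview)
Your overall framework --- simultaneous induction on both inequalities, so that each can feed the other at heap $x-1$ --- is correct and matches the paper. But your execution of the first inequality contains a concrete error. The claim ``$p+r=q+\ell$ when $\ell=r$, using $p+q=\tb$'' is false: when $\ell=r$ you get $p+r=p+\ell$, which equals $q+\ell$ only if $p=q$. So at a tie the marker-game value is $\ny{x-1}{p-\ell}+1$ while the no-marker value is $\nym{x-1}{p+\ell}-1$; these sit at \emph{different} budgets, and your reduction to ``$\nym{x-1}{q+\ell}\le 1$'' does not follow. Relatedly, the $\ell>r$ and $\ell<r$ branches are not literally identical between the $\hat\nu$- and $\nu$-recursions (one carries $\hat\nu_{p\pm\cdot}(x-1)$, the other $\nu_{p\pm\cdot}(x-1)$); induction does give the right inequality there, but it is not for free. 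The branch-by-branch comparison then genuinely fails at the tie, and your fallback ``copy-strategy'' intuition (``Right not winning a tie can only help Left'') is not innocent: when Left wins the tie instead of Right, the subsequent budget split changes (Left pays rather than receives), so you cannot simply inherit the bound.

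The paper avoids all of this with a much shorter route: it fixes the bid at $0$ on the side being bounded. In the marker game the $0$-tie contributes $\ny{x-1}{p}+1$; in the no-marker game the $0$-tie contributes $\nym{x-1}{p}-1$. Then $\nym{x}{p}\ge\ny{x}{p}$ at heap $x$ is obtained from the \emph{second} inequality $\nym{x-1}{p}\le\ny{x-1}{p}+2$ at heap $x-1$, and conversely $\nym{x}{p}\le\ny{x}{p}+2$ from the \emph{first} inequality at $x-1$; Tie Monotonicity (\thref{lem:tie}) is what justifies focusing on the $0$-tie. Your right-inequality plan (Left raising by one in the no-marker game) is heading somewhere workable, but the paper's $0$-bid cross-reduction is cleaner and sidesteps exactly the boundary bookkeeping you flagged as error-prone.
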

\begin{proof}
The proof is by induction on the inequalities~\eqref{eq:tie} (and they clearly hold for $x=0$). 
Suppose that there is a position such that  $\ny{x}{p}>\nym{x}{p}$, i.e. Left does not want to win a tie at $\posm{x}{p}$. By \thref{lem:tie}, we assume that Left bids 0. Then, if Right accepts the 0-tie, induction gives a contradiction, namely $$1+\ny{x-1}{p}<\nym{x-1}{p}-1,$$
which is equivalent with $$2+\ny{x-1}{p}<\nym{x-1}{p}.$$
Suppose next that $2+\ny{x}{p}<\nym{x}{p}$, which means that a tie-win is worth more than 2 points. Suppose that Left bids 0. Then $2+\nym{x-1}{p}-1<\ny{x-1}{p}+1$, which contradicts the first inequality in \eqref{eq:tie}. 
\end{proof}

The next lemma is a consequence of Marker Monocity.%, because when Left wins an $\ell$-tie, then she obtains the outcome $1-\o{x-1}{q+\ell}$, if both players henceforth play optimally. 

\begin{lemma}\thlabel{lem:markdom}
For all heap sizes $x$, all Left budgets $p$ and all Left bids $0\le\ell\le p$, $\nym{x}{p}\ge-\nym{x}{q+\ell}$. 
\end{lemma}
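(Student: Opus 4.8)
The plan is to obtain this as a two-line corollary of the monotonicity results already in hand, with no fresh induction. The first step is to recognize that the right-hand side $-\nym{x}{q+\ell}$ is, by the zero-sum identity~\eqref{eq:0-sum}, nothing but $\ny{x}{p-\ell}$: setting $p' = p-\ell$ we have $\tb - p' = q + \ell$, and this index is legal (at most $\tb$, and nonnegative) precisely because $0 \le \ell \le p$, so $\nym{x}{q+\ell}$ and hence $\ny{x}{p-\ell} = -\nym{x}{q+\ell}$ are well-defined. This reduces the claim to the single inequality $\nym{x}{p} \ge \ny{x}{p-\ell}$.

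The second step is to chain two inequalities that are already available. Budget monotonicity for the no-marker function (Remark~\ref{rem:col_mon}, a consequence of \thref{lem:col_mon}) gives $\ny{x}{p-\ell} \le \ny{x}{p}$, since $p - \ell \le p$. Marker monotonicity, namely the first inequality of~\eqref{eq:tie} in \thref{lem:marker}, gives $\ny{x}{p} \le \nym{x}{p}$. Concatenating, $-\nym{x}{q+\ell} = \ny{x}{p-\ell} \le \ny{x}{p} \le \nym{x}{p}$, which is exactly the assertion.

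I do not expect a genuine obstacle here; the only point requiring any care is the legality of the budget index $q+\ell$, which is guaranteed by the hypothesis $\ell \le p$ (if $\ell$ could exceed $p$, the statement would not even typecheck). As an alternative to invoking~\eqref{eq:0-sum}, one could argue directly at heap size $x$ by comparing Left's bid $\ell$ at $\posm{x}{p}$ with the mirror play in which Right has the extra budget and the marker, but this merely re-derives the same budget- and marker-monotonicity steps, so the short route via the zero-sum identity is the one I would write down.
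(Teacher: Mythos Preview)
Your proof is correct and follows essentially the same route as the paper: both chain marker monotonicity (\thref{lem:marker}), the zero-sum identity~\eqref{eq:0-sum}, and budget monotonicity (\thref{lem:col_mon}). The only cosmetic difference is the order in which these are applied---the paper writes $\nym{x}{p}\ge \ny{x}{p}=-\nym{x}{q}\ge -\nym{x}{q+\ell}$, applying budget monotonicity on the $\hat\nu$ side, whereas you first rewrite $-\nym{x}{q+\ell}=\ny{x}{p-\ell}$ and apply budget monotonicity on the $\nu$ side; via~\eqref{eq:0-sum} these are literally the same inequality.
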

\begin{proof}
We have that, for all $x$, and all $p$, 
\begin{align*}
\hat\nu(x,p)&\ge \nu(x,p)=-\nym{x}{q}\ge -\nym{x}{q+\ell},
\end{align*}
by \thref{lem:marker}, because \thref{lem:col_mon} gives $\nym{x}{q+\ell}\ge\nym{x}{q}$. 
\end{proof}

In the next Lemma~\ref{lem:markworth}, we show that unitary games (rules) satisfies $\U(C)$. That is, the marker is never worth more than \$1. This will complete our argument for the proof of the main Theorem~\ref{thm:unitary}

\begin{lemma}[Marker Worth, \U (C)]\thlabel{lem:markworth}
Consider a unitary ruleset. Then, for all heap sizes $x$ and all Left budgets $p$,
\begin{equation*}
    \nym{x}{p}\leq \ny{x}{p+1}
\end{equation*}
\end{lemma}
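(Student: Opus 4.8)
The plan is to prove the inequality $\nym{x}{p}\le\ny{x}{p+1}$ by induction on the heap size $x$, exactly as the earlier monotonicity lemmas were proved. The base case $x=0$ is trivial since all terminal values are $0$. So assume the statement holds for all heap sizes smaller than $x>0$, together with the already-established facts: budget monotonicity (\thref{lem:col_mon}), the zero-sum relation \eqref{eq:0-sum}, tie monotonicity (\thref{lem:tie}), and marker monotonicity (\thref{lem:marker}).

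The key idea is to compare Left's optimal play from $\posm{x}{p}$ against a suitable strategy from $\pos{x}{p+1}$ (the position where Left has $p+1$ dollars but Right holds the marker). First I would consider Left's maximin bid $\ell^\*$ at $\posm{x}{p}$. By \thref{lem:tie} (tie monotonicity) and the reduced form of Corollary~\ref{cor:U}, we may assume Left's equilibrium is realized either by a tie at some $\ell^\*$ or by letting Right win at some $r^\*>\ell^\*$. In the position $\pos{x}{p+1}$, Right has the marker, so a bid of $\ell^\*+1$ by Left now \emph{strictly} wins (it is no longer a tie — Left does not have the marker but outbids), moving to $\nym{x-1}{p+1-(\ell^\*+1)} + 1 = \nym{x-1}{p-\ell^\*}+1$. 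Compare this to what a tie at $\ell^\*$ gave Left at $\posm{x}{p}$, namely $-\nym{x-1}{q+\ell^\*}+1$. So the crux reduces to showing $-\nym{x-1}{q+\ell^\*}\le \nym{x-1}{p-\ell^\*}$, which is precisely \thref{lem:markdom} applied at heap size $x-1$ (note $q+\ell^\* = \tb-(p-\ell^\*)$, so this is the statement $\nym{x-1}{p-\ell^\*}\ge -\nym{x-1}{q+\ell^\*}$). For the case where Left's equilibrium at $\posm{x}{p}$ has Right winning at $r^\*>\ell^\*$, I would check that Right cannot do better from $\pos{x}{p+1}$: if Right wins by $r$ there, Left receives $\nym{x-1}{p+1+r}-1 \ge \nym{x-1}{p+r}-1$ by budget monotonicity, which weakly dominates the corresponding value at $\posm{x}{p}$; and Right's extra option of tie-winning (he has the marker) at some bid $r$ leads to $\ny{x-1}{p+1+r}-1$ — here I would use \thref{lem:marker} and induction to bound this below by the value Left could secure at $\posm{x}{p}$. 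One must also handle Left's option to simply not raise her bid, in which case a $0$-tie analysis combined with the inductive hypothesis $\nym{x-1}{p}\le\ny{x-1}{p+1}$ closes the argument.

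The main obstacle I anticipate is the careful bookkeeping of \emph{who holds the marker} in the various sub-positions reached after one move, since the extra dollar in $\pos{x}{p+1}$ shifts both the budget index and the marker status, and these interact: winning a tie at $\posm{x}{p}$ passes the marker \emph{away}, whereas strictly outbidding from $\pos{x}{p+1}$ keeps Right markerless-status symmetric — one has to make sure the post-move positions being compared really do satisfy the inductive hypothesis in the right direction (i.e. that we are comparing a $\hat\nu$ value with a $\hat\nu$ value, or invoking \thref{lem:marker} to bridge a $\nu$ and a $\hat\nu$). A secondary subtlety is ensuring that restricting to Left's ``tie or let Right win'' equilibrium form (Corollary~\ref{cor:U}) is legitimate here, given that at this point in the paper property $\U$ has not yet been fully established for unitary games — but since Lemmas \ref{lem:col_mon}, \ref{lem:tie}, \ref{lem:marker} give us (A) and (B) and the relevant simplifications, the reduction used in the proof of \thref{thm:unique} applies, so this should be safe to invoke.
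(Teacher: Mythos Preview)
Your overall plan matches the paper's—induction on $x$, case-split on how $\nym{x}{p}$ is realized, then exhibit a Left bid at $\pos{x}{p+1}$ and bound the value—but the execution has two genuine gaps.

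First, the marker bookkeeping you flagged as the ``main obstacle'' does indeed bite. At $\pos{x}{p+1}$ Right holds the marker, so if Left wins \emph{strictly} by bidding $\ell^*+1$, Right \emph{keeps} the marker and the value is $\ny{x-1}{p-\ell^*}+1$, not $\nym{x-1}{p-\ell^*}+1$. (By \eqref{eq:0-sum} this equals $-\nym{x-1}{q+\ell^*}+1$ on the nose, so you don't need \thref{lem:markdom} at all here.) The same slip recurs in your Right-wins case: if Right wins strictly at $\pos{x}{p+1}$ the value is $\ny{x-1}{p+1+r}-1$, and budget monotonicity alone no longer bounds this by $\nym{x-1}{p+r}-1$; you need precisely the inductive hypothesis $\nym{x-1}{p+r}\le\ny{x-1}{p+r+1}$. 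Conversely, a Right \emph{tie}-win at $\pos{x}{p+1}$ gives Left the marker, so that value is $\nym{x-1}{p+1+r}-1$, the opposite of what you wrote.

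Second, and more seriously, the sentence ``a bid of $\ell^*+1$ by Left now strictly wins'' is unjustified: Right still has $q-1$ dollars and may tie at $\ell^*+1$ or overbid. To get a lower bound on $\ny{x}{p+1}$ you must show \emph{every} Right reply to Left's proposed bid yields at least $\nym{x}{p}$. The paper's Case~1 does exactly this: from the equilibrium at $\posm{x}{p}$ one extracts that Right preferred the $\ell^*$-tie to winning at $\ell^*+1$, i.e.\ $\nym{x-1}{p+\ell^*+1}-1\ge 1+\ny{x-1}{p-\ell^*}$, and then chains this with budget monotonicity and the inductive hypothesis to handle Right tying at $\ell^*+1$ and Right overbidding. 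Without that step your lower bound is not established. Finally, your worry about circularity via Corollary~\ref{cor:U} is legitimate: the reduction to ``tie or Right wins'' uses property (C) itself. The paper avoids this simply by treating ``Left wins strictly at $\posm{x}{p}$'' as a separate third case.
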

\begin{proof}
We prove this theorem by an induction argument on heaps of size $x$, and analyze the values for all budget partitions. Consider the base case when $x = 1$. If Left can force a win with budget $p$ and the marker, then $p\ge q$, and hence $p+1>q-1$ gives that $\nym{1}{p} = 1$ implies $\ny{1}{p+1} = 1$. Therefore $\nym{1}{p}\leq \ny{1}{p+1}$.

Suppose next that the theorem is true for all heap sizes smaller than $x$. To complete the induction, we need to show that $\nym{x}{p}\leq \ny{x}{p+1}$.  We will prove the induction step for three different cases of maximin $\nym{x}{p}$:
\begin{itemize}
    \item[]\textbf{Case 1.} Left wins with tie at $\posm{x}{p}$  by bidding $\ell$;
    \item[]\textbf{Case 2.} Left wins without tie at $\posm{x}{p}$ by bidding $\ell$;
    \item[]\textbf{Case 3.} Right wins  at $\posm{x}{p}$ by bidding $r$.
\end{itemize}

 Consider {\bf Case 1}. Right maximin weakly prefers losing an item at tie $\ell$ rather than winning it by bidding $\ell+1$. Therefore, 
\begin{equation}
    -1 + \nym{x-1}{p+\ell+1} \geq 1+ \ny{x-1}{p-\ell} \label{eqn:mqcase1}
\end{equation}
%In order to prove \textbf{Case-1 }
There are two subcases. 

The first subcase is whenever $0\leq \ell \leq q-1$. We claim that Right does not maximin prefer winning if Left bids $\ell+1$ at position $\pos{x}{p+1}$. 
If Right wins at $\pos{x}{p+1}$ by bidding $\ell+1$ with marker, then
\begin{equation*}
    -1 + \nym{x-1}{p +\ell +2} \geq -1 + \nym{x-1}{p+\ell+1} \geq 1 + \ny{x-1}{p-\ell} 
\end{equation*}
The first inequality holds by heap monotonicity and the second follows from inequality~\eqref{eqn:mqcase1}. %This shows that at $\pos{x}{p+1}$, Right would prefer losing, if Left bids $\ell+1$, over winning by bidding $\ell+1$ with marker. 

Now assume that Right bids $r> \ell+1$ at $\pos{x}{p+1}$, and wins. Then,  
\begin{equation*}
    -1 + \ny{x-1}{p +r+1} \geq -1 + \ny{x-1}{p+\ell+2} \geq -1 + \nym{x-1}{p+\ell+1} \geq 1 + \ny{x-1}{p-\ell}
\end{equation*}
The first inequality holds by heap monotonicity, the second by induction and the third follows from inequality~\eqref{eqn:mqcase1}. %This shows that at $o(x,p+1,0)$, Right would prefer losing if Left bids $\ell+1$ over winning by bidding $b>\ell+1$. This implies, by bidding $\ell +1$, Left always wins. 

Altogether this implies that, for this subcase, $\ny{x}{p+1} \geq 1 + \ny{x-1}{p - \ell} = \nym{x}{p}$. 

Consider the other subcase, when $\ell \ge q$. In this case, Left can secure a win by bidding $\ell$ at $\pos{x}{p+1}$, as Right cannot over bid. This implies that $$\ny{x}{p+1} \geq 1 + \ny{x-1}{p - \ell +1} \geq 1 + \ny{x-1}{p - \ell}  = \nym{x}{p}$$ (by heap monotonicity). This completes the proof for \textbf{Case 1}.\\

We now prove \textbf{Case 2}. Left wins the bid, without a tie, by bidding $\ell $, and therefore, at $\posm{x}{p}$, Right maximin weakly prefers losing the bid. We get %at $\geq \ell$ (monotonicity) therefore,
\begin{equation}
  -1 + \nym{x-1}{p+\ell +1} \geq 1 + \nym{x-1}{p-\ell} =  \nym{x}{p}  \label{eqn:mqcase2}
\end{equation}
%We consider subcases for \textbf{Case-2} similar to the \textbf{Case-1}. 
Consider first the subcase, $\ell \leq q -1$. We claim that, if Left bids $\ell $, then she maximin wins at position $\pos{x}{p+1}$. If not, then either Right maximin wins by bidding $\ell$ with a marker or by bidding $r>\ell $. Now, if Right bids $\ell $ and wins using a marker, we study the inequalities
\begin{equation*}
    -1 + \nym{x-1}{p+\ell+1} \geq 1 + \nym{x-1}{p-\ell} \geq 1 + \ny{x-1}{p-\ell}
\end{equation*}
The first inequality holds by~\eqref{eqn:mqcase2} and the second inequality follows from marker monotonicity. Hence, at $\ny{x}{p+1}$, Right would prefer losing if Left bids $\ell$ over winning by bidding $\ell$ with marker. 

Now, if  Right bids $r> \ell$ and maximin wins the bid, we study the inequalities
\begin{align*}
    -1 + \ny{x-1}{p +r +1} &\geq -1 + \nym{x-1}{p+\ell+1}\\
    &\geq 1 + \nym{x-1}{p-\ell}\\
    &\geq  1 + \ny{x-1}{p-\ell}
\end{align*}
The first inequality holds by induction, the second follows from inequality~\eqref{eqn:mqcase2}, and the third follows from marker monotonicity. Hence, at $\ny{x}{p+1}$, Right would prefer losing if Left bids $\ell$ over winning by bidding $r>\ell$. 

The subcase $\ell\ge q$ is similar to {\bf Case 1}. Altogether,
\begin{equation*}
    \ny{x}{p+1} \geq 1 + \ny{x-1}{p - \ell +1} \geq 1 + \nym{x-1}{p - \ell} = \nym{x}{p},
\end{equation*}
which completes the proof of \textbf{Case 2}.\\

We now prove \textbf{Case 3}. As Right maximin wins at $\posm{x}{p}$ by bidding $r$, we get
\begin{equation}
   \nym{x}{p} = \nym{x-1}{p+r} -1 \leq \ny{x-1}{p-r+1} + 1 \label{eqn:mqcase3}
\end{equation}
In other words,  at $\posm{x}{p}$, Right prefers winning, by bidding $r$, over losing for marker, by bidding $r-1$. 

%Thus, it suffices to prove that $\ny{x-1}{p-r+1} + 1 \leq \ny{x}{p+1}$. That is, Right is happy to let Left maximin win, by Left bidding $r$, at position $\pos{x}{p+1}$. 
In order to prove this case, we claim that if Left bids $\ell=r-1$ at $(x,p+1)$, then Right prefers winning by a tie. This follows by the inequalities 
\begin{equation*}
    1 + \ny{x-1}{p-r+2} \geq 1 + \ny{x-1}{p-r+1} \geq \nym{x-1}{p+r} - 1 
\end{equation*}
The first inequality holds by budget monotonicity and the second follows from inequality~\eqref{eqn:mqcase3}. %This shows that Right prefers winning by a tie at $(x,p+1)$ if Left bids $\ell =r-1$. 
Hence, playing from $(x,p+1)$, Left can force at least the value where Right ties, i.e.  $\ny{x}{p+1}\geq \nym{x-1}{p+r} - 1 = \nym{x}{p}$. This completes the proof of \textbf{Case 3}.\\

Thus, the lemma holds.
%To prove this case, we consider two subcases. 
%
%First, if $r\le p$, then Left can 
%
%First consider $r =1$. We claim that Right will maximin win if Left bids $\ell = 1$ at $(x,p+1)$. Study the inequalities
%\begin{equation*}
%    1 + \ny{x-1}{p} \geq 1 + \nym{x-1}{p-1} \geq 1 + \ny{x-1}{p-1} + 1 \geq \nym{x-1}{p+2} -1
%\end{equation*}
%The first inequality holds by the induction hypothesis, the second by marker monotonicity, and the third follows from inequality~\eqref{eqn:mqcase3}. Thus, if $r=1$,  $\ny{x}{p+1} \geq \nym{x-1}{p+2} -1  \geq \nym{x-1}{p+1} - 1 = \nym{x}{p}$. 
%
%Now consider the subcase when $r>1$. In this case, we claim if $\ell = r-1$ then Right wins. If not then,
%\begin{equation*}
%1 + o(x-1,p+2-r,0) \geq 1 + o(x-1,p+1+r,1) \geq 1 + o(x-1,p+1-r,0) + 1 \geq o(x-1,p+r,1) -1   
%\end{equation*}  
%The first inequality holds by induction hypotheses, second by marker monotonicity and third follows from Equation~\ref{eqn:mqcase3} for $r=r-1$. This shows that $o(x,p+1,0) \geq o(x-1,p+r,1) -1  = o(x,p,1)$. This completes the proof for \textbf{Case-3}. 
\end{proof}

%
%
%
%\begin{lem}\thlabel{lem:tiewin}
%Consider a unitary ruleset. For all $x\ge 1, p$, $$\o{x}{p}|_{T(\ell)}\ge \o{x}{p}|_{L(\ell+1)}.$$
%\end{lem}
%\begin{proof}
%If $x=1$, then both conditional outcomes equal 1. For $x>1$, we use heap monotonicity. Suppose, for a contradiction, that Left deviates from a winning $\ell$-tie. In that case, Right can increase to an $\ell+1$-tie, unless $\ell=q$. This will be no better for Left, by \thref{lem:col_mon}, so assume the remaining case that $\ell=q$. In case $p\le q$ we are done, so assume that $p>q$.
%\end{proof}

We have established the three properties of Definition~\ref{def:U}. Therefore we conclude with the proof of \thref{thm:unitary}.

\begin{proof}[Proof of \thref{thm:unitary}]
Combine \thref{lem:col_mon} with \thref{lem:markdom} and \thref{lem:markworth}. Thus Definition~\ref{def:U} applies: unitary games are in $\U$, and we may apply \thref{cor:U}.
\end{proof}

From now onward we use outcome, i.e. $o$-notation,  instead of $\hat \nu$, and in proofs, we may fix the bid of either player to prove inequalities as required by a given context. By fixing the bid of player Left (Right), we may obtain a lower (upper) bound of the outcome, corresponding to maximin (minimax) evaluation. 

\subsection{A Sign Border and Bounded Game Values}\label{sec:bounded}
In the last subsection, we have already shown that, for each position, unitary games exhibit a unique equilibrium. Now, in this subsection, we discuss the bounds on game values for given $\tb$, $x$ and budget partition. First, we establish the sign border of the game value for a given total budget of $\tb$. In Lemma~\ref{lem:signborder}, we show that if Left has more money than Right, i.e.$p \geq \ceil{\tb/2}$ then the game value $o_p(x)\geq 0$. Moreover, we show that the sign of the outcome is sensitive to the sign border $\ceil{\tb/2}$

\begin{lemma}[Sign Border]\thlabel{lem:signborder} 
For a given $\tb$ and all heap sizes $x$, $2p\ge\tb$ implies $\o{x}{p}\ge 0$, and  $2p<\tb$ implies $\o{x}{p}\le 0$. 
\end{lemma}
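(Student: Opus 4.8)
The plan is to obtain the sign border with no induction of its own, as a direct consequence of the three equilibrium properties already established for unitary games: Budget Monotonicity (\thref{lem:col_mon}), Marker Monotonicity (\thref{lem:marker}), and Marker Worth (\thref{lem:markworth}), combined with the zero-sum identity \eqref{eq:0-sum}. The first step is to rephrase everything in the outcome notation. By \eqref{eq:0-sum} we have $\ny{x}{p}=-\nym{x}{\tb-p}=-\o{x}{\tb-p}$, so the three lemmas become, for fixed $\tb$ and $x$, statements about the single sequence $p\mapsto\o{x}{p}$: (a) it is non-decreasing in $p$; (b) $\o{x}{p}+\o{x}{\tb-p}\ge 0$; and (c) $\o{x}{p}+\o{x}{\tb-p-1}\le 0$, where (c) is exactly the asymmetric form $\nym{x}{p}\le\ny{x}{p+1}$ of Marker Worth rewritten with \eqref{eq:0-sum}.

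Next I would split on the sign of $2p-\tb$. If $2p\ge\tb$, then $p\ge\tb-p$, so by (a) we have $\o{x}{p}\ge\o{x}{\tb-p}$; adding $\o{x}{p}$ to both sides and invoking (b) gives $2\,\o{x}{p}\ge\o{x}{p}+\o{x}{\tb-p}\ge 0$. If instead $2p<\tb$, then over the integers $p\le\tb-p-1$ and $0\le\tb-p-1\le\tb$, so by (a) we have $\o{x}{p}\le\o{x}{\tb-p-1}$; adding $\o{x}{p}$ and invoking (c) gives $2\,\o{x}{p}\le\o{x}{p}+\o{x}{\tb-p-1}\le 0$. The only boundary value, $p=\tb$, falls under the first case and uses only (b), so the degenerate index $\tb-p-1=-1$ never arises.

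The step I expect to be the crux—and where a first attempt stalls—is the inequality $\o{x}{p}\le 0$ for $2p<\tb$. Marker Monotonicity on its own only yields $\o{x}{p}+\o{x}{\tb-p}\le 2$, hence $\o{x}{p}\le 1$, which is not sharp (for instance on odd heaps, where by \thref{lem:pari} the outcome is an odd integer, so $\o{x}{p}\le 1$ does not force $\o{x}{p}\le 0$). The resolution is to use instead the sharper ``the marker is worth at most a dollar'' bound \thref{lem:markworth} in its one-step-shifted form (c), and to pair it with the correspondingly shifted comparison $p\le\tb-p-1$ coming from Budget Monotonicity; it is precisely this shift by one that lowers the bound from $1$ to $0$. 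Thus the real content is already packaged in the earlier lemmas, and the proof reduces to selecting the right pair of them in each of the two regimes.
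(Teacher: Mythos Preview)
Your proof is correct. The first case ($2p\ge\tb$) is exactly the paper's argument: combine the zero-sum identity with the lower inequality of \thref{lem:marker} to get $\o{x}{p}+\o{x}{\tb-p}\ge 0$, then use budget monotonicity.

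The second case ($2p<\tb$) is where you diverge, and your route is cleaner. The paper does \emph{not} invoke \thref{lem:markworth} here; instead it uses only the upper inequality of \thref{lem:marker}, which via \eqref{eq:0-sum} gives $\o{x}{p}+\o{x}{\tb-p}\le 2$ and hence merely $\o{x}{p}\le 1$. For even $x$ parity (\thref{lem:pari}) then forces $\o{x}{p}\le 0$, but for odd $x$ the paper has to rule out $\o{x}{p}=1$ by a separate ad hoc analysis of optimal bids near the border. Your use of Marker Worth in its shifted form $\o{x}{p}+\o{x}{\tb-p-1}\le 0$, paired with the shifted comparison $p\le\tb-p-1$, eliminates that case split entirely and makes the argument uniform in the parity of $x$. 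The trade-off is that you rely on an additional lemma (\thref{lem:markworth}) whose own proof is somewhat longer, whereas the paper keeps the dependency to \thref{lem:marker} alone at the cost of the extra endgame argument.
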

\begin{proof}
We prove that, if $p\ge \ceil{\tb/2}$ then $\o{x}{p}\ge 0$. If $p < \ceil{\tb/2}$ then $\o{x}{p}\le 0$.%

By the 0-sum property \eqref{eq:0-sum} combined with the first inequality in  Lemma~\ref{lem:marker}, we have $\o{x}{p}\ge -\o{x}{q}$. Adding $\o{x}{p}$ on both sides gives,
\begin{equation*}
    2\o{x}{p} \ge  \o{x}{p} - \o{x}{q}
\end{equation*}

As $ p \geq q $, using heap monotonicity lemma,  $2\o{x}{p} \ge  0$, which proves $\o{x}{p} \ge  0$.

For the second part when $p < \tb/2$, we use the second inequality in  Lemma~\ref{lem:marker} combined with the 0-sum property \eqref{eq:0-sum}, we have $\o{x}{p} \leq -\o{x}{q} + 2$.  Adding $\o{x}{p}$ on both sides, we get

\begin{equation*}
    2\o{x}{p} \leq  \o{x}{p} - \o{x}{q} +2.
\end{equation*}

Hence, 
\begin{equation*}
    \o{x}{p} \leq  (\o{x}{p} - \o{x}{q})/2 +1.
\end{equation*}
If $x$ is even, this implies that $\o{x}{p}\le 0$, by heap monotonicity and  Lemma~\ref{lem:pari}; if $x$ is odd then $\o{x}{p}\le 1$.

Thus we may assume that, for some $p<\tb/2$, $\o{x}{p}=1$. If $\o{x-1}{p-1}=0$, then if Left wins by bidding 1, Right increases to either get a tie by bidding 1 or to win by bidding 2. More precisely, if $\o{x-1}{\tb/2+2}=0$ then he 1-ties, and otherwise, he bids 2. In either case this gives $\o{x}{p}=-1$. 

If Left 0-ties at $(x,p)$ and $\o{x-1}{\lceil\tb/2\rceil}=0$, then Right bids 1 and obtains $\o{x}{p}\le -1$, with equality if $p= \lfloor\tb/2\rfloor$.
\end{proof}

% Note that the properties hold for heap size $x=0$. And suppose the heap size  $x>0$. 
% For the first part, suppose that there is a winning equilibrium bid $\ell $ for Left from $P=(p, \hat q)$ to $(p-\ell, \widehat{q+\ell})$. Note that, if Left wins this bid instead from $(\hat p, q)$ then this results in the budget partition $ (\widehat{p-\ell}, q+\ell)$, and by induction, the outcomes of the two positions $(x, P)$ and $(x,\hat P)$ satisfy the statement, unless Right has a beneficial deviation on $\hat P$.

% Suppose that Right has a beneficial deviation from Left winning this bid instead from $(\hat p, q)$, say to $(\widehat{p+r}, q-r)$, with $r>\ell$. By row monotonicinity,      

% By induction we know that $(\widehat{p+r}, q-r)$ is at least as good for Left as is $(p+r, \widehat{q-r})$. 

% But this contradicts the assumption that Right has a beneficial deviation. Thus, the first part holds: for all heap sizes $x$, and all budget partitions, $o(x,p,1)\ge o(x,p,0)$.

% For the second part, notice, by the 0-sum property \eqref{eq:0-sum} combined with the first part of the lemma, we have $\o{x}{p}\ge -\o{x}{q}$. In particular, with $p = \ceil{\tb/2}$, then in case of \tb\ even we get $2\o{x}{p}\ge 0$, and in case of odd $\tb$, we get  $\o{x}{\ceil{\tb/2}}+\o{x}{\floor{\tb/2}}\ge 0$. In either case, the result follows by heap monotonicity, \thref{lem:col_mon}. The third part is similar.

%Note that both inequalities in the lemma are week inequalities, and Figures~\ref{fig:odd} and \ref{fig:even} display the intuitive reasons for this. 

\begin{obs}
Note that if the heap size $x$ is odd, then, by \thref{lem:pari}, the inequalities in \thref{lem:signborder} are strict.
\end{obs}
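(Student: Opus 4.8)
The plan is to derive the Observation as an immediate strengthening of \thref{lem:signborder} by feeding in the parity constraint of \thref{lem:pari}. First I would record that, for odd $x$, the equilibrium value $\o{x}{p}$ is an odd integer, and in particular $\o{x}{p}\neq 0$. One way to see this is that, by \thref{thm:unitary}, $\o{x}{p}$ is the utility of an actual terminating play sequence from a heap of size $x$ (the sequence realized by maximin play), so \thref{lem:pari} applies directly. Alternatively, unwinding the recursion \eqref{eq:unitaryequil} gives, by induction on $x$, that $\o{x}{p}\equiv x\pmod 2$: the base case $\o{0}{p}=0$ is even, and each entry $\o{x}{p}$ is obtained from an entry of row $x-1$ by adding or subtracting $1$, so the parity alternates with the heap level and hence matches the parity of $x$.

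Next I would combine this with the two non-strict conclusions of \thref{lem:signborder}. When $2p\ge\tb$ we already have $\o{x}{p}\ge 0$; an odd (hence nonzero) nonnegative integer is positive, so $\o{x}{p}>0$. Symmetrically, when $2p<\tb$ we have $\o{x}{p}\le 0$; an odd nonpositive integer is negative, so $\o{x}{p}<0$. This yields exactly the strict inequalities asserted.

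The only step needing a word of care is the parity claim for $\o{x}{p}$ itself, since \thref{lem:pari} is stated for ``any sequence of play'': one should either observe that the equilibrium outcome is attained along some such sequence, or simply invoke the short inductive argument above. Beyond that there is no obstacle — the Observation is essentially a one-line corollary of \thref{lem:pari} and \thref{lem:signborder}.
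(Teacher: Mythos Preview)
Your proposal is correct and matches the paper's approach: the Observation is stated without a separate proof, simply citing \thref{lem:pari}, and your argument makes explicit exactly the intended reasoning (odd $x$ forces $\o{x}{p}$ to be odd, hence nonzero, turning the weak inequalities of \thref{lem:signborder} into strict ones). The extra care you take in justifying that the parity conclusion applies to the equilibrium value itself is appropriate and goes slightly beyond what the paper spells out.
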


The previous Lemma~\ref{lem:signborder} is extensively useful to obtain an upper and lower bound on the game value. In the next Lemma~\ref{lem:bound}, by adapting \thref{lem:signborder}, we prove an upper and a lower bound of the game values.

\begin{lemma}[Bounded Outcome]\thlabel{lem:bound}
For all $x$ and all $p$, $-\floor{\tb/2}\le \o{x}{p}\le \ceil{\tb/2}+1$.
\end{lemma}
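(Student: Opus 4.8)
The plan is to derive both bounds from results already established, principally the Sign Border Lemma (\thref{lem:signborder}), the zero-sum identity \eqref{eq:0-sum}, the Marker Monotonicity bounds \eqref{eq:tie} from \thref{lem:marker}, and an induction on the heap size $x$ using the unitary maximin recursion \eqref{eq:unitaryequil}. The base case $x=0$ is immediate since $\o{0}{p}=0$ lies in the required interval for every $\tb$. For the inductive step, I would fix $x>0$ and a Left budget $p$, and split the argument into the lower bound and the upper bound, in each case exhibiting a concrete bid for the player who wants to certify the inequality (Left for the lower bound, via maximin; Right for the upper bound, via minimax).

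For the lower bound $\o{x}{p}\ge -\floor{\tb/2}$, the cleanest route is to combine the zero-sum property with the Sign Border Lemma and monotonicity rather than to reopen the recursion. By \eqref{eq:0-sum} and the first inequality of \eqref{eq:tie}, $\o{x}{p}\ge -\o{x}{q}$; and since $\o{x}{q}\le \ceil{\tb/2}$ would follow from the upper bound at the \emph{same} heap size $x$, I need to be a little careful about circularity. The way around this is to prove the two bounds together in a single induction, or better, to observe that the upper bound $\o{x}{q}\le \ceil{\tb/2}+1$ combined with $\o{x}{p}\ge -\o{x}{q}$ only yields $\o{x}{p}\ge -\ceil{\tb/2}-1$, which is slightly too weak when $\tb$ is odd. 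So instead I would argue directly: from $\o{x}{q}\le \ceil{\tb/2}$ when $q\ge\ceil{\tb/2}$ is false — wait, rather, I would use that for the relevant range of $p$ the Sign Border Lemma already pins down the sign, and handle the worst case $p$ small (where $\o{x}{p}$ is most negative) by induction on the recursion: if Left bids $0$ at $(x,\hat p)$, then either the tie gives $\o{x}{p}\ge 1-\o{x-1}{q}$ (using the tie branch $-\o{x-1}{q+\ell}+1$ with $\ell=0$), and by the inductive upper bound $\o{x-1}{q}\le\ceil{\tb/2}+1$, giving $\o{x}{p}\ge -\ceil{\tb/2}$; or Right overbids and $\o{x}{p}=\o{x-1}{p+r}-1\ge -\floor{\tb/2}-1$... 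I must instead sharpen this using parity (\thref{lem:pari}): $\o{x}{p}$ and $x$ have the same parity, so an integer bound of $-\floor{\tb/2}-1$ that has the wrong parity can be improved to $-\floor{\tb/2}$. That parity correction is the linchpin of the lower bound.

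For the upper bound $\o{x}{p}\le\ceil{\tb/2}+1$, I would similarly have Right certify it. By \eqref{eq:0-sum} and the second inequality of \eqref{eq:tie}, $\o{x}{p}=-\o{x}{q}\le -\,\big(\,$(lower bound at heap $x$, budget $q$)$\,\big) \le -(-\floor{\tb/2}) = \floor{\tb/2}$ if the lower bound is available at heap $x$; proving both bounds in a joint induction on $x$ makes this legitimate, since the right-hand heap in the recursion is $x-1$. Alternatively, directly: Right, facing $(x,\hat p)$, can bid $r=\floor{\tb/2}+1$-ish to over-bid whenever $q$ is large, reducing to $\o{x-1}{p+r}-1$ and invoking the inductive bound; when $q$ is small Right must concede and the score only changes by $\pm 1$ from a value already controlled by the Sign Border Lemma.

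The main obstacle I anticipate is the bookkeeping around parity and the $+1$ slack: the naive telescoping of the recursion loses a constant at each of the two "conceded" steps, and only the parity constraint from \thref{lem:pari} together with the tight Sign Border Lemma brings the bound back down to exactly $-\floor{\tb/2}$ and $\ceil{\tb/2}+1$. So I would structure the proof as a simultaneous induction on $x$ proving both inequalities, invoke \thref{lem:signborder} to handle all $p$ outside a bounded window around $\tb/2$, and for $p$ inside that window use the recursion one step plus the inductive hypothesis at $x-1$, finishing with a parity adjustment via \thref{lem:pari}.
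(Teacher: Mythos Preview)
Your plan takes a genuinely different route from the paper, and as written it has a real gap.

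The paper does \emph{not} induct on $x$ via the recursion. Instead it gives a direct strategy argument: for the upper bound with $p\ge\ceil{\tb/2}$, Right simply bids $0$ at every round. Then any round in which Left nets a point while retaining the marker forces her to bid at least $1$, so her budget drops by one; rounds in which the marker changes hands and back contribute net score $0$. Hence the total score Left can accumulate before her budget first falls below the sign border is at most her initial distance to $\ceil{\tb/2}$ plus one for the marker, and after that \thref{lem:signborder} caps the remainder. The lower bound is obtained symmetrically. The key idea is tracking a global potential (budget distance to the sign border) across the \emph{entire} play, not a one-step recursion.

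Your induction, by contrast, applies the recursion once and then tries to close an off-by-one gap with parity. That does not work in general. Concretely: for the lower bound with $2p<\tb$, Left bidding $0$ gives
\[
\o{x}{p}\;\ge\;\min\bigl\{\,1-\o{x-1}{q},\ \o{x-1}{p+r}-1\,\bigr\},
\]
and the inductive hypotheses yield only $\o{x}{p}\ge -\floor{\tb/2}-1$. Your proposed parity fix (``$-\floor{\tb/2}-1$ has the wrong parity, so improve to $-\floor{\tb/2}$'') only applies when $x\not\equiv\floor{\tb/2}+1\pmod 2$; for the other parity of $x$, both $\o{x}{p}$ and $-\floor{\tb/2}-1$ share the same parity and nothing is gained. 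Trying to sharpen the inductive bound on $\o{x-1}{p+r}$ via parity runs into the \emph{same} parity case, so you cannot close the gap this way for all $x$. The same obstruction appears on the upper-bound side if Right fixes bid $0$: the branch where Left wins strictly with $\ell=1$ gives $\o{x-1}{p-1}+1\le\ceil{\tb/2}+2$, again off by one.

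The missing idea is exactly what the paper supplies: you must exploit that Left's budget is a bounded, strictly decreasing resource along the subsequence of rounds where she scores, so the off-by-one leak per step telescopes against her budget rather than against the heap size. A single application of the recursion cannot see this; you need to follow Right's $0$-bidding strategy across many rounds and invoke \thref{lem:signborder} only once, at the moment the sign border is crossed.
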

\begin{proof}
Consider first $p\ge \ceil{\tb/2}$. We define a Right strategy, such that Left cannot get better than $\ceil{\tb/2}+1$. Right bids 0 until (possibly) the first time Left's budget partition becomes smaller than $ \ceil{\tb/2}$. In the case where Left has the marker and bids 0, the budget partition will stay the same, but Right gets the marker. Hence, unless Left bids 1 in the next bid, the change in score would be $1-1=0$. When, at some point, Left bids 1, she keeps the marker, and gets a point, but the budget partition is one unit closer to the sign border.  Since the sign border is at $\ceil{\tb/2}$, the result follows, by the similar bound for the case $p < \ceil{\tb/2}$. 
\end{proof}

%The next result refers to the monotonicity of outcomes for heap sizes of the same parity, and given any fixed budget partition $(\hat p,q)$, non-decreasing for $p\ge \tb/2$ and non-increasing for $p<\tb/2$. 
The following result in Proposition~\ref{lem:4th_ineq} proves the relation between change in the game value with added extra one dollar to Left's budget. We prove this result, even though we do not explicitly use it in order to prove other results. We wonder how this result generalizes for arbitrary subtraction sets.
\begin{prop}\label{lem:4th_ineq}
For all $\tb$, for all $x$, for all $p$, $\o{x}{p}+2\ge\o{x}{p+1}$.
\end{prop}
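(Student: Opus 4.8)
The plan is to prove the inequality by induction on the heap size $x$, relying on the reduced equilibrium recursion of \thref{thm:unitary} (equivalently \thref{cor:U}) together with the monotonicity facts already established: budget monotonicity (\thref{lem:col_mon}); the two-sided estimate $0 \le \o{y}{a} + \o{y}{\tb - a} \le 2$ for complementary budgets, which is immediate from \thref{lem:marker} and \eqref{eq:0-sum}; and property \U(C) (\thref{lem:markworth}). The base case $x = 0$ is trivial, since $\o{0}{p} = 0$ for all $p$. So I would fix $x \ge 1$, assume $\o{x-1}{a} + 2 \ge \o{x-1}{a+1}$ for every valid budget $a$, fix $p$ with $0 \le p \le \tb - 1$, and set $q = \tb - p$.

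Let $\ell$ denote Left's equilibrium bid from $\posm{x}{p+1}$. First I would reduce to $0 \le \ell \le q - 1$: if $\ell \ge q$ then $q \le \ell \le p + 1$, so Left can instead bid $q - 1$, and a routine application of property \U(C) at heap $x - 1$ (exactly as in the proof of \thref{thm:unique}) shows that winning by a tie at $q - 1$ is at least as good for Left as winning strictly at $\ell \ge q$. The crucial observation is then that, because $\ell \le q - 1$, Right always has the option of tying Left's bid, and therefore
$$\o{x}{p+1} \;\le\; 1 - \o{x-1}{q + \ell - 1},$$
the value of that tie (Left gains $1$, pays $\ell$ to Right, and hands Right the marker, leaving Right with $q - 1 + \ell$ dollars and the marker). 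Consequently it suffices to prove $\o{x}{p} \ge -\o{x-1}{q + \ell - 1} - 1$.

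For this lower bound I would let Left bid $\ell' := \max(\ell - 1, 0)$ from $\posm{x}{p}$ and verify that every Right reply produces at least $-\o{x-1}{q + \ell - 1} - 1$. A tie by Right yields $1 - \o{x-1}{q + \ell - 1}$ when $\ell \ge 1$, and yields $1 - \o{x-1}{q}$ when $\ell = 0$; the latter is $\ge -1 - \o{x-1}{q - 1}$ precisely by the induction hypothesis at heap $x - 1$. An overbid by Right yields a value at least $\o{x-1}{p + \ell} - 1$ by budget monotonicity, and $\o{x-1}{p + \ell} + \o{x-1}{q + \ell - 1} \ge 0$ because $p + \ell \ge \tb - (q + \ell - 1)$ (using $\ell \ge 1$; for $\ell = 0$ one uses instead the complementary pair $(p+1, q-1)$), so the complementary-sum bound applies. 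An underbid by Right lets Left win strictly, for a value once more controlled by the complementary-sum bound. Combining the three cases gives $\o{x}{p} \ge -\o{x-1}{q + \ell - 1} - 1$, hence $\o{x}{p+1} \le 1 - \o{x-1}{q + \ell - 1} = \bigl(-\o{x-1}{q + \ell - 1} - 1\bigr) + 2 \le \o{x}{p} + 2$, which closes the induction.

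The main obstacle is precisely why the obvious shortcut fails: combining \thref{lem:marker} with \eqref{eq:0-sum} merely relates $\o{x}{p+1}$ to forward differences of $\o{x}{\cdot}$ at the \emph{same} heap size, which is circular. The remedy is the asymmetric pairing above — bound $\o{x}{p+1}$ from above by Right tying Left's bid $\ell$, and bound $\o{x}{p}$ from below by Left bidding $\ell - 1$ — so that the two estimates collide at budget $q + \ell - 1$ of the smaller heap, with a gap of exactly $2$. Some care is required for the degenerate bids ($\ell = 0$, and $\ell = q - 1$ where Right cannot overbid), but both are subsumed by the same two estimates, and the induction hypothesis itself enters only in the $\ell = 0$ subcase, which cleanly peels off a single heap layer.
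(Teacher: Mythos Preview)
Your argument is correct, and it takes a genuinely different route from the paper's proof. The paper proceeds by contradiction: it assumes a jump of size~$4$ (using parity), reduces to the case where the equilibrium at $\posm{x}{p+1}$ is a $0$-tie, and then splits on whether Left ties or Right wins strictly at $\posm{x}{p}$; Case~1 is dispatched via the induction hypothesis and a deviation argument, and Case~2 via the Sign Border lemma (\thref{lem:signborder}) together with budget monotonicity. Your proof, by contrast, is direct: you never reduce to a $0$-tie at $\posm{x}{p+1}$ but instead keep Left's actual equilibrium bid $\ell$, bound $\o{x}{p+1}$ above by letting Right tie that bid, and bound $\o{x}{p}$ below by letting Left bid $\ell-1$. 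The two bounds meet at $-\o{x-1}{q+\ell-1}\pm 1$, and the residual gap is closed by the complementary-sum estimate $\o{y}{a}+\o{y}{\tb-a}\ge 0$ (from \thref{lem:marker} and \eqref{eq:0-sum}) together with budget monotonicity; the induction hypothesis is invoked only once, in the $\ell=0$ subcase, exactly as you note.

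What your approach buys is a shorter argument with a smaller toolkit: you avoid the Sign Border lemma entirely and never need the reduction to $0$-ties. One small remark: your preliminary step ``reduce to $\ell\le q-1$'' is in fact automatic from \thref{cor:U}, since in the reduced recursion at $\posm{x}{p+1}$ Right's budget is $q-1$ and every non-vacuous Left bid (tie or Right-win) already satisfies $\ell\le q-1$; so that paragraph can be shortened to a one-line observation.
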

\begin{proof}
Assume for a contradiction that Left can obtain 
\begin{align}\label{eq:4gap}
\o{x}{p+1} = \o{x}{p}+4
\end{align}
(or more), where the outcome at $\posm{x}{p}$ is given. 
 If Right wins at $\posm{x}{p+1}$, then he will decrease bid, to possibly a tie. But Left prefers a smaller tie, so she can decrease and perhaps Right wins again. If so, if the gain is only 2 points we are done, so assume gain is 4 points. Then he decreased again and Left still prefers smaller tie, and so on. 
 
 By \thref{thm:unique} we do not need to consider cases where Left wins. Hence those cases where Right wins at $\posm{x}{p+1}$ reduce to study of cases where Left wins by a 0-tie at position $\posm{x}{p+1}$ and with \eqref{eq:4gap}. Thus, only two cases remain. 

\begin{enumerate}
   % \item Left wins without tie at $(x,p)$ by bidding $\ell$, and she wins without a tie  at $(x,p+1)$. \ur{remove}
     %\item Left wins without tie at $(x,p)$ by bidding $\ell$, and she wins with a tie  at $(x,p+1)$.  \ur{remove}
     % \item Left wins without tie at $(x,p)$ by bidding $\ell$, and Right wins at $(x,p+1)$.  

 %    \item Left wins with tie at $(x,p)$ by bidding $\ell$, and she wins without a tie  at $(x,p+1)$.  
     \item Left wins an $
     \ell$-tie at $\posm{x}{p}$,  and she wins a 0-tie at $\posm{x}{p+1}$.  
      %   \item Left wins with tie at $(x,p)$ by bidding $\ell$, and Right wins at $(x,p+1)$.
   %  \item Right wins at $(x,p)$ by bidding $r$, and Left wins without a tie at $(x,p+1)$.
     \item Right wins at $\posm{x}{p}$ by bidding $r$, and Left wins a 0-tie at $\posm{x}{p+1}$.%$(x,\hat{p+1})$.
     %\item Right wins at $(x,p)$ by bidding $r$, and Right wins at $(x,p+1)$.
\end{enumerate}
%By Lemma~\ref{lem:markdom}, case 1 implies case 2 and case 3 implies case 4 and case 5 implies case 6. By \thref{thm:unique}, we need not consider Case 1, when Left wins strictly.\\

\noindent For {\bf Case 1}, we have $\o{x}{p}=-\o{x-1}{q+\ell}+1$ and $\o{x}{p+1} = -\o{x-1}{q-1}+1$, and hence, by \eqref{eq:4gap}
\begin{align}\label{eq:4gap2}
    4-\o{x-1}{q+\ell}=-\o{x-1}{q-1}
\end{align}%, i.e. $\o{x-1}{q+\ell}=\o{x-1}{q-1}+4$.
Thus, by induction, $\ell > 0$. This means that Left must avoid a Right win with $1\le r=\ell-1$ at $\posm{x}{p+1}$. Hence $\o{x-1}{p+r} <  -\o{x-1}{q+\ell}+1=-\o{x-1}{q-1}-3\le\o{x-1}{p+1}-3$, by \eqref{eq:4gap2} and marker monotonicity.  

For {\bf Case 2}, we have $\o{x}{p}=\o{x-1}{p+r}-1$ and $\o{x}{p+1} = -\o{x-1}{q-1}+1$, and hence 
\begin{align}\label{eq:4gap3}
    4+\o{x-1}{p+r}-1 = -\o{x-1}{q-1}+1,
\end{align}
That is, $\o{x-1}{p+r} +\o{x-1}{q-1}=-2$. If $q-1$ is to the left of the sign border, then $\o{x-1}{p+r} -\o{x-1}{p+1}\le -2$, which is impossible, by budget monotonicity and since $r\ge 1$. If $q-1$ is to the right of the sign border, then $p+r$ is to the left, and we get $\o{x-1}{q-1} -\o{x-1}{q-r}\le -2$, which is impossible, by $r\ge 1$ and budget monotonicity. 
\end{proof}

%\begin{lem}[Outcome Monotonicity for Fixed Budget Partitions]\thlabel{lem:rowsplitmon}

Next, in Lemma~\ref{lem:rowsplitmon} we prove that the equilibrium outcomes are monotone (non-decreasing or non-increasing as reflected by the sign border) for heap sizes of the same parity. We show that whenever $2p \ge \tb$, then the game value increases as heap size increases and similarly, for $2p < \tb$, game value is monotonically decreasing. We note that budget monotonicity does not imply this result as we do not change the parity but heap size.    

\begin{restatable}[Heap Monotonicity]{lemma}{lemmaheapmonotonisity}\thlabel{lem:rowsplitmon}
Fix a total budget $\tb$ and a Left budget $\hat p$, and consider unitary games with heap sizes of the same parity. If $2p\ge \tb$, then the value is monotonically non-decreasing, and otherwise it is monotonically non-increasing. This is impossible, by budget monotonicity, since $r\ge 1$.
\end{restatable}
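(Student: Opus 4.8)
The plan is to prove both directions at once, by induction on the heap size $x$: for every Left budget $p$, that $\o{x}{p}\ge\o{x-2}{p}$ whenever $2p\ge\tb$, and $\o{x}{p}\le\o{x-2}{p}$ whenever $2p<\tb$. Two heaps of the same parity differ by a multiple of $2$, so these two statements together give the lemma. Since the equilibrium recursion of \thref{thm:unitary} expresses $\o{x}{p}$ purely in terms of values at heap $x-1$, the induction needs nothing about the minus-marker function $\nu$. The base cases $x\le 3$ are immediate from \thref{lem:signborder} and \thref{lem:pari}: for even $x$, $\o{x}{p}\ge 0=\o{x-2}{p}$ when $2p\ge\tb$; for $x=3$, the (parity-strict) sign border together with $\o{1}{p}\in\{-1,1\}$ gives $\o{3}{p}\ge 1=\o{1}{p}$ when $2p\ge\tb$; and dually in the poor case.

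Consider the step with $2p\ge\tb$ and $x\ge 4$. Let $\ell^{\ast}$ be an optimal Left bid at $\posm{x-2}{p}$; by \thref{lem:markworth} we may assume $\ell^{\ast}\le q$ (a larger bid is dominated by the tie-win at $q$). By \thref{thm:unitary} (and since, by \thref{cor:U}, Right never underbids $\ell^{\ast}$), Right's relevant replies to $\ell^{\ast}$ are the tie at $\ell^{\ast}$, worth $1-\o{x-3}{q+\ell^{\ast}}$, and the overbids $\ell^{\ast}<r\le q$, worth $\o{x-3}{p+r}-1$. Write $V_{x-2}$ for the minimum of these values and $V_{x}$ for the same expression with $x-3$ replaced by $x-1$; then $\o{x-2}{p}=V_{x-2}$, and since Left may bid $\ell^{\ast}$ at $\posm{x}{p}$ we have $\o{x}{p}\ge V_{x}$. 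Now split on $q+\ell^{\ast}$. If $2(q+\ell^{\ast})<\tb$: the tie value improves, because $\o{x-1}{q+\ell^{\ast}}\le\o{x-3}{q+\ell^{\ast}}$ by the induction hypothesis (non-increasing branch) at the poor budget $q+\ell^{\ast}$; each overbid value improves, because $p+r>\tb/2$ makes $\o{x-1}{p+r}\ge\o{x-3}{p+r}$ by the non-decreasing branch; a minimum of termwise-larger quantities is larger, so $\o{x}{p}\ge V_{x}\ge V_{x-2}=\o{x-2}{p}$. If $2(q+\ell^{\ast})\ge\tb$: then $\o{x-3}{q+\ell^{\ast}}\ge 0$ by \thref{lem:signborder}, so the tie value is at most $1$ and hence $\o{x-2}{p}=V_{x-2}\le 1$; but \thref{lem:signborder} and \thref{lem:pari} force $\o{x-2}{p}=0$ when $x$ is even and $\o{x-2}{p}=1$ when $x$ is odd, while $\o{x}{p}\ge 0$ (even $x$), resp.\ $\ge 1$ (odd $x$); either way $\o{x}{p}\ge\o{x-2}{p}$.

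The step with $2p<\tb$ is the mirror image, worked from the minimax form of the recursion (equal to the maximin by \thref{thm:unitary}), now with Right committing to an optimal bid $r^{\ast}$ at $\posm{x-2}{p}$, which one may take $\le p+1$. Writing $\beta=p+r^{\ast}$ for the Left budget resulting from any Right win, one splits on $\beta$: if $2\beta<\tb$, replaying $r^{\ast}$ at heap $x$ makes every Left reply no better than at heap $x-2$ (the Right-wins value via the non-increasing hypothesis at the poor budget $\beta$, and, when $r^{\ast}\le p$, the Left-ties value via the non-decreasing hypothesis at the super-rich budget $q+r^{\ast}$), so $\o{x}{p}\le\o{x-2}{p}$; if $2\beta\ge\tb$, then $\o{x-3}{\beta}\ge 0$ forces $\o{x-2}{p}\ge\o{x-3}{\beta}-1\ge-1$, and \thref{lem:signborder} with \thref{lem:pari} closes the gap exactly as in the previous paragraph.

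The crux is the bookkeeping of the step: correctly pinning down the effective reply set after the mimicked bid (using \thref{cor:U} to discard strict wins and underbids), handling the degenerate bids that exhaust the opponent's budget so that a branch of the recursion is empty, and---above all---making sure that each appeal to the induction hypothesis lands on a budget whose side of the sign border matches the direction the termwise comparison requires. That last constraint is precisely what forces the case split on $q+\ell^{\ast}$ (respectively $\beta$), and the "wrong-side" case cannot be pushed through directly; it must instead be absorbed into the sign-border and parity bounds, which pin $\o{x-2}{p}$ and $\o{x}{p}$ tightly enough to conclude.
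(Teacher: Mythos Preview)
Your argument is correct and noticeably cleaner than the paper's own proof. Both proceed by induction on the heap size and ultimately rest on the same ingredients (\thref{thm:unitary}, \thref{cor:U}, \thref{lem:signborder}, \thref{lem:pari}, \thref{lem:markworth}), so there is no circularity; the difference is organisational.

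The paper splits into ten cases (labelled 1--6 and A--D), classified by who wins the bid at heap $x-2$ and by which of the relevant budgets cross the sign border. In each case it then argues by a sequence of explicit deviations (``Right lowers to $r-1$, now Left lowers to a smaller tie, \ldots''), invoking budget monotonicity, tie monotonicity, and a repeatedly-used ``4-gap'' observation. Your approach instead fixes the optimiser's equilibrium bid at $x-2$ (Left's $\ell^\ast$ in the rich case, Right's $r^\ast$ in the poor case, the latter via the minimax form, which is legitimate by the uniqueness established in \thref{thm:unitary}), replays it at heap $x$, and compares the adversary's reply set termwise. This collapses the analysis to a single dichotomy per direction: either the post-bid budget stays on the side of the sign border where the induction hypothesis points the right way---and then every term in the inner optimum moves favourably---or it crosses, in which case \thref{lem:signborder} together with \thref{lem:pari} pins $\o{x-2}{p}$ to $0$ or $\pm 1$ and the same two lemmas bound $\o{x}{p}$ on the correct side. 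The remark that one may take $\ell^\ast\le q$ (and dually $r^\ast\le p+1$) is exactly what makes the reply set after mimicking well defined in the simplified recursion of \thref{cor:U}, and it follows from \thref{lem:markworth} as you say.

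What your route buys is brevity and a clearer inductive invariant; what the paper's longer case analysis buys is that it never needs to invoke the minimax side of the equilibrium explicitly, staying entirely within the maximin recursion of \thref{thm:unitary}. Either way the content is the same.
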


\begin{proof}
The base case concerns heap sizes $x=0$ and $2$. Since the outcomes are all 0 at $x=0$, this case is covered by Lemma~\ref{lem:signborder} (Sign Border). 

We study the outcome at heap size $x$. The value $\o{x-2}{p}$ satisfies (at least) one out of three definitions:
\begin{itemize}
    \item[L:] $\o{x-2}{p}=\o{x-3}{p-\ell}+1$.\footnote{Note that by the proof of \thref{thm:unique}, it is not required to study the cases where Lefts wins a bid strictly. However, for the flow of the  proof we find it somewhat nicer when those cases are included.} 
    \item[R:] $\o{x-2}{p}=\o{x-3}{p+r}-1$.
    \item[T:] $\o{x-2}{p}=-\o{x-3}{q+\ell}+1$.
\end{itemize}
In case L, Left wins the bid; in case R, Right wins the bid, and in case T, there is a tie, so Left, who has the marker, wins the bid.

The proof splits into 10 distinct cases, depending on how the bid is won,
\begin{enumerate}
    \item Tie and $2p \ge \tb, q+\ell < \ceil{\tb/2}$
    \item Tie and $2p < \tb, q+\ell \ge \ceil{\tb/2}$
    \item Right wins and $2p\ge \tb, 2(p+r)\ge\tb$
    \item Left wins and $2p\ge \tb, 2(p-\ell)\ge \tb$%\ur{can remove}
    \item Left wins and $2p < \tb, 2(p-\ell)<\tb$%\ur{can remove}
    \item Right wins and $2p< \tb, 2(p+r)<\tb$
\end{enumerate}

\begin{enumerate}
    \item[A.] Left wins and $2p \ge \tb, 2(p-\ell) < \tb$%\ur{can remove}
    \item[B.] Right wins and $2p < \tb, 2(p+r) \ge \tb$
    \item[C.] Tie and $2p\ge \tb, q+\ell\ge\ceil{\tb/2}$
    \item[D.] Tie and $2p< \tb, q+\ell< \ceil{\tb/2}$
\end{enumerate}
The case (D) cannot happen, because $q\ge \ceil{\tb/2}$ and $\ell\ge 0$. %By \thref{thm:unique}, the cases 4., 5. and A. need not be considered, because if Left win is an equilibrium, then tie is also equilibrium.

In cases 1-6 and several subcases we use induction. Suppose for example that $2p\ge \tb$, and we wish to prove, by induction, that for all $x>1$, 
\begin{align}\label{eq:ineq_ge}
\o{x}{p}\ge \o{x-2}{p}. 
\end{align}
Hence, we show that Right cannot do better in $(x,p)$ than in $(x-2,p)$. As induction hypothesis, we may assume $\o{x-1}{p+r}\ge \o{x-3}{p+r}$, and hence, if Right wins at $x$ by bidding $r$, then inequality \eqref{eq:ineq_ge} holds. We will refer to similar situations by saying `by induction'. For some more detail, to contradict the inequality \eqref{eq:ineq_ge}, Right must change strategy at $x$. That is he must lower his bid to $r-1$ or smaller. If he lowers to $r-1$, then we may assume that this is now a tie, and this situation has to be considered. In case the decrease of bid is successful for Right, then Left might deviate, etc;  the particular context will determine.

In cases of tie, the argument will be by induction in cases where the relevant budget partition crosses the Sign Border. For example in case (1), Left wins, and $\tb -p+\ell<\tb/2$. Since $2p\ge \tb$, we wish to prove a non-decreasing outcome, and induction thus applies when signs for non-decreasing outcomes change. Note that in case of tie bids, the player who tries to contradict the inequality, could do this either by lowering, or raising the bid, and thus several sub-cases may need to be considered. Since there are significant variations to why the contradicting player will not succeed, we will treat all cases.\\

\noindent {\bf Case 1.} The players $\ell$-tie at heap size $x-2$ and where $2p \ge \tb$. We have 
\begin{align}\label{eq:def_o}
\o{x-2}{p} = -\o{x-3}{q + \ell} + 1,
\end{align}
where $q + \ell < \ceil{\tb/2}$, i.e. $ 2(p-\ell)>\tb $. We need to prove that  
\begin{align}\label{eq:ineq_ge1}
\o{x}{p} \geq \o{x-2}{p}.    
\end{align}

We use induction to assume that
\begin{align}\label{eq:nondecr_tie}
     -\o{x-1}{q+\ell} \geq  -\o{x-3}{q+\ell}
\end{align}
   
By way of contradiction,  \eqref{eq:nondecr_tie} and \eqref{eq:ineq_ge1}, Right deviates at $x$: 
\begin{enumerate}[(1)]
    \item If Right decreases his bid, Left wins by bidding $\ell$. Hence, by $p-\ell \ge \tb/2$, induction gives $\o{x}{p}|_{L(\ell)} = 1+ \o{x-1}{p-\ell} \geq 1+ \o{x-3}{p-\ell}$.  
     By \thref{lem:tie}, Tie Monotonicity,  $\o{x-3}{p-\ell} \geq -\o{x-3}{q+\ell}$ and $1-\o{x-3}{q+\ell} = \o{x-2}{p}$.
    Thus, $\o{x}{p}|_{L(\ell)} \ge \o{x-2}{p}$.
    
    \item If Right increases his bid, he wins and we get $\o{x}{p}|_{R(r)} = \o{x-1}{p+r} - 1$, with $r=\ell + 1$. As Right does not benefit by increasing his bid at `$x-3$', he cannot benefit by increasing his bid at `$x-1$'. This follows by induction, because, by assumption \eqref{eq:def_o}, $\o{x-2}{p} \leq \o{x-3}{p+r} - 1 \leq
    \o{x-1}{p+r} - 1 = \o{x}{p}|_{R(r)}$.
\end{enumerate}

\noindent {\bf Case 2.} The players $\ell$-tie at heap size $x-2$ and where $2p < \tb$. We have 
\begin{align}\label{eq:def_o_2}
\o{x-2}{p} = -\o{x-3}{q + \ell} + 1,
\end{align}
where $q + \ell \ge \ceil{\tb/2}$, i.e. $ 2(p-\ell)\le\tb $. We need to prove that  
\begin{align}\label{eq:ineq_ge1_2}
\o{x}{p} \leq \o{x-2}{p}.    
\end{align}

We use induction to assume that
\begin{align}\label{eq:nondecr_tie_2}
     -\o{x-1}{q+\ell} \leq  -\o{x-3}{q+\ell}
\end{align}
By way of contradiction of  \eqref{eq:ineq_ge1_2}, Left deviates at $x$.

\begin{enumerate}[(1)]
    \item If Left decreases her bid, Right wins by bidding $\ell$. Hence, $\o{x}{p}|_{R(l)} = \o{x-1}{p+\ell} - 1$. If $2(p+\ell) < \tb $, by induction, Left cannot contradict \eqref{eq:ineq_ge1_2}. Since the relative loss for Left is 2, when Right wins the bid, by heap monotonicity, Left requires a relative 4-gap in outcomes at `$x-3$' and `$x-1$', with $2(p+\ell)\ge\tb$. That is, Left requires 
    \begin{align}\label{eq_4gap_c2}
    -\o{x-3}{q+\ell}+4\le \o{x-1}{p+\ell} 
    \end{align}
    By the assumption  $2p<\tb$, we get  $p+\ell<q+\ell$. Therefore, by heap monotonicity,  $\o{x-3}{q+\ell}\ge 2$. Moreover, if $\o{x-3}{q+\ell}= 2$, by heap monotonicity and \eqref{eq_4gap_c2}, this forces $0=\o{x-3}{p+\ell}<\o{x-1}{p+\ell}=2$. We will refer to this situation as \emph{the 4-gap principle}. 
    
    Hence, Right can decrease the bid to $\ell-1$, and still satisfy $\o{x-3}{q+\ell-1}= 2$, and the argument can be repeated with 
    $\ell-1$ instead of $\ell$, until at some point $\o{x-1}{p+\ell '}=0$; this must happen for some $\ell '\ge 0$, by $2p<\tb$ and Sign Border. The case $\o{x-3}{q+\ell}\ge 4$ need not be considered, since $2(p+\ell)\ge \tb$ and the Sign Border implies that Left should have decreased the bid to $\ell-1$ at `$x-3$' (she would have been strictly better off losing the bid and if Right decreases then by Tie Monotonicity, she benefits by a smaller tie). %But this contradicts $2p<\tb$. \ur{WRONG.}
    
    \item If Left increases her bid, by induction, she cannot increase the outcome. Namely,  $\o{x}{p}|_{L(\ell+1)} = \o{x-1}{p-\ell-1} + 1\le \o{x-3}{p-\ell-1} + 1\le \o{x-3}{q+\ell}+1$, by assumption. 
\end{enumerate}

\noindent {\bf Case 3.} Right wins at heap size $x-2$, by bidding $r$, and where $2p \ge \tb$. Then $2(p+r) > \tb$. We have $\o{x-2}{p} = \o{x-3}{p+r} - 1 $, and need to prove that 
\begin{align}\label{eq:ineq_case3}
\o{x}{p} \geq \o{x-2}{p}. 
\end{align}
If Right keeps the same winning bid, by induction, he cannot contradict this inequality. Hence he decreases the bid to $r-1$ at $x$. We may assume this is a tie, so the relative loss for Right is 2 points. By induction, we may assume that $q+r-1\ge \ceil{\tb/2}$, and that  
\begin{align}\label{eq:jump_case3}
-\o{x-1}{q+r-1}<-\o{x-3}{q+r-1}
\end{align}
Suppose first that $\o{x-3}{p+r}-1=\o{x-2}{p}=0$. Then \eqref{eq:jump_case3} forces $q+r-1>p+r$. This contradicts the assumption $2p\ge \tb$.

Suppose next that 
$\o{x-3}{p+r}-1=\o{x-2}{p}=1$. Then 
\begin{align}\label{eq:2_case3}
2= \o{x-3}{p+r}\le \o{x-1}{p+r}, 
\end{align}
by induction.

{\it Claim}: $p+r>q+r-1$. {\it Proof of Claim}: If $p+r\le q+r-1$, then, by \eqref{eq:2_case3}, heap monotonicity and the 4-gap principle,  Right instead prefers to tie at `$x-3$'.

Therefore, by \eqref{eq:jump_case3}, \eqref{eq:2_case3}  and heap monotonicity, we get  $2=\o{x-1}{p+r-1}>\o{x-3}{p+r-1}=0$. 

But then, Left will decrease her bid at $x$, below $r-1$, and Right will win. However, the argument gives the same outcome as when he wins by bidding $r$. Thus, we may repeat the argument, and Right cannot contradict \eqref{eq:ineq_case3}.
\\

\noindent\textbf{Case 4.} %\ur{This case can be removed} 
Left wins by bidding $\ell>0$, $2p\ge \tb$ and $2(p-\ell)\ge \tb$. Thus, $\o{x-2}{p} = \o{x-3}{p-\ell} + 1 $. We need to prove that 
\begin{align}\label{eq:ineq_case4}
\o{x}{p} \geq \o{x-2}{p}. 
\end{align}
If Right keeps the same bid at $x$, by induction, he cannot contradict this inequality. If he can increase the bid to $\ell$ at $x$, there is a tie (otherwise we are done). If the tie remains to the left of the Sign Border, and there is an increase of outcome, i.e. $2(q+\ell)\ge\tb$ and $\o{x-1}{q+\ell} > \o{x-3}{q+\ell}$, then he will gain the sufficient amount to contradict \eqref{eq:ineq_case3}. But by the third assumption, this can only happen if $2(p-\ell)=\tb$. Hence $\tb$ is even. Hence,  $2=\o{x-1}{q+\ell} > \o{x-3}{q+\ell}
=0$, which would give outcome -1 at $x$ instead of +1. But, since the bids  $q+\ell=p-\ell$, Left can deviate and let Right win the bid, to produce an outcome at least +1, since $p+\ell>p-\ell$. \\

\noindent\textbf{Case 5.} %\ur{This case can be removed} 
Left wins by bidding $\ell>0$ and $2p < \tb$. We have $\o{x-2}{p} = \o{x-3}{p-\ell} + 1$.
We need to prove that 
\begin{align}\label{eq:ineq_case5}
\o{x}{p} \leq \o{x-2}{p}. 
\end{align}
By induction, 
\begin{align*}
    \o{x-1}{p-\ell} \leq \o{x-3}{p-\ell}.
\end{align*}
Therefore, to contradict \eqref{eq:ineq_case5}, Left must change her bid, and she can decrease to `$\ell -1$', to get a tie. But, 

\begin{align}
    \o{x}{p}|_{T(\ell-1)} &= -\o{x-1}{q+\ell-1} + 1 \le -\o{x-3}{q+\ell-1} + 1\\
    &\le \o{x-3}{p-\ell}+1 = \o{x-2}{p},
\end{align}

since $p < \ceil{\tb/2}$ implies that $q+\ell-1 \geq \ceil{\tb/2}$. If she tries to decrease further, to make Right win the bid to gain a higher outcome, then Right can respond by decreasing his bid, and the sequence of inequalities still holds.\\
 
\noindent \textbf{Case 6.} 
Right wins by bidding $r$ and $2(p+r) < \tb$. That is,  $\o{x-2}{p} = \o{x-3}{p+r} - 1<0$.  
We need to prove that 
\begin{align}\label{eq:ineq_case6}
\o{x}{p} \leq \o{x-2}{p}. 
\end{align}

And assume, by induction, 
\begin{align*}
     \o{x-1}{p+r} \leq  \o{x-3}{p+r}.
\end{align*}
Hence Left must change bid, to contradict \eqref{eq:ineq_case6}. 
If Left can increase her bid to `$r$' at `$x$', we get 
\begin{align*}
    \o{x}{p}|_{T(r)} &= -\o{x-1}{q+r} + 1\le -\o{x-3}{q+r} + 1\\
    &\le \o{x-3}{p+r}-1= \o{x-2}{p},
\end{align*}

since $p < \ceil{\tb/2}$ implies  that $q+r \geq \ceil{\tb/2}$.  If she can increase her bid to `$r+1$', since $2p<TB$, right can also raise his bid to $r+1$, and the sequence of inequalities still holds. \\%,  and use induction to prove $\o{x}{p} \leq \o{x-2}{p}$.\\

\noindent\textbf{Case A.} %\ur{This case can be removed} 
Left wins, by bidding $\ell$, $2p \ge \tb$ and  $2(p-\ell) < \tb$. We have 

\begin{align}\label{eqcaseA}
\o{x-2}{p} = \o{x-3}{p-\ell} + 1,
\end{align}

and we must prove that  
\begin{align}\label{tpcaseA}
\o{x}{p} \geq \o{x-2}{p}.    
\end{align}

By induction, 
\begin{align}\label{idcaseA}
     \o{x-1}{p-\ell} \leq  \o{x-3}{p-\ell}.
\end{align}

If the inequality is strict, then $-2\ge \o{x-1}{p-\ell}$, and  Left must change her bid. Suppose she decreases her bid to a `$0$' tie. Then 

\begin{align*}
    \o{x}{p}|_{T(0)} = -\o{x-1}{q} + 1\ge 1,
\end{align*}
by the assumption $p\ge \ceil{\tb/2}$, which suffices to justify \eqref{eqcaseA}. If Right instead wins by bidding 1, then this only contradicts \eqref{tpcaseA} if $\o{x-1}{p+1}=0$. In this case Left can raise the bid to get a 1-tie, and indeed, $\o{x-1}{q+1}\le \o{x-1}{p+1}=0$, by heap monotonicity, since $q\le p$; this implies $-\o{x-1}{q+1}\ge 0$. This argument can be repeated (Right instead wins by bidding 2 and perhaps Left raises to a 2-tie etc.) until one of the assumptions fails to hold.\\

\noindent\textbf{Case B.} 
Right wins, by bidding $r$, $2p < \tb$ and $2(p+r) \ge \tb$. Thus, 
\begin{align}\label{eqcaseB}
\o{x-2}{p} = -\o{x-3}{p+r} - 1\le -1,
\end{align}
and we need to prove that  
\begin{align}\label{tpcaseB}
\o{x}{p} \leq \o{x-2}{p}.    
\end{align}

By induction, 
\begin{align}\label{idcaseB}
     \o{x-1}{p+r} \geq  \o{x-3}{p+r}
\end{align}
If this is a strict inequality, then Right must change his bid to satisfy \eqref{tpcaseB}. He decreases to $r-1$, and gets a tie:
\begin{align*}
    \o{x}{p}|_{T(r-1)} = -\o{x-1}{q+r-1} + 1
\end{align*}
Note that $q+r-1\ge p+r$. Therefore, by assumption of strict inequality in \eqref{tpcaseB}, Right obtains the desired 4-gap, which  implies that the inequality \eqref{tpcaseB} holds. Thus 

 Left could decrease, and let Right win by $r-1$, but this could only help him, and we could either repeat the argument, or go to Case 6. Suppose that Left can increase to win by bidding $r$. But Right can `$r$'-tie, and this is weakly better for him than the assumed `$r-1$' tie.
\\

\noindent\textbf{Case C.} 
Tie, $2p\ge \tb$ and  $q+\ell\ge\ceil{\tb/2}$. We have 
\begin{align}\label{eq:eqcaseC}
\o{x-2}{p} = -\o{x-3}{q+\ell} + 1,
\end{align}

We need to prove that  
\begin{align}\label{eq:tpcaseC}
\o{x}{p} \geq \o{x-2}{p}.    
\end{align}

We use induction to assume that
\begin{align}\label{eq:idcaseC}
     -\o{x-1}{q+\ell} \leq  -\o{x-3}{q+\ell}
\end{align}

If the inequality is strict, then Left must change her bid. If she decreases so Right wins by bidding $\ell$, then 

\begin{align}
    \o{x}{p}|_{R(\ell)} = \o{x-1}{p+\ell} - 1.
\end{align}

By the assumption of strict inequality, together with  \thref{lem:markdom}, we get $2\le \o{x-1}{q+\ell}\le \o{x-1}{p+\ell}$, which gives the desired 4-gap. Then, if Right decreases, \thref{lem:tie} gives that Left cannot be worse of by tie `$\ell-1$'. The argument can be repeated, or we are in Case 1.

If the inequality \eqref{eq:idcaseC} is not strict, then Right must change his bid to contradict \eqref{eq:tpcaseC}. Induction shows that he cannot benefit by increasing his bid. Suppose he decreases so Left wins by bidding $\ell$. By \thref{lem:markdom}, $\o{x-1}{p-\ell}\ge -\o{x-1}{q+\ell}$, which does not worsen Left's result.
\end{proof}

We are now ready to prove the second main theorem of the paper. In Theorem~\ref{thm:const}, we show that the game value of a given budget partition is constant for large heapsizes. The proof of the Theorem follows from Lemma~\ref{lem:bound} and Lemma~\ref{lem:rowsplitmon}.

\begin{thm}[Eventual period 2 of equilibrium outcomes]\thlabel{thm:const}
The game value of a given budget partition is constant, for all sufficiently large heaps of the same parity. 
\end{thm}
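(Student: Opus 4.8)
The plan is to combine the two structural facts already in hand: Heap Monotonicity (\thref{lem:rowsplitmon}) and the Bounded Outcome estimate (\thref{lem:bound}). Fix the total budget $\tb$, a Left budget $p\in\{0,\ldots,\tb\}$, and a parity class $\varepsilon\in\{0,1\}$, and look at the sequence of equilibrium values $\o{x}{p}$ as $x$ ranges over the heap sizes with $x\equiv\varepsilon\pmod 2$. By \thref{lem:rowsplitmon} this sequence is monotone: non-decreasing when $2p\ge\tb$ and non-increasing when $2p<\tb$. By \thref{lem:bound} it is bounded, lying in $\bigl[-\floor{\tb/2},\ceil{\tb/2}+1\bigr]$. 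Finally, every value $\o{x}{p}$ is an integer: the recursion of \thref{def:maximinunitary} starts from $\o{0}{p}=0$ and at each step only adds $\pm 1$ to previously computed integers (consistent with \thref{lem:pari}), and a min or max of integers is an integer. A monotone, bounded, integer-valued sequence is eventually constant, so there is a threshold $X_{p,\varepsilon}$ with $\o{x}{p}=\o{x+2}{p}$ for every $x\ge X_{p,\varepsilon}$ of parity $\varepsilon$.

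Taking $X=\max_{p,\varepsilon}X_{p,\varepsilon}$ over the finitely many pairs $(p,\varepsilon)\in\{0,\ldots,\tb\}\times\{0,1\}$ produces a single threshold beyond which $\o{x}{p}$ depends only on $p$ and on the parity of $x$. Since unitary games are symmetric, the remaining half of the outcome vector is recovered through $\nu_p(x)=-\widehat\nu_q(x)$ (equation~\eqref{eq:0-sum}), so in fact the whole outcome vector $\ovec{x}$ is constant on each parity class once $x\ge X$; that is, the sequence $(\ovec{x})_x$ is eventually periodic with period $2$, which is the assertion of the theorem.

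There is essentially no obstacle remaining here, since all of the difficulty was front-loaded into the case analysis of \thref{lem:rowsplitmon}. The only points that deserve an explicit word are: (i) integrality of the outcomes, which is what upgrades ``bounded and monotone'' to ``eventually constant'' rather than merely ``convergent,'' and which follows at once from the $\pm 1$ structure of the recursion together with \thref{lem:pari}; and (ii) the fact that a \emph{single} threshold works for all budget partitions simultaneously, which is immediate because for a fixed $\tb$ there are only finitely many values of $p$. The quantitative refinement---that this threshold may be taken quadratic in $\tb$---is a separate matter, handled in Section~\ref{sec:automaton} via the bidding automaton and \thref{thm:convbou}.
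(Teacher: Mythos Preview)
Your proof is correct and follows essentially the same approach as the paper: combine \thref{lem:rowsplitmon} (monotonicity along a fixed parity) with \thref{lem:bound} (boundedness) to conclude that each column stabilizes. You are in fact slightly more careful than the paper, which just says the values ``converge to a finite constant'' without explicitly invoking integrality to pass from convergence to eventual constancy.
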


\begin{proof}
%The monotonicity property implies that the modulus value of outcome of the table is non-decreasing with the heap sizes for $x = 1,3,5, \cdots$ and also for $x = 2,4,6, \cdots$. The boundedness with respect to the outcome is proved in \thref{bound}. Thus, monotone convergence theorem gives us the proof.\ur{I am not sure where to find this. Perhaps I changed the name of theorem already...?}

Fix any parity for the heap sizes. By \thref{lem:rowsplitmon}, the game values are column-wise non-decreasing weakly to the left of $\tb/2$, and non-increasing to the right of $\tb/2$. Therefore, since, for each column, by \thref{lem:bound} their absolute values are bounded, they converge to a finite constant. 
%Since we have two sequences, one for odd numbers and other for even numbers, the equilibrium outcome is periodic, periodic 2.
\end{proof}
\subsection{A Bidding Automaton and a Quadratic Bound}\label{sec:automaton}
In this subsection, we analyze the convergence of the game value for unitary games. We will determine a quadratic bound in the total budget $\tb$ for the demonstrated game value `convergence'. We will do this via explicit bounds of the outcome vector. 

We make use of functions defined on the even and odd integers, that later will represent the possible budget partitions for even and odd total budgets, respectively. See \thref{lem:bidaut} below. Define nearest integer functions $\alpha_{\even}$ and $\alpha_{\odd}$ (the indexes will correspond to the parities of the heap sizes) on the {\bf even} integers (for \tb\ even inputs will correspond to  $p-q=2p-\tb$), by
\begin{equation*}
  \alpha_{\even}(\delta)=\begin{cases}
    \floor*{\frac{\delta+1}{2}}, & \text{if } \delta\equiv 0\pmod 4;\\\\
    \ceil*{\frac{\delta+1}{2}}, & \text{otherwise}.
  \end{cases}
\end{equation*}

\begin{equation*}
  \alpha_{\odd}(\delta)=\begin{cases}
    \ceil*{\frac{\delta+1}{2}}, & \text{if } \delta\equiv 0\pmod 4;\\\\
    \floor*{\frac{\delta+1}{2}}, & \text{otherwise}.
  \end{cases}
\end{equation*}

Let $\iota:\Z\rightarrow \{0,1\}$ be the function $\iota(x)=1$ if and only if $x>0$. Define a nearest integer function, $\beta$ on the {\bf odd}  integers (for \tb\ odd inputs will correspond to  $p-q=2p-\tb$),  by
\begin{equation*}
  \beta(\delta)=\begin{cases}
    \floor*{\frac{\delta}{2}}+\iota(\delta), & \text{if } \delta\equiv 1\pmod 4.\\\\
    \ceil*{\frac{\delta}{2}}+\iota(\delta), & \text{otherwise}.
  \end{cases}
\end{equation*}

%\ur{I am not using this nearest integer function yet. It corresponds to the odd \tb. I did not yet find the corresponding formula for even \tb although it should be easy to find if we need them.}

% Define a function $\tau:\Z\rightarrow \N_0$ by 
% \begin{equation}
%   \tau(\delta)=\begin{cases}
%     0, & \text{if } \delta\in\{-2, -1, 0 \}.\\
%     n, & \text{if } \delta\in\{4n-3, 4n-2,4n-1,4n \},n\in\N.\\
%     -n, & \text{if } \delta\in\{-4n-2, -4n-1,-4n,-4n+1 \},n\in\N.\\
%   \end{cases}
% \end{equation}
% \begin{lem}

% %For all $\delta\equiv 0\pmod 2$, then $ \alpha_{\delta\pmod 4}(\delta)=2\tau(\delta)$, and otherwise $\alpha_{\delta\pmod 4}(\delta)=1-2\tau(\delta)$.
% \end{lem}
% \begin{proof}
% %For the even $\delta$, we have $\alpha_0(0)=0=2\tau(0)$. Next, $\alpha_2(-2)=0=2\tau(-2)$, $\alpha_0(-4)=$ $\alpha_2(2)=2=2\tau(2)$. 

% %Moreover $\alpha_3(-1)=0=1-2\tau(-1)$. 
% \end{proof}
% %For all $\delta\in\Z$, let $f(\delta)=2\tau(\delta)$, and let $g(\delta)=1-2\tau(\delta)$. 

For any fixed (budget) $\tb\in\N_0$, we define an automaton $\a$ with $\tb+1$ nodes and 2 states per node, such that, for all states $j\in\{\even,\odd\}$,  for all nodes $p\in\{0,\ldots , \tb\}$ $$(j,p)\overset{\a}{\longrightarrow} (j^c,q),$$  with updates, for all $j,p$,
$$\a(j,p)=1-\a(j^c,q),$$ 
where $j^c$ is the complement of $j$, 
and where initial values are assigned to say all even states. Note that, by definition, independently of initial values, the even states are reflexive, and so are the odd ones. In Lemma~\ref{lem:bidaut}, we show that the $\alpha$ and $\beta$  functions are automaton $\a$ duals in the following sense.

\begin{lemma}\thlabel{lem:bidaut}
For all $p$, let $\a(\evenheap,p) = \alpha_\evenheap ( 2p - \tb )$. Then, for all $p$, $\a(\oddheap,p) = \alpha_\oddheap( 2p - \tb )$. For all $p$, let $\a( \evenheap,p) = \beta ( 2p - \tb )$. Then, for all $p$, $\a(\oddheap,p) = \beta( 2p - \tb )$. 
\end{lemma}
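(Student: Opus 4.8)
The statement has the shape of a fixed-point/consistency check: the automaton $\a$ is defined by the single relation $\a(j,p) = 1 - \a(j^c,q)$ (with $q = \tb - p$), and the claim is that $\alpha_\evenheap$ on the even states forces $\alpha_\oddheap$ on the odd states (and likewise $\beta$ on the even states forces $\beta$ on the odd states, since $\beta$ is self-dual). So the plan is purely to verify the algebraic identity $\alpha_\oddheap(2q-\tb) = 1 - \alpha_\evenheap(2p-\tb)$ for all $p$, and analogously $\beta(2q-\tb) = 1 - \beta(2p-\tb)$. Writing $\delta = 2p-\tb$, we have $2q - \tb = \tb - 2p = -\delta$, so the whole lemma reduces to the two clean arithmetic identities
\begin{equation}\label{eq:aut-dual-even}
\alpha_\oddheap(-\delta) = 1 - \alpha_\evenheap(\delta), \qquad \beta(-\delta) = 1 - \beta(\delta),
\end{equation}
valid for all even $\delta$ in the first case and all odd $\delta$ in the second.

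\textbf{Step 1: reduce to \eqref{eq:aut-dual-even}.} First I would note that the automaton relation is symmetric in the sense that if $\a(\evenheap,p)$ is prescribed for every $p$, then $\a(\oddheap,p)$ is \emph{determined}, namely $\a(\oddheap,p) = 1 - \a(\evenheap, \tb - p)$ (apply the update rule with $j = \oddheap$, using reflexivity of the even states). Hence to prove the first sentence of the lemma it suffices to check that the function $p \mapsto 1 - \alpha_\evenheap(2(\tb-p)-\tb) = 1 - \alpha_\evenheap(-(2p-\tb))$ equals $\alpha_\oddheap(2p-\tb)$; with $\delta = 2p-\tb$ this is exactly the first identity in \eqref{eq:aut-dual-even}. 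The second sentence is the same computation with $\beta$ in both slots.

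\textbf{Step 2: verify the parity-case arithmetic.} For the $\alpha$ identity, split on $\delta \bmod 4$. When $\delta \equiv 0 \pmod 4$ we have $-\delta \equiv 0 \pmod 4$ as well, so $\alpha_\evenheap(\delta) = \floor{(\delta+1)/2}$ and $\alpha_\oddheap(-\delta) = \ceil{(-\delta+1)/2}$; I would then use $\ceil{(1-\delta)/2} = -\floor{(\delta-1)/2} = 1 - \floor{(\delta+1)/2}$ (the last step because $\floor{(\delta+1)/2} = \floor{(\delta-1)/2}+1$ when $\delta$ is even), giving $\alpha_\oddheap(-\delta) = 1 - \alpha_\evenheap(\delta)$. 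When $\delta \equiv 2 \pmod 4$, then $-\delta \equiv 2 \pmod 4$ as well, so $\alpha_\evenheap(\delta) = \ceil{(\delta+1)/2}$, $\alpha_\oddheap(-\delta) = \floor{(-\delta+1)/2}$, and the mirror-image floor/ceiling identity $\floor{(1-\delta)/2} = 1 - \ceil{(\delta+1)/2}$ closes it. (Odd $\delta$ never arises here since inputs are even.) The $\beta$ identity is handled identically: $\delta$ odd means $-\delta$ is odd with the same residue mod $4$ only when $\delta \equiv 1,3$ swap — actually $-\delta \equiv 3 \pmod 4$ iff $\delta \equiv 1 \pmod 4$ — so the two branches of $\beta$ get exchanged under $\delta \mapsto -\delta$, and one checks $\floor{-\delta/2} + \iota(-\delta) = 1 - (\ceil{\delta/2} + \iota(\delta))$ using $\iota(-\delta) + \iota(\delta) = 1$ for $\delta$ odd (nonzero) together with $\floor{-\delta/2} = -\ceil{\delta/2}$. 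I would lay this out as a short two-line case table for each of $\alpha$ and $\beta$.

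\textbf{Main obstacle.} There is no conceptual obstacle; the only thing that can go wrong is a sign or off-by-one slip in the floor/ceiling manipulations, and in getting the $\bmod 4$ bookkeeping right — in particular remembering that $\delta \equiv 0 \pmod 4 \iff -\delta \equiv 0 \pmod 4$ but $\delta \equiv 1 \pmod 4 \iff -\delta \equiv 3 \pmod 4$, which is exactly why the $\iota$-correction term in $\beta$ is needed for self-duality while $\alpha$ needs none. I would double-check the identities numerically on a few small values (e.g. $\delta = 0, 2, 4$ for $\alpha$ and $\delta = 1, 3, 5$ for $\beta$) before committing the case analysis to print.
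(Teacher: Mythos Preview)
Your proposal is correct and follows essentially the same approach as the paper: both reduce the automaton consistency check to the arithmetic identities $\alpha_\oddheap(-\delta)+\alpha_\evenheap(\delta)=1$ and $\beta(-\delta)+\beta(\delta)=1$, and then verify these by floor/ceiling manipulation. The paper compresses your mod-$4$ case split into the single identity $\ceil{(\delta+1)/2}+\floor{(-\delta+1)/2}=1$ (noting that $\alpha_\evenheap(\delta)$ and $\alpha_\oddheap(-\delta)$ always pick opposite roundings), and handles $\beta$ by splitting on the sign of $\delta$ rather than its residue, but the content is the same.
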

\begin{proof}
In case of even \tb, we want to justify that $\alpha_\even(2p-\tb) = 1-\alpha_\odd(\tb-2p)$, which holds since, for all $\delta$, $\ceil{\frac{\delta +1}{2}}+\floor{\frac{-\delta+1}{2}}=1$. Namely, if $\delta$ is odd, then we cancel the nearest integer functions and the equality holds; if $\delta=2m$ is even, then we get $m+\ceil{1/2}-m+\floor{1/2}=1$.

In case of odd \tb, we want to justify that $\beta(2p-\tb) = 1-\beta(\tb-2p)$. Suppose first that $p > \tb/2$. Then $\beta(2p-\tb)+\beta(\tb-2p) = \floor{\delta/2}+1+\ceil{-\delta/2}$ or $\ceil{\delta/2}+1+\floor{-\delta/2}$; in either case these expressions equal one. Because \tb\ is odd, the other case is $i<\tb/2$, and so the argument is analogous. 
\end{proof}

A (generic) \emph{$\tb$-bidding automaton} is a finite state machine with `$\tb + 1$' states, directed edges between the states, and an update rule for each directed edge. Each state has an outgoing edge corresponding to a feasible winning bid from one of the players. Clearly, for any \tb,\ \a\ is a bidding automaton where the winning bid is 0, i.e.  Left wins a 0-tie. We improve on \thref{thm:const}, by describing an explicit bound, and begin with a lemma. 

\begin{lemma}\thlabel{lem:boundaut}
For any given total budget \tb, and any Left budget $\hat p$, the entries of automaton $\a$ bound the outcome $\o{\cdot}{p}$. If $p\ge \tb/2$, then $\o{x}{p}\le\a(j,p)$, where the parity of $x$ is $j$, and otherwise $\o{x}{p}\ge\a(j,p)$.%, where the parity of $x$ is $j$
\end{lemma}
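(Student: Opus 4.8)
The plan is to prove \thref{lem:boundaut} by induction on the heap size $x$, mirroring the case analysis of \thref{lem:rowsplitmon} but now tracking the explicit automaton values rather than just monotonicity. For the base case, note that at $x=0$ we have $\o{0}{p}=0$ for all $p$, and the automaton values $\a(j,0)$ are assigned so that the claimed inequalities hold (this needs the initial values of $\a$ to be compatible with $0$: for $p\ge\tb/2$ we need $\a(\evenheap,0)\le 0$ and for $p<\tb/2$ we need $\a(\evenheap,0)\ge 0$, which is forced by $\a(j,p)=1-\a(j^c,q)$ together with the sign behavior of $\alpha,\beta$ around $\delta=0$). One should also verify $x=1$ directly, using \thref{lem:pari} (odd heaps have odd outcome) so the bound becomes sharp.

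For the inductive step, fix $x>0$ and suppose the bound holds for all smaller heap sizes. Because the automaton transition is $(j,p)\overset{\a}{\to}(j^c,q)$ with update $\a(j,p)=1-\a(j^c,q)$, the automaton is precisely modelling the $0$-tie play: from $(x,\hat p)$ both players bid $0$, Left wins the tie, and the position becomes $(x-1,q)$ with a $+1$ to the score. So the automaton prediction is exactly $\a(j,p)=1-\a(j^c,q)$, and by induction $\a(j^c,q)$ bounds $\o{x-1}{q}$ on the correct side: if $p\ge\tb/2$ then $q\le\tb/2$, so $\o{x-1}{q}\ge\a(j^c,q)$, hence $1-\o{x-1}{q}\le 1-\a(j^c,q)=\a(j,p)$; symmetrically for $p<\tb/2$. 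The content of the lemma is then that the $0$-tie line of play is the relevant extremal one. For the direction $p\ge\tb/2$ we must show $\o{x}{p}\le\a(j,p)$, i.e. that Right can hold Left to at most the $0$-tie value; by \thref{thm:unitary} it suffices to exhibit a Right response to each Left bid $\ell$. If Left $0$-ties, we are exactly in the computed case. If Left bids $\ell\ge 1$ and wins, the outcome is $1+\o{x-1}{p-\ell}$, and we use the induction hypothesis at $x-1$ together with budget monotonicity (\thref{lem:col_mon}) and the structure of $\alpha,\beta$ (via \thref{lem:bidaut}) to see this does not exceed $\a(j,p)$; if Left's bid is met by a Right win at $r>\ell$, we invoke the $4$-gap-style arguments of \thref{lem:rowsplitmon} to show Right benefits. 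The case $p<\tb/2$ is symmetric with the roles of the players and the direction of the inequality reversed, now showing Left can always secure at least the $0$-tie value.

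The technical heart, and the main obstacle, is the same as in \thref{lem:rowsplitmon}: controlling what happens when the winning bid pushes the budget partition across the sign border $\tb/2$, where $\o{x-1}{\cdot}$ changes sign and the automaton functions $\alpha_\evenheap,\alpha_\oddheap,\beta$ switch between $\floor{\cdot}$ and $\ceil{\cdot}$. One has to check that the parity-dependent rounding in the definitions of $\alpha_{\even}, \alpha_{\odd}$ and $\beta$ is exactly what is needed so that the induction from parity $j^c$ (heap size $x-1$) to parity $j$ (heap size $x$) preserves the bound — this is where \thref{lem:bidaut}, which says the $\alpha$ and $\beta$ functions are automaton duals, does the bookkeeping. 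I expect that away from the border the argument is a short computation from budget monotonicity and \thref{lem:bound}, and that essentially all subtlety is concentrated in the one or two subcases where a deviating player's bid straddles $\tb/2$; there the relevant inequality will be forced by combining \thref{lem:tie} (Tie Monotonicity), \thref{lem:markdom}, and the explicit value of $\a$ at the two nodes adjacent to the border, exactly as the ``$4$-gap principle'' is used in the proof of \thref{lem:rowsplitmon}.
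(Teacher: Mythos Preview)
Your inductive scheme and the identification of the $0$-tie as the automaton's transition are correct and match the paper. There is, however, a slip in your base case: for $p\ge\tb/2$ you need $\a(\evenheap,p)\ge 0$ (so that $\o{0}{p}=0\le\a(\evenheap,p)$), not $\le 0$, and you wrote $\a(\evenheap,0)$ where you presumably meant $\a(\evenheap,p)$. The boundary $p=q=\tb/2$ (even $\tb$) also needs separate care, since then $q\ge\tb/2$ and your claimed inductive inequality ``$\o{x-1}{q}\ge\a(j^c,q)$'' points the wrong way.

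The more substantive divergence is in the inductive step. The paper does \emph{not} import the $4$-gap machinery from \thref{lem:rowsplitmon}. Instead it replaces each outcome at heap size $x$ by the corresponding automaton value via the induction hypothesis and reduces everything to three purely arithmetic inequalities among the $\alpha$ (respectively $\beta$) functions, for instance
\[
\alpha_\odd\bigl(2(p-\ell)-\tb\bigr)+1\;\le\;\alpha_\even(2p-\tb),\qquad \ell>0,
\]
which are then checked by hand from the floor/ceiling definitions (the tight case is $\ell=1$; for $\ell\ge 2$ the inequality is slack). This is much shorter than rerunning a \thref{lem:rowsplitmon}-style case analysis, and it largely sidesteps your border-crossing concern: the $\alpha$/$\beta$ inequalities hold globally, and in the sub-cases where the induction hypothesis points the wrong way (e.g.\ $p-\ell<\tb/2$, so induction gives a lower rather than an upper bound on $\o{x-1}{p-\ell}$), the crude estimate from \thref{lem:signborder} and \thref{lem:pari} already suffices. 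Your proposed route through Tie Monotonicity, \thref{lem:markdom}, and the $4$-gap principle is not obviously wrong, but those tools were designed for comparing $\o{x}{p}$ with $\o{x-2}{p}$, not with a fixed target vector; adapting them here would be extra work that the direct arithmetic on $\alpha,\beta$ avoids.
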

\begin{proof}
We prove, by induction that the values as prescribed by automaton $\a$ cannot be exceeded. 

Consider first even $\tb$. In this case we are concerned with the $\alpha$ functions. For odd heap sizes $x$, and $2p\ge TB$, we show that 
\begin{enumerate}
\item $\o{x}{p+r}-1\ge\alpha_\even(2p-\tb)$, $r>0$,
\item $\o{x}{p-\ell}+1\le\alpha_\even(2p-\tb)$, $\ell>0$, and 
\item $-\o{x}{q+\ell}+1\le\alpha_\even(2p-\tb)$, $\ell\ge 0$. 
\end{enumerate}

That is, by induction, we show,

\begin{enumerate}
\item $\alpha_\odd(2(p+r)-\tb)-1\ge\alpha_\even(2p-\tb)$, $r>0$,
\item $\alpha_\odd (2(p-\ell)-\tb)+1\le\alpha_\even(2p-\tb)$, $\ell>0$, and 
\item $-\alpha_\odd (2(q+\ell))-\tb)+1\le\alpha_\even(2p-\tb)$. 
\end{enumerate}

Note that Case 3 has already been justified for $\ell=0$ in \thref{lem:bidaut}, and when $\ell>0$ obviously the inequality still holds.

For Case 2, the tightest situation is when $\ell=1$ i.e. when $2(p-1)-\tb\equiv 2\pmod 4$, i.e. if the outcome equals $\floor*{\frac{2(p-1)-\tb+1}{2}}+1$ and we see that it equals $ \floor*{\frac{2p-\tb+1}{2}}$, i.e., this is the case when $2p-\tb\equiv 0\pmod 4$ and  $2(p-1)-\tb\equiv 2\pmod 4$. If $\ell = 2$ then the outcome equals $\ceil*{\frac{2(p-2)-\tb+1}{2}}+1=\ceil*{\frac{2p-\tb+1}{2}}-1\le \floor*{\frac{2p-\tb+1}{2}}$. If $\ell>2$ the inequality is immediate.

The remaining case for even \tb\ and the cases for odd \tb\ are justified analogously.
\end{proof}
%This is achieved by first establishing the behavior of a certain 0-bidding automaton.
%\begin{thm} [The 0-Bidding Automaton]\thlabel{thm:bid0aut}
%Table \ref{tab:0bidodd} is 0-bidding stable.\ur{We improve style here..}
%\end{thm}
%\begin{lem}
%The entries of the automaton $\a$ bounds the outcome entries for any \tb-game, and any heap size.
%\end{lem}
%\begin{proof}
%We assume that the statement is true for heap size $x-1$ and prove it for $x$ (base case for $x=0$ is clear). 
%\end{proof}

We say that a game \emph{converges at heap size} $x$ if, for all $p$, then $\o{x}{p}=\o{x+2}{p}$, but there is a $p$ such that $\o{x}{p}\ne \o{x-2}{p}$. Observe that convergence at $x$ implies that $\o{x+1}{p}=\o{x+3}{p}$, for all $p$. We say that a game \emph{converges} if it converges at some $x<\infty$.%\ur{this is a different convergence than the one defined in \cite{csg}; in that paper, convergence was attached to the action space, which is trivial here.}

\begin{thm}[Convergence Bound]\thlabel{thm:convbou}
The upper bound for convergence is the sum of the entries in the 0-bidding automaton, and it is of order of magnitude $O(\tb^2)$.%\ur{Improve statement.}
\end{thm}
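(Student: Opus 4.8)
The plan is to combine the convergence result (Theorem~\ref{thm:const}) with the explicit automaton bound (Lemma~\ref{lem:boundaut}) and a monotonicity-rate argument derived from Heap Monotonicity (Lemma~\ref{lem:rowsplitmon}). Fix a parity $j\in\{\even,\odd\}$ for the heap sizes and fix a column $p$. By Lemma~\ref{lem:rowsplitmon}, the sequence $(\o{x}{p})_{x\equiv j}$ is monotone — non-decreasing if $2p\ge\tb$ and non-increasing if $2p<\tb$ — and by Lemma~\ref{lem:boundaut} it is bounded by the automaton entry $\a(j,p)$, with the bound on the correct side of the monotonicity. The starting value at $x=0$ (or $x=1$) is $0$. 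So the total number of heap-size steps of that parity on which $\o{\cdot}{p}$ can still strictly change is at most $|\a(j,p)-o_{\text{start}}|\le|\a(j,p)|$, because each step of the play that moves closer to convergence must change the value by at least one unit in the appropriate direction (Lemma~\ref{lem:pari} guarantees the outcome jumps by at least $1$, and in fact by $2$, when it changes, but $1$ is enough for the bound). First I would make this precise: convergence at heap size $x$ (as just defined) requires that for every column $p$ the value has already reached its limit, hence the index of convergence for the whole table is at most $2\sum_{p} |\a(j,p)|$ steps past the base case, summed over the (two) parities — i.e.\ bounded by (a constant times) the sum of the absolute values of the automaton entries.

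Next I would identify that sum with "the sum of the entries in the $0$-bidding automaton." Here one uses Lemma~\ref{lem:bidaut} and the closed forms of $\alpha_\even,\alpha_\odd$ (for $\tb$ even) and $\beta$ (for $\tb$ odd): the automaton entry at node $p$ is, up to rounding, $\tfrac12(2p-\tb)+\tfrac12=p-\tfrac{\tb}{2}+\tfrac12$ in magnitude, so $|\a(j,p)|=\Theta(|2p-\tb|)$. Summing over $p\in\{0,\dots,\tb\}$ gives $\sum_{p}|2p-\tb|\le \sum_{p}\tb=\Theta(\tb^2)$, more precisely about $\tb^2/2$. Combined with the previous paragraph, the convergence index is $O(\tb^2)$. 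I would present this as: the upper bound for convergence equals $\sum_{p=0}^{\tb}\a(j,p)$ (interpreting the entries with their signs as "progress budget"), and since each automaton entry has magnitude at most $\lceil\tb/2\rceil+1$ (this also matches Lemma~\ref{lem:bound}), the sum is at most $(\tb+1)(\lceil\tb/2\rceil+1)=O(\tb^2)$.

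The main obstacle I anticipate is the bookkeeping that turns "each column converges in at most $|\a(j,p)|$ parity-steps" into a statement about when the whole table converges simultaneously, and making the phrase "the sum of the entries in the 0-bidding automaton" rigorous — in particular verifying that the automaton's update rule $\a(j,p)=1-\a(j^c,q)$ together with reflexivity of the even/odd states means the automaton itself stabilizes after exactly that many steps, and that the true outcomes cannot lag behind this automaton (which is exactly the content of Lemma~\ref{lem:boundaut}, applied on both sides of the sign border). A secondary subtlety is handling the two parities and the two "sides" of the sign border $\lceil\tb/2\rceil$ uniformly; I would do the case $2p\ge\tb$ in detail and remark that $2p<\tb$ is symmetric via the zero-sum relation \eqref{eq:0-sum}. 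The asymptotic $O(\tb^2)$ part is then a routine summation and I would not belabor it.
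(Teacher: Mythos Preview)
Your proposal is correct and follows essentially the same route as the paper: combine Heap Monotonicity (\thref{lem:rowsplitmon}) with the automaton bounds (\thref{lem:boundaut}) so that, for each parity and each column $p$, the value can strictly change at most $|\a(j,p)|$ many times; since non-convergence at a parity step forces at least one column to change, the total number of pre-convergence steps is bounded by $\sum_p |\a(j,p)|$, and the explicit $\alpha/\beta$ formulas give this sum as $O(\tb^2)$.

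Two small remarks. First, the paper additionally invokes \thref{thm:budadv} to note that already from heap size $0$ to $2$ roughly half of the columns have moved, which accounts for the ``$-\tb$'' correction in its displayed sums; you omit this, but it only sharpens the constant and does not affect the $O(\tb^2)$ conclusion. Second, your side comment that one must ``verify that the automaton itself stabilizes after exactly that many steps'' slightly misreads the object: $\a$ is a static assignment (the fixed point of the $0$-tie update, \thref{lem:bidaut}), not an iterated process, so there is nothing to stabilize --- \thref{lem:boundaut} already gives the bound directly, and your counting argument does the rest.
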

\begin{proof}
We use the bounds of the outcomes for each budget partition as prescribed by the 0-bidding automaton \a. By \thref{lem:boundaut} if the game did not converge before the entries of \a\ have been reached, it converges at the first occurrence of the \a\ entries. We use \thref{thm:budadv} to see that (nearly) half the entires change by going from heap size 0 to heap size 2. 

Thus, for an even total budget \tb, we bound the maximal number of rounds induced by the 0-bidding automaton \a, as $$1+\sum_{\delta\in\{0,2,\ldots,\tb\}} \left(\frac{\delta+1}{2}+\frac{\delta+1}{2}\right)-\tb= 1+(\tb/2+1)\tb/2-\tb/2=O(\tb^2),$$ and a similar convergence bound holds for the odd sized total budgets, 
$$1+\tb/2+\sum_{\delta\in\{1,3,\ldots,\tb\}} \left(\frac{\delta}{2}+\frac{\delta}{2}\right)-\tb = 1+\ceil*{\tb/2}^2-\tb/2=O(\tb^2),$$
\end{proof}

If the upper bounds are obtained, then this proves that the equilibrium bids are zero-ties. Namely, the edges of the automaton $\a$ correspond to 0-bids. (There may be other optimal bids as well, but we do not classify those here.) In the next section, we illustrate some feasible bids for automaton $\mathcal A$. 

\begin{conj}[Automaton Game Correspondence]\label{conj:conv}
Consider any unitary game. The entries of the corresponding automaton $\mathcal A$ are obtained as outcomes, for all heaps of size at least $O(\tb^2)$. 
\end{conj}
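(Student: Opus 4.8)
The plan is to pin down the common limit of the two period-$2$ outcome subsequences and to identify it with the automaton $\a$. By Theorem~\ref{thm:const} (via Lemma~\ref{lem:rowsplitmon} and Lemma~\ref{lem:bound}), for each parity $j\in\{\even,\odd\}$ the column $p\mapsto\o{x}{p}$ is eventually constant in $x$ among heaps of parity $j$; write $O^{(j)}_p$ for the limit. First I would check that the pair $(O^{(\even)},O^{(\odd)})$ solves the stationary maximin system obtained by sending $x\to\infty$ in the post-uniqueness recursion of Theorem~\ref{thm:unitary}: for all $p$, writing $q=\tb-p$,
\[
O^{(j)}_p=\max_{0\le\ell\le p}\ \min_{0\le r\le q}\ \{\,-O^{(j^{c})}_{q+\ell}|_{\ell=r}+1,\ \ O^{(j^{c})}_{p+r}|_{\ell<r}-1\,\}.
\]
Here the parity flips to $j^{c}$ because the heap decreases by one, and the exchange of limit with the finite $\max/\min$ is legitimate since outcomes are integer valued and uniformly bounded (Lemma~\ref{lem:bound}), hence each optimisation stabilises once the relevant entries of rows $x$ and $x-1$ have converged.

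The crux is to show that in this stationary system both the outer $\max$ and the inner $\min$ are attained at the $0$-tie $\ell=r=0$, i.e.\ $O^{(j)}_p=1-O^{(j^{c})}_q$; equivalently, that the bounds of Lemma~\ref{lem:boundaut} are tight in the limit. One inequality is free from Lemma~\ref{lem:boundaut}: $O^{(j)}_p\le\a(j,p)$ when $2p\ge\tb$ and $O^{(j)}_p\ge\a(j,p)$ when $2p<\tb$. For the reverse I would show that no positive winning bid helps either player in the limit, using the available structure of $O^{(j)}$: it is non-decreasing in $p$ (Lemma~\ref{lem:col_mon}); it obeys the sign behaviour of Lemma~\ref{lem:signborder}; its consecutive differences lie in $\{0,2\}$ (Lemma~\ref{lem:col_mon} and Proposition~\ref{lem:4th_ineq} together with the parity Lemma~\ref{lem:pari}); and it is squeezed between $\a$-values on the two sides of $\tb/2$. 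A Left winning bid of size $k>0$ yields $\o{x-1}{p-k}+1$, and for $2p\ge\tb$ the $\a$-bound gives $\o{x-1}{p-k}+1\le\a(j^{c},p-k)+1$; one checks this is $\le\a(j,p)$ because $\alpha_{\even},\alpha_{\odd}$ (resp.\ $\beta$) increase with slope essentially $1$ in $p$ and lose at most one unit of $\bmod 4$ rounding — precisely the computation already in the proof of Lemma~\ref{lem:boundaut}, Cases 1--3. A Right winning bid of size $r>0$ is handled symmetrically with the opposite-side bound. Thus every deviation is weakly dominated by the $0$-tie and the system collapses to $O^{(j)}_p=1-O^{(j^{c})}_{\tb-p}$.

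It then remains to observe that the only solution of $O^{(j)}_p=1-O^{(j^{c})}_{\tb-p}$ satisfying the side constraints above is $\a$ itself. This is finite bookkeeping: the relation pairs the index set via $(\even,p)\leftrightarrow(\odd,\tb-p)$, and the two equations inside a pair coincide, leaving one free parameter per pair; monotonicity, integer steps of size $0$ or $2$, the correct parity, the sign border at $\lceil\tb/2\rceil$, and the Lemma~\ref{lem:boundaut} squeeze remove all freedom, and one verifies directly from the definitions of $\alpha_{\even},\alpha_{\odd},\beta$ that they meet every constraint (self-consistency $\a(j,p)=1-\a(j^{c},q)$ is already Lemma~\ref{lem:bidaut}). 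Finally, combining $O^{(j)}=\a$ with Theorem~\ref{thm:convbou}, which caps the convergence heap size by the sum of the $\a$-entries $=O(\tb^2)$, yields that the $\a$-entries are the outcomes for all heaps of size at least $O(\tb^2)$.

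The step I expect to be the main obstacle is the crux: tightness of Lemma~\ref{lem:boundaut}, i.e.\ that $0$-ties are equilibrium play in the limit. As in the proof of Lemma~\ref{lem:rowsplitmon} this will split into the $\tb$ even and $\tb$ odd cases, will have to track the $\bmod 4$ rounding of $\alpha$ and $\beta$, and the genuinely delicate subcases are those where a player's bid straddles the sign border and can trade a point for one unit of budget exactly at the boundary. A cleaner alternative I would try first is a direct strategy argument for the matching lower bound, bypassing the stationary system: for $2p\ge\tb$ and all sufficiently large heaps, exhibit a Left strategy (bid $0$ most turns, bid $1$ to bank a point whenever the budget is still strictly above the border, in the spirit of the Right strategy in the proof of Lemma~\ref{lem:bound}) guaranteeing value at least $\a(j,p)$, together with a symmetric Right strategy for $2p<\tb$.
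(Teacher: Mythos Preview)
The statement you are trying to prove is labelled a \emph{Conjecture} in the paper; the authors do not supply a proof and, in the discussion, explicitly say that they ``currently do not have an argument'' for the closely related tightness question. So there is no paper proof to compare against: you are attempting to settle an open problem.

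On the merits, your crux step does not go through as written. You want the reverse of \thref{lem:boundaut}, i.e.\ $O^{(j)}_p\ge \a(j,p)$ for $2p\ge\tb$ (and symmetrically on the other side). To obtain a lower bound on $O^{(j)}_p$ you must lower-bound what Left can guarantee. If Left bids $0$, Right may deviate to $r>0$, and the resulting value is $O^{(j^c)}_{p+r}-1$; controlling this requires a \emph{lower} bound on $O^{(j^c)}_{p+r}$. But $p+r>\tb/2$, and \thref{lem:boundaut} gives only an \emph{upper} bound there. The calculation you actually write (``a Left winning bid of size $k>0$ yields $\o{x-1}{p-k}+1\le \a(j^c,p-k)+1\le \a(j,p)$'') bounds Left's deviation from above, which is the direction already proved in \thref{lem:boundaut}, not the reverse. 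More structurally, the $0$-tie identity $O^{(j)}_p=1-O^{(j^c)}_{q}$ couples an index on the $p\ge\tb/2$ side to one on the $q\le\tb/2$ side; feeding in the one-sided \thref{lem:boundaut} bounds on each side just reproduces the known inequality rather than the missing one, so the argument is circular.

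Your Step~4 (uniqueness of the solution of $O^{(j)}_p=1-O^{(j^c)}_{\tb-p}$ under the stated side constraints) is also not established. The relation pairs $(\even,p)$ with $(\odd,\tb-p)$ and leaves one free integer per pair; the \thref{lem:boundaut} squeeze is one-sided on each half, so shifting a pair by $\pm 2$ (respecting parity and the step bound from Proposition~\ref{lem:4th_ineq}) is not immediately excluded. You would need a separate argument to rule this out. Your alternative plan---a direct Left strategy achieving $\a(j,p)$ for large heaps, dual to the Right strategy in the proof of \thref{lem:bound}---is the natural line of attack, but it is precisely the missing piece the authors leave open; the subtlety is that a naive ``bid $1$ to bank a point, then absorb a $0$-tie loss'' cycle nets zero, so one must exploit the $\bmod\,4$ structure of $\alpha$ and $\beta$ to squeeze out the extra unit per two budget steps, and this has to be made to work uniformly across the sign border.
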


\subsection{The Number of Forced Wins}
As an independent result, we count the number of forced wins a player with a larger budget can have.

\begin{thm}[Budget Advantage]\thlabel{thm:budadv}
Consider any game $(\tb;x,p,m;c)$, with $\tb=p+q$. Suppose that  Left has the marker. Then Left can force a win of the $x$ final moves if $p \geq (2^x-1)q + 2^{x-1} - 1$. If Right has the marker, then Left can force a win of the $x$ final moves if $p \geq (2^x-1)(q+1)$.
%Consider any heap size $x\in\N_0$ and a budget partition $(p,q)$ in a game $(\tb;x,p,1;s)$, where $p+q=\tb$ and $p \geq q$. Left wins first $x$ consecutive moves if $p \geq (2^x-1)q + 2^{x-1} - 1$. However, if we have the game $(\tb;x,p,0;s)$, then $p \geq (2^x-1)(q+1)$
\end{thm}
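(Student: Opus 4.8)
The plan is to prove both claims by induction on the number of remaining moves $x$, tracking the worst case for the player who is trying to force all $x$ wins (here Left). The key observation is that in a unitary game, Left forces a win of all $x$ final moves exactly when she can guarantee winning every one of the $x$ bids, so the combinatorial content is purely about the bidding dynamics: Left wins a round by out-bidding (or tying, if she holds the marker), pays the bid to Right, and the heap decreases by one. So I would set up a recursion on the threshold budget $T_m(x)$ — the minimal $p$ such that Left, holding the marker ($m=1$) or not ($m=0$), can force all $x$ of the remaining wins against a total budget $\tb = p+q$.

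\textbf{Base case.} For $x=0$ there is nothing to win, so any budget works; for $x=1$, Left holding the marker needs $p\ge q$, i.e. $p \ge 2^1-1)\cdot q + 2^0 - 1 = q$, and Left without the marker needs $p \ge q+1 = (2^1-1)(q+1)$. These match the claimed formulas.

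\textbf{Inductive step.} Suppose Left is at a position with $x$ moves left. To force win $x$, Left must first win the current bid. The adversarial Right will bid as high as possible to drain Left: I would argue that Right's optimal obstruction is to bid $q+1$ when Left lacks the marker (forcing Left to bid $q+1$ as well — wait, Left can only tie with the marker) — more carefully: if Left has the marker, the cheapest way for Left to guarantee this round is to bid $q+1$ if $q < p$... no. Let me restate: with the marker, Left can win the round by bidding $b$ iff either $b > $ Right's bid, or $b = $ Right's bid; Right, to make this round as costly as possible for Left subject to actually losing it, will bid $q$ (all of it) forcing Left to bid $q$ (tying, using the marker) — but then Left \emph{loses} the marker. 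So after this round: heap $x-1$, Left's budget $p - q$, Right's budget $q + q = 2q$, and \emph{Right} has the marker; but the total is still $\tb$. Wait — total should be preserved: $(p-q) + 2q = p+q = \tb$. Good. So the recursion is
\[
T_1(x) = q + T_0(x-1)\big|_{\text{new partition}},
\]
where after paying $q$, the new ``$q$'' (Right's budget, Left now without marker) is $2q$, so $p - q \ge T_0(x-1)$ with right-budget $2q$ means $p \ge q + (2^{x-1}-1)(2q+1)$. I would then check this equals $(2^x-1)q + 2^{x-1}-1$: indeed $q + (2^{x-1}-1)(2q+1) = q + 2^x q + 2^{x-1} - 2q - 1 = (2^x - 1)q + 2^{x-1} - 1$. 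Similarly, without the marker Left must strictly outbid, so she bids $q+1$, Right bids $q$; after the round heap is $x-1$, Left has $p - (q+1)$, Right has $q + (q+1) = 2q+1$, and Left \emph{gains} the marker (strict win keeps/does not transfer — actually in a strict win the marker does not move, so Right keeps... the marker matters only on ties, and here Left never had it; after Left's strict win, whoever had the marker still has it, namely Right). Hmm — so Left still lacks the marker: $T_0(x) = (q+1) + T_0(x-1)$ with right-budget $2q+1$, giving $p \ge (q+1) + (2^{x-1}-1)(2q+2) = (q+1) + (2^x - 2)(q+1) = (2^x-1)(q+1)$, as claimed.

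\textbf{Main obstacle.} The delicate point is justifying that Right's bid described above is genuinely Right's best obstruction — i.e., that Right cannot do better by bidding \emph{less} now (keeping reserves to contest a later round) or by conceding a round early. I would handle this with a monotonicity/exchange argument: spending by Right now only tightens Left's constraint weakly more than spending later, because the budget thresholds $T_m(x)$ are increasing and convex-like in $q$ (they grow like $2^x q$), so front-loading Right's expenditure is optimal for Right; conversely Left's stated strategy (bid exactly enough to win this round, no more) is optimal because over-bidding only wastes money. Formalizing this ``greedy is optimal for both sides'' claim — essentially that the game value of ``can Left force all remaining wins'' is determined by a simple recursion with no benefit to either side in delaying — is the crux; the arithmetic identities above are then routine. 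An alternative, cleaner route avoiding the game-theoretic optimality discussion: prove directly that the stated budget \emph{suffices} (exhibit Left's strategy: always bid the minimum needed to secure the round, and verify by the recursion that the budget never runs out), which is all the statement asserts (``Left can force a win''), and this requires only the inductive verification of the arithmetic, sidestepping the converse entirely.
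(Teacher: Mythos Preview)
Your proposal is correct and follows essentially the same approach as the paper: induct on $x$, have Left bid the minimum needed to guarantee the current round (namely Right's current budget, using the marker to tie, or Right's budget plus one without the marker), track the resulting budget partition, and verify the arithmetic. The paper organizes the induction ``forward'' (assume the formula for $k$ moves, compute Right's budget after those $k$ moves, and add the bid for move $k+1$), whereas you recurse ``downward'' (win round one, then apply the hypothesis to the remaining $x-1$ moves with updated $q$); these are the same argument in different directions, and your recursive formulation is arguably cleaner since it makes explicit that the marker case reduces to the no-marker case after a single step. Your ``main obstacle'' discussion about Right's optimal obstruction is, as you yourself note at the end, unnecessary: the theorem asserts only sufficiency, so exhibiting Left's strategy and checking that the budget suffices against the worst-case marker transfer is all that is required --- and the paper likewise proves only this direction.
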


\begin{proof}
We take the case when we have the configuration of the game as game $(\tb;x,p,1;c)$, where $p+q=\tb$ and $p \geq q$. To win the first move Left should at least bid $q$ dollars, i.e. $p \geq q$. So, Right has now at least $2q$ dollars and so Left to win the second round he must bid $2q+1$ dollars. He must have $q + (2q+1) = 3q+1$ dollars to win $2$ consecutive moves. Right has now at least $4q+1$ dollars. Left must bid $4q+2$ dollars to win the third consecutive round and in total he must have at least $q + (2q+1)+(4q+2) = 7q+3$ dollars. Similarly, he should bid $8q+4$ and $16q+8$ dollars to win the fourth and fifth consecutive moves. In total, he must have at least $q + (2q+1) + (4q+2) + (8q+4) + (16q+8) = 31q+15$ dollars to win 5 consecutive moves. 

We prove this by induction. We take the base case of $x=1$ move and we get $p \geq q$ which is true since if the budget is equal, Left wins by the marker. We assume it to be true for $x=k$ moves and prove it for $x=k+1$ moves. 

To win $x=k$ consecutive moves, Left's budget should be at least $(2^k-1)q + 2^{k-1} - 1$. To win $(k+1)^{\rm st}$ move, Left should bid at least `1' more than that of Right budget after $k$ move which is $(2^k-1)q + 2^{k-1} - 1 + q$, which is equal to $2^kq + 2^{k-1} - 1$. Hence Left must bid $2^kq + 2^{k-1}$. Hence total budget which must be available with Left after $k+1$ move is $\{(2^k-1)q + 2^{k-1} - 1\} + \{2^kq + 2^{k-1}\}$ = $2$ = $(2^{k+1}-1)q + 2^k - 1$. Hence induction holds.

When we have the configuration of the game $(\tb;x,p,0;c)$, Left should  at least bid $q+1$ dollars. Right now has $2q+1$ dollars. So, Left must have at least $(q+1) + (2q+2) = 3q+3$ dollars to win $2$ consecutive moves. Right now has $4q+3$ dollars for the third round and so Left must have at least $(q+1) + (2q+2) + (4q+4) = 7q + 7$ dollars to win $3$ consecutive moves.

We prove this by induction. We take the base case of $x=1$ move and we get $ p \geq q+1 $ which is true since if budgets are equal, Left wins by bidding `1' more than that of the bid of Right, which can be at most $q$.  We assume it to be true for $x=k$ moves and prove it for $x=k+1$ moves. 

To win $x=k$ consecutive moves, Left's budget should be at least $(2^k-1)(q+1)$. To win $(k+1)^{st}$ move, Left should bid at least `1' more than that of Right budget after $k$ move which is $(2^k-1)(q+1) + q$, which is equal to $2^k(q+1) - 1$. Hence Left must bid   $2^k(q+1)$. Hence total budget which must be available with Left after $k+1$ move is $\{2^k(q+1)\} + \{(2^k-1)(q+1)\}$  = $(2^{k+1}-1)(q+1) $. Hence induction holds.
\end{proof}

%Although we proved that 0-bids are in equilibrium for all budget partitions, for almost all heap sizes, one can see by \thref{thm:budadv} that there is no constant $C$ such that, for all total budgets, the players should bid less than $C$ in equilibrium. Indeed, the final sequence of (winning) bids leads to arbitrary high

\section{Discussion}
Note that one can deduce neat formulas for the upper bounds for the convergence of the outcomes, by using the explicit bounds of outcome values as prescribed by the $\alpha$ and $\beta$ functions. The remaining question is if this worst possible convergence bound is tight. We believe so, because of the elegance of the 0-bidding automaton \a. But currently, we do not have an argument to show that earlier convergence could not happen, for some large total budget. There are other open questions: 1) Prove or disprove periodicity of outcomes for any \tb, but with an arbitrary finite subtraction set. Our methods indicate that if we restrict the allowed bids of the two players, then we still have convergence, but how do restricted bidding sets affect the strategies? In particular, what happens if 0-bids are not allowed? Classify asymmetric (partizan) bidding sets according to (asymptotic) player strength. 

On another note, an interesting paper on `general sum' Richman games has recently appeared \cite{MKT2018}. They show budget monotonicity for games on binary trees, but find a counterexample if a node may have three children (a threat provokes a situation where a player prefers a smaller budget). Our setting readily generalizes to general sum, or more specifically to so-called self interest games, by instead of the zero-sum definition, letting both players maximize their individual final scores. The discrete Richman bidding scheme would stay the same, and one would need to expand on various questions of monotonicity, and for example existence of a unique PSPE and Pareto efficiency.

%\newpage
\begin{acks}
This work began with informal discussions at the (Chocolate Caf\'e, where you can order hot chocolate with your chocolate bar and not coffee, nearby the) Combinatorial Game Theory Colloquium III, Lisbon, 22-24 January, 2019, organized and hosted by Carlos P. dos Santos, Lisbon, Portugal. The main part of this work was done while Ravi Kant Rai was visiting National University of Singapore in Summer 2019 and while Urban Larsson was visiting Indian Institute of Technology Bombay,  Fall 2019. We are grateful to our hosts Dr. Yair Zick and Prof. K.S. Mallikarjuna Rao for many valuable suggestions and discussions.
\end{acks}
%***************** Section-4**********************
\section{Illustration of feasible, dominated and relevant bids}\label{sec:feasdom}\label{sec:illustation}
Consider (symmetric) BCS. We illustrate the feasible bids for total budget 5, including a short discussion of dominated bids. Such bids are also feasible arrows for  automaton $\mathcal A$. In Figure~\ref{fig:TTB5}, we show the possible tie bids (and here domination is never an issue). In Figure~\ref{fig:LWTB5}, we show the possible Left winning bids, and in Figure~\ref{fig:LWDTB5}, we show the same bids, but without dominated bids. Note that whenever property \U\ holds, then Left winning bids may be ignored, in optimal play. In Figure~\ref{fig:RWDTB5}, we illustrate the corresponding situations for Right winning bids, and here the pictures (where dominated bids have been erased) are relevant. By {\em relevant}, we mean that, in general, one cannot exclude the possibility that a bid may be a unique equilibrium (in some specific setting).

%*******already commented***********
%Then, we give the outcome tables in equilibrium, for $\tb$ 8 and 9 respectively. %Interestingly enough, the proofs of periodicity of equilibrium outcomes in Section~\ref{sec:results} do not depend on the complexity of generalized diagrams as shown in this section, where an understanding of generic bidding automatons, which combines all types of (non-dominated) bids, for a given heap size could be a hard nut to crack.
%******************************************

\begin{figure}[ht!]
\begin{center}
\begin{tikzpicture}[scale = 0.8]
%\begin{tikzpicture}[remember picture,overlay,shift={(current page.center)}]
\begin{scope}[every node/.style={circle,thick,draw}]  
    \node (0) at (6,0) {0};
    \node (1) at (3,0) {1};
    \node (2) at (0,0) {2};
    \node (3) at (-3,0) {3};
    \node (4) at (-6,0) {4};
    \node (5) at (-9,0) {5};
    \draw (0, 3.5) {};%this trick controls the relative vertical distance of tikz pictures
\end{scope}

\begin{scope}[>={Stealth[black]},
              every node/.style={fill=white,circle},
              every edge/.style={draw=blue, thick}]
    \path [<->] (5) edge[bend left=30] node {0T} (0);
    \path [<->] (4) edge[bend left=30] node {0T} (1);
    \path [<->] (3) edge[bend left=30] node {0T} (2);
\end{scope}
\end{tikzpicture}
%\vspace{10mm}

%\begin{tikzpicture}[remember picture,overlay,shift={((0,2).center)}]
\begin{tikzpicture}[scale = 0.8]
\begin{scope}[every node/.style={circle,thick,draw}]  
    \node (0) at (6,0) {0};
    \node (1) at (3,0) {1};
    \node (2) at (0,0) {2};
    \node (3) at (-3,0) {3};
    \node (4) at (-6,0) {4};
    \node (5) at (-9,0) {5};
     \draw (0, 4) {};%this trick controls the relative vertical distance of tikz pictures
\end{scope}

\begin{scope}[>={Stealth[black]},
              every node/.style={fill=white,circle},
              every edge/.style={draw=blue, thick}]
    \path [<-] (5) edge[bend left=40] node {1T} (1);
    \path [<->] (4) edge[bend left=50] node {1T} (2);
    \path[<->] (3) edge [out=120,in=60,distance=9mm,swap]   (3);
\end{scope}
\end{tikzpicture}

\begin{tikzpicture}[scale = 0.8]
\begin{scope}[every node/.style={circle,thick,draw}]  
    \node (0) at (6,0) {0};
    \node (1) at (3,0) {1};
    \node (2) at (0,0) {2};
    \node (3) at (-3,0) {3};
    \node (4) at (-6,0) {4};
    \node (5) at (-9,0) {5};
    \draw (0, 3.5) {};%this trick controls the relative vertical distance of tikz pictures
\end{scope}

\begin{scope}[>={Stealth[black]},
              every node/.style={fill=white,circle},
              every edge/.style={draw=blue, thick}]
    \path [<-] (5) edge[bend left=40] node {2T} (2);
    \path [<-] (4) edge[bend left=50] node {2T} (3);
\end{scope}
\end{tikzpicture}

\caption{The pictures represent Left's wins via tie bids, for $\tb = 5$; Left has the marker together with the indicated number of dollars. Note that only in the case of both players bidding ``0'' all nodes have outgoing edges. In either case, they are all \emph{negative}, indicated with the color blue in the picture. For example, if Left has \$4, and wins by bidding 0, then the next state is that Right gets the marker and \$1. Of course none of the bids are dominated in the case of a win by using the marker. Dominated bids only appear because the other player does not have enough budget to motivate such a bid. Compare this situation with Figures~\ref{fig:LWTB5} and \ref{fig:LWDTB5}}\label{fig:TTB5}
\end{center}
\end{figure}
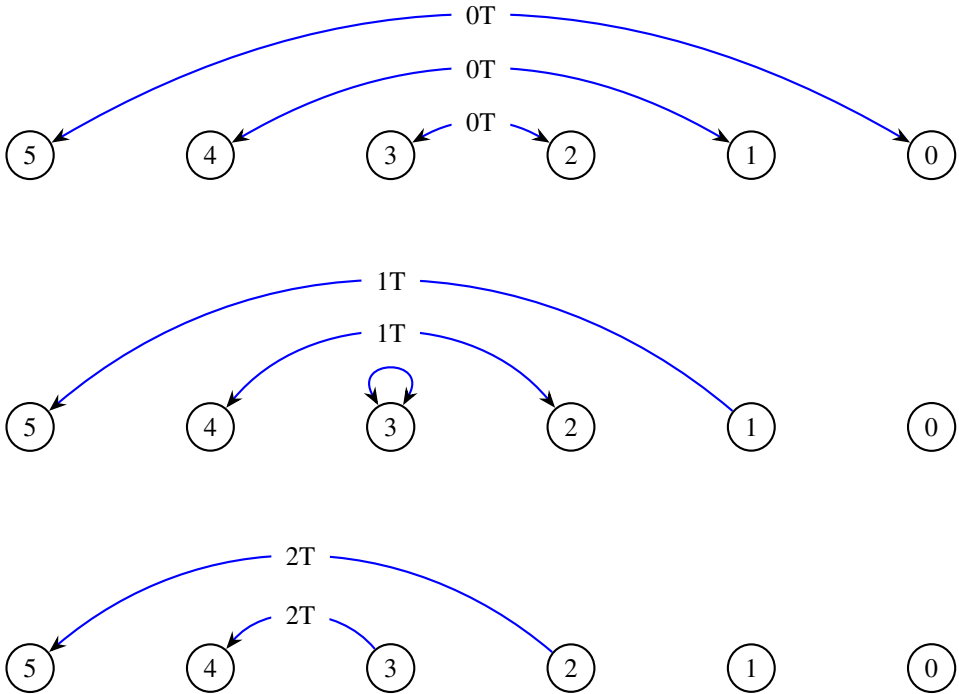

%\newpage
\begin{figure}
\begin{center}
\begin{tikzpicture}[scale = 0.8]

\begin{scope}[every node/.style={circle,thick,draw}]  
    \node (0) at (6,0) {0};
    \node (1) at (3,0) {1};
    \node (2) at (0,0) {2};
    \node (3) at (-3,0) {3};
    \node (4) at (-6,0) {4};
    \node (5) at (-9,0) {5};
    %\draw (0, 3) {};%this trick controls the relative vertical distance of tikz pictures
\end{scope}
\begin{scope}[>={Stealth[black]},
              every node/.style={fill=white,circle},
              every edge/.style={draw=red,thick}]
    \path [<->] (5) edge[bend left=60] node {1W} (4);
    \path [<->] (4) edge[bend left=60] node {1W} (3);
    \path [<->] (3) edge[bend left=60] node {1W} (2);
    \path [<->] (2) edge[bend left=60] node {1W} (1);
    \path [<->] (1) edge[bend left=60] node {1W} (0);
\end{scope}
\end{tikzpicture}

\begin{tikzpicture}[scale = 0.8]
\begin{scope}[every node/.style={circle,thick,draw}]  
    \node (0) at (6,0) {0};
    \node (1) at (3,0) {1};
    \node (2) at (0,0) {2};
    \node (3) at (-3,0) {3};
    \node (4) at (-6,0) {4};
    \node (5) at (-9,0) {5};
    \draw (0, 3) {};%this trick controls the relative vertical distance of tikz pictures
\end{scope}
\begin{scope}[>={Stealth[black]},
              every node/.style={fill=white,circle},
              every edge/.style={draw=red,thick}]
    \path [<->] (5) edge[bend left=45] node {2W} (3);
    \path [<->] (4) edge[bend left=45] node {2W} (2);
    \path [<->] (3) edge[bend left=45] node {2W} (1);
    \path [<->] (2) edge[bend left=45] node {2W} (0);
\end{scope}
\end{tikzpicture}

\begin{tikzpicture}[scale = 0.8]
\begin{scope}[every node/.style={circle,thick,draw}]  
    \node (0) at (6,0) {0};
    \node (1) at (3,0) {1};
    \node (2) at (0,0) {2};
    \node (3) at (-3,0) {3};
    \node (4) at (-6,0) {4};
    \node (5) at (-9,0) {5};
    \draw (0, 3) {};%this trick controls the relative vertical distance of tikz pictures
\end{scope}
\begin{scope}[>={Stealth[black]},
              every node/.style={fill=white,circle},
              every edge/.style={draw=red,thick}]
    \path [<->] (5) edge[bend left=30] node {3W} (2);
    \path [<->] (4) edge[bend left=30] node {3W} (1);
    \path [<->] (3) edge[bend left=30] node {3W} (0);
\end{scope}
\end{tikzpicture}

\begin{tikzpicture}[scale = 0.8]
\begin{scope}[every node/.style={circle,thick,draw}]  
    \node (0) at (6,0) {0};
    \node (1) at (3,0) {1};
    \node (2) at (0,0) {2};
    \node (3) at (-3,0) {3};
    \node (4) at (-6,0) {4};
    \node (5) at (-9,0) {5};
     \draw (0, 3) {};%this trick controls the relative vertical distance of tikz pictures
\end{scope}
\begin{scope}[>={Stealth[black]},
              every node/.style={fill=white,circle},
              every edge/.style={draw=red,thick}]
    \path [<->] (5) edge[bend left=30] node {4W} (1);
    \path [<->] (4) edge[bend left=30] node {4W} (0);
\end{scope}
\end{tikzpicture}

\begin{tikzpicture}[scale = 0.8]
\begin{scope}[every node/.style={circle,thick,draw}]  
    \node (0) at (6,0) {0};
    \node (1) at (3,0) {1};
    \node (2) at (0,0) {2};
    \node (3) at (-3,0) {3};
    \node (4) at (-6,0) {4};
    \node (5) at (-9,0) {5};
     \draw (0, 3) {};%this trick controls the relative vertical distance of tikz pictures
\end{scope}
\begin{scope}[>={Stealth[black]},
              every node/.style={fill=white,circle},
              every edge/.style={draw=red,thick}]
    \path [<->] (5) edge[bend left=20] node {5W} (0);
%    \path [->] (4) edge[bend left=30] node {3W} (0);
\end{scope}
\end{tikzpicture}
\caption{Non-reduced bidding, for $\tb = 5$, where a right (left) pointing edge indicates that Left (Right) wins the bidding. As before the default is that Left has the marker together with the indicated number of dollars. Note the increasing number of nodes without outgoing edges.}\label{fig:LWTB5}
\end{center}
\end{figure}
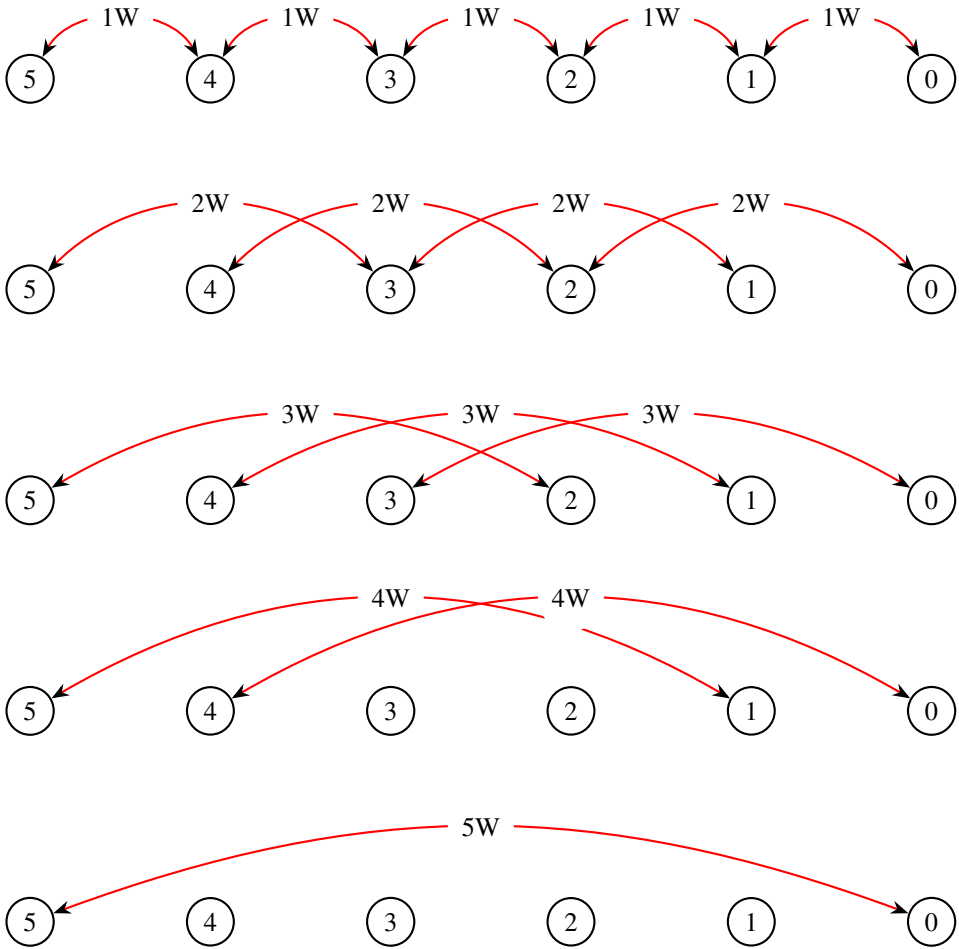

%\newpage

\begin{figure}
\begin{center}
\begin{tikzpicture}[scale = 0.8]

\begin{scope}[every node/.style={circle,thick,draw}]  
    \node (0) at (6,0) {0};
    \node (1) at (3,0) {1};
    \node (2) at (0,0) {2};
    \node (3) at (-3,0) {3};
    \node (4) at (-6,0) {4};
    \node (5) at (-9,0) {5};
    %\draw (0, 3) {};%this trick controls the relative vertical distance of tikz pictures
\end{scope}
\begin{scope}[>={Stealth[black]},
              every node/.style={fill=white,circle},
              every edge/.style={draw=red,thick}]
    \path [->] (5) edge[bend left=60] node {1W} (4);
    \path [->] (4) edge[bend left=60] node {1W} (3);
    \path [->] (3) edge[bend left=60] node {1W} (2);
    \path [->] (2) edge[bend left=60] node {1W} (1);
    \path [->] (1) edge[bend left=60] node {1W} (0);
\end{scope}
\end{tikzpicture}

\begin{tikzpicture}[scale = 0.8]
\begin{scope}[every node/.style={circle,thick,draw}]  
    \node (0) at (6,0) {0};
    \node (1) at (3,0) {1};
    \node (2) at (0,0) {2};
    \node (3) at (-3,0) {3};
    \node (4) at (-6,0) {4};
    \node (5) at (-9,0) {5};
    \draw (0, 3) {};%this trick controls the relative vertical distance of tikz pictures
\end{scope}
\begin{scope}[>={Stealth[black]},
              every node/.style={fill=white,circle},
              every edge/.style={draw=red,thick}]
    %\path [->] (5) edge[bend left=45] node {2W} (3);
    \path [->] (4) edge[bend left=45] node {2W} (2);
    \path [->] (3) edge[bend left=45] node {2W} (1);
    \path [->] (2) edge[bend left=45] node {2W} (0);
\end{scope}
\end{tikzpicture}

\begin{tikzpicture}[scale = 0.8]
\begin{scope}[every node/.style={circle,thick,draw}]  
    \node (0) at (6,0) {0};
    \node (1) at (3,0) {1};
    \node (2) at (0,0) {2};
    \node (3) at (-3,0) {3};
    \node (4) at (-6,0) {4};
    \node (5) at (-9,0) {5};
    \draw (0, 3) {};%this trick controls the relative vertical distance of tikz pictures
\end{scope}
\begin{scope}[>={Stealth[black]},
              every node/.style={fill=white,circle},
              every edge/.style={draw=red,thick}]
    %\path [->] (5) edge[bend left=30] node {3W} (2);
    %\path [->] (4) edge[bend left=30] node {3W} (1);
    \path [->] (3) edge[bend left=30] node {3W} (0);
\end{scope}
\end{tikzpicture}

\caption{The remaining Left winning bids from Figure~\ref{fig:LWTB5} when dominated bids have been erased, for $\tb = 5$. These bids are not relevant whenever property $\U$ is satisfied.}\label{fig:LWDTB5}
\end{center}
\end{figure}
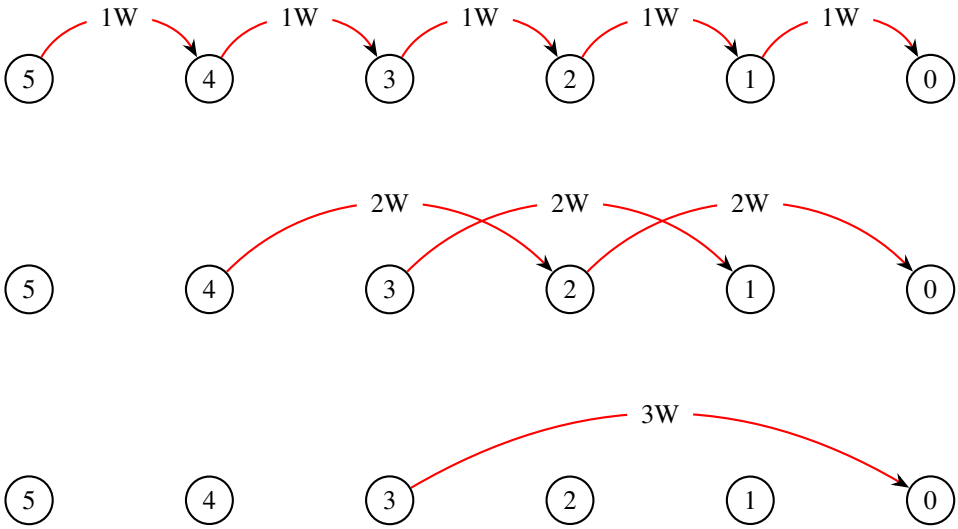

%\newpage
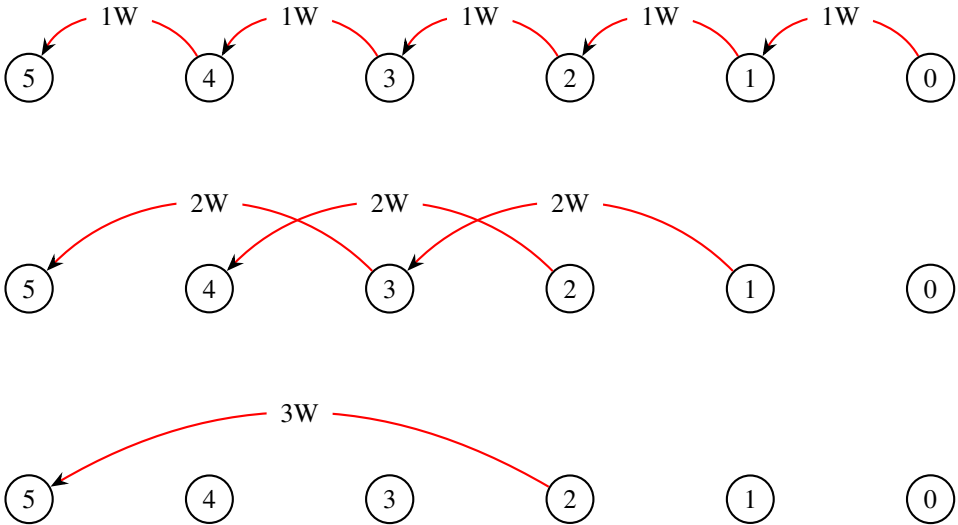
\begin{figure}
\vspace{1 cm}
\begin{center}
\begin{tikzpicture}[scale = 0.8]

\begin{scope}[every node/.style={circle,thick,draw}]  
    \node (0) at (6,0) {0};
    \node (1) at (3,0) {1};
    \node (2) at (0,0) {2};
    \node (3) at (-3,0) {3};
    \node (4) at (-6,0) {4};
    \node (5) at (-9,0) {5};
    %\draw (0, 3) {};%this trick controls the relative vertical distance of tikz pictures
\end{scope}
\begin{scope}[>={Stealth[black]},
              every node/.style={fill=white,circle},
              every edge/.style={draw=red,thick}]
    \path [<-] (5) edge[bend left=60] node {1W} (4);
    \path [<-] (4) edge[bend left=60] node {1W} (3);
    \path [<-] (3) edge[bend left=60] node {1W} (2);
    \path [<-] (2) edge[bend left=60] node {1W} (1);
    \path [<-] (1) edge[bend left=60] node {1W} (0);
\end{scope}
\end{tikzpicture}

\begin{tikzpicture}[scale = 0.8]
\begin{scope}[every node/.style={circle,thick,draw}]  
    \node (0) at (6,0) {0};
    \node (1) at (3,0) {1};
    \node (2) at (0,0) {2};
    \node (3) at (-3,0) {3};
    \node (4) at (-6,0) {4};
    \node (5) at (-9,0) {5};
    \draw (0, 3) {};%this trick controls the relative vertical distance of tikz pictures
\end{scope}
\begin{scope}[>={Stealth[black]},
              every node/.style={fill=white,circle},
              every edge/.style={draw=red,thick}]
    \path [<-] (5) edge[bend left=45] node {2W} (3);
    \path [<-] (4) edge[bend left=45] node {2W} (2);
    \path [<-] (3) edge[bend left=45] node {2W} (1);
  %  \path [<->] (2) edge[bend left=45] node {2W} (0);
\end{scope}
\end{tikzpicture}

\begin{tikzpicture}[scale = 0.8]
\begin{scope}[every node/.style={circle,thick,draw}]  
    \node (0) at (6,0) {0};
    \node (1) at (3,0) {1};
    \node (2) at (0,0) {2};
    \node (3) at (-3,0) {3};
    \node (4) at (-6,0) {4};
    \node (5) at (-9,0) {5};
    \draw (0, 3) {};%this trick controls the relative vertical distance of tikz pictures
\end{scope}
\begin{scope}[>={Stealth[black]},
              every node/.style={fill=white,circle},
              every edge/.style={draw=red,thick}]
    \path [<-] (5) edge[bend left=30] node {3W} (2);
    %\path [<->] (4) edge[bend left=30] node {3W} (1);
    %\path [<->] (3) edge[bend left=30] node {3W} (0);
\end{scope}
\end{tikzpicture}

% \begin{tikzpicture}[scale = 0.8]
% \begin{scope}[every node/.style={circle,thick,draw}]  
%     \node (0) at (6,0) {0};
%     \node (1) at (3,0) {1};
%     \node (2) at (0,0) {2};
%     \node (3) at (-3,0) {3};
%     \node (4) at (-6,0) {4};
%     \node (5) at (-9,0) {5};
%      \draw (0, 3) {};%this trick controls the relative vertical distance of tikz pictures
% \end{scope}
% \begin{scope}[>={Stealth[black]},
%               every node/.style={fill=white,circle},
%               every edge/.style={draw=red,thick}]
%     \path [<->] (5) edge[bend left=30] node {4W} (1);
%     \path [<->] (4) edge[bend left=30] node {4W} (0);
% \end{scope}
% \end{tikzpicture}

% \begin{tikzpicture}[scale = 0.8]
% \begin{scope}[every node/.style={circle,thick,draw}]  
%     \node (0) at (6,0) {0};
%     \node (1) at (3,0) {1};
%     \node (2) at (0,0) {2};
%     \node (3) at (-3,0) {3};
%     \node (4) at (-6,0) {4};
%     \node (5) at (-9,0) {5};
%      \draw (0, 3) {};%this trick controls the relative vertical distance of tikz pictures
% \end{scope}
% \begin{scope}[>={Stealth[black]},
%               every node/.style={fill=white,circle},
%               every edge/.style={draw=red,thick}]
%     \path [<->] (5) edge[bend left=20] node {5W} (0);
% %    \path [->] (4) edge[bend left=30] node {3W} (0);
% \end{scope}
% \end{tikzpicture}
\caption{The remaining Right winning bids from Figure~\ref{fig:LWTB5} when dominated bids have been erased, for $\tb = 5$. These bids are relevant, even when property $\U$ holds.}\label{fig:RWDTB5}
\end{center}
\end{figure}

\clearpage
\bibliographystyle{ACM-Reference-Format}
\bibliography{EC-bibliography}

\end{document}